\newif\ifabstract
\newif\iffull
\newcommand\keywords[1]{\paragraph{Keywords:} #1\par}
\newcommand\numberofauthors[1]{}
\newcommand\affaddr[1]{#1}
\newcommand\alignauthor{}
\let\email=\url
\newenvironment{proof}{\noindent\textbf{Proof: }\ignorespaces}
  {\hspace*{\fill}$\Box$\medskip}
\newtheorem{theorem}{Theorem}[section]
\newtheorem{lemma}[theorem]{Lemma}
\newtheorem{conjecture}[theorem]{Conjecture}
\newtheorem{corollary}[theorem]{Corollary}
\begin{document}

\title{One Tile to Rule Them All: \\
       Simulating Any Turing Machine, Tile Assembly System, \\ or Tiling System with a Single Puzzle Piece}

\author{
\numberofauthors{7}
\alignauthor
  Erik D. Demaine\thanks{Research supported in part by NSF grant CDI-0941538.}\\
    \affaddr{CSAIL}\\
    \affaddr{MIT}\\
    \affaddr{Cambridge, MA 02139, USA}\\
    \email{edemaine@mit.edu}
\and
\alignauthor
  Martin L. Demaine\footnotemark[1]\\
    \affaddr{CSAIL}\\
    \affaddr{MIT}\\
    \affaddr{Cambridge, MA 02139, USA}\\
    \email{mdemaine@mit.edu}
\and
\alignauthor
  S\'andor P. Fekete\\
    \affaddr{Computer Science}\\
    \affaddr{TU Braunschweig}\\ %
    \affaddr{Germany}\\
    \email{s.fekete@tu-bs.de}
\and
\alignauthor
  Matthew J. Patitz\thanks{Research supported in part by NSF grant CCF-1117672.}\\
    \affaddr{Dept. of CSCE}\\
    \affaddr{University of Arkansas}\\
    \affaddr{Fayetteville, AR 72701, USA}\\
    \email{patitz@uark.edu}
\and
\alignauthor
  Robert T. Schweller\footnotemark[2]\\
    \affaddr{Computer Science}\\
    \affaddr{University of Texas--Pan American}\\
    \affaddr{Edinburg, TX 78539, USA}\\
    \email{rtschweller@utpa.edu}
\and
\alignauthor
  Andrew Winslow\thanks{Research supported in part by NSF grant CDI-0941538.}\\
    \affaddr{Computer Science}\\
    \affaddr{Tufts University}\\
    \affaddr{Medford, MA 02155, USA.}\\
    \email{awinslow@cs.tufts.edu}
\and
\alignauthor
  Damien Woods\thanks{Supported by NSF grants 0832824 (the Molecular Programming Project), CCF-1219274, and CCF-1162589.}\\
    \affaddr{Computer Science}\\  %
    \affaddr{California Institute of Technology}\\
    \affaddr{Pasadena, CA 91125, USA}\\
    \email{woods@caltech.edu}
}
\date{}
\maketitle

\ifabstract
\category{F.1.1}{Computation by abstract devices}{Models of Computation---}{Computability theory}

\terms{Theory}

\fi

\keywords{Self-assembly, tiling, aperiodic tile set, universality, one}

\date{}

\begin{abstract}
In this paper we explore the power of tile self-assembly models that extend the
well-studied abstract Tile Assembly Model (aTAM) by permitting tiles of shapes
beyond unit squares.  Our main result shows the surprising fact that any aTAM
system, consisting of many different tile types, can be simulated by a
\emph{single} tile type of a general shape.
As a consequence, we obtain a single \emph{universal} tile type of a single
(constant-size) shape that serves as a ``universal tile machine'': the single
universal tile type can simulate any desired aTAM system when given a single
seed assembly that encodes the desired aTAM system.
We also show how to adapt this result to convert any of a variety of plane
tiling systems (such as Wang tiles) into a ``nearly'' plane tiling system
with a single tile (but with small gaps between the tiles).
All of these results rely on the ability to both rotate and translate tiles;
by contrast, we show that a single nonrotatable tile, of arbitrary shape,
can produce assemblies which either grow infinitely or cannot grow at all, implying drastically
limited computational power.
On the positive side, we show how to simulate arbitrary cellular automata
for a limited number of steps using a single nonrotatable tile
and a linear-size seed assembly.
\end{abstract}

%
%
%

\pagenumbering{arabic}
\pagestyle{plain}

\section{Introduction}

This paper shows that %
a single rotatable tile type suffices
to simulate any desired tile system (consisting of many different tile types)
in two different worlds:
classic geometric tiling and self-assembling tile systems.

\paragraph{Classic geometric tiling}
Tiling the plane with geometric shapes goes back to Kepler in 1619;
refer to the classic survey book \cite{Gruenbaum-Shephard-1987}.
Wang \cite{Wang-1961} introduced the geometrically simpler model of tiling the
plane with (fixed-orientation) unit-square tiles with colored edges
such that abutting edges match in color.  Such edge-matching tiles can
be converted to purely geometric tiles,
using jigsaw-like tabs and pockets to simulate colors;
a reverse translation is possible by losing some factors in scale
and number of tiles~\cite{Demaine-Demaine-2007-jigsaw}.

The tiling literature often aims for minimal tile sets to achieve certain
properties.  For example, Wang's student Berger \cite{Berger-1964} first found an
aperiodic set of 20,426 Wang tiles, and a sequence of improvements by
Berger, Knuth, L\"auchli, Robinson, Penrose, Ammann, and Culik finally
resulted in an aperiodic set of just 13 Wang tiles \cite{Culik-1996}.
Using geometric tiles (or Wang tiles that can rotate and reflect, and a complementary matching condition), Robinson~\cite{Robinson-1971} obtained just 6 tiles,
while the famous Penrose tilings \cite{Penrose-1979} achieve aperiodicity
with just two tiles.  It remains open whether one tile suffices
to obtain aperiodicity; such a result has been obtained given
nonlocal matching rules among tiles \cite{Socolar-Taylor-2012}, or with overlaps \cite{gummelt1996penrose}.

Tilings are closely related to computation.
Berger \cite{Berger-1966} proved that it is undecidable to determine whether a
set of Wang tiles (with infinitely many copies of each) tile the plane, by
simulating a Turing machine; Robinson \cite{Robinson-1971} gave a simpler proof.
Both proofs use an unbounded number of Wang tiles, growing with
the size of the Turing machine.

\paragraph{Our results I: Universal tiling with one tile}
We prove in Section \ref{Plane Tilings}
that these and many other plane tiling systems, consisting of any
(finite) number of tiles, can be simulated by a ``nearly plane'' tiling that uses only
a single tile.  The input tiling system can live on either a square or
hexagonal grid, can allow tiles to rotate or not, can allow tiles to reflect
or not, and can define compatible tile adjacencies according to matching
color (as in Wang tiling) or complementary color pairs (as in actual DNA).
The output one-tile system requires tiles to live on the same square or
hexagonal lattice, allows tiles to rotate, and is \emph{nearly plane} tiling
in the sense that it leaves tiny gaps between the tiles.

Our simulation result applies in particular to all Wang tilings, even allowing
rotation as in Robinson's tilings, which has several important implications:
\begin{enumerate}
\item We obtain for the first time (in any model of tiling) that a single tile
can simulate any desired Turing computation.
(Here the combinatorial complexity of the tile, instead of the number of tiles,
depends on the size of the Turing machine.)
\item We obtain for the first time (in any model of tiling with a local
matching rule and no overlaps) that a single tile suffices to produce aperiodic
(nearly plane) tilings.
\item We obtain for the first time (in any model of tiling) that a single tile
can simulate a countably infinite number of tiling systems (a form of
intrinsic universality).
\end{enumerate}

Before proving these results, we develop a number of techniques for another model:
tile self-assembly. In fact, the above results are essentially achieved as
corollaries, with some additional technical detail, of our tile self-assembly
results, the proofs of the latter being somewhat more involved and constitute the main portion of the paper.

\paragraph{Tile self-assembly}
Winfree \cite{Winf98} introduced the \emph{abstract Tile Assembly Model}
(aTAM) as a clean theoretical model
\ifabstract
for
\else
that %
approximates the reality of
\fi
nanoscale self-assem\-bling systems.  In several experiments of
increasing complexity and reliability (e.g., \cite{SchWin04,SchWin07}),
this model has been shown to be physically practical,
with tiles made up from DNA strands.
As a result, the model has become the standard
in theoretical work on self-assembly. %

The aTAM is essentially a more local (and thus more realistic)
version of Wang tiling:
we start with a specific tile (called the \emph{seed}), and repeatedly add
any tile to the assembly that has enough matching glues (colored edges) to
``stick'' to the rest of the assembly.  Here we assign an integer
\emph{strength} to each glue type (color class), which represents the
affinity/attraction for matching glues of that type, and specify a global
\emph{temperature} (typically~$2$) specifying the total required strength
for a tile to attach to the assembly. Unlike Wang tiling, in the aTAM we can never throw away partially formed assemblies and, in fact, the aTAM can be thought of as a special kind of asynchronous, and nondeterministic, cellular automaton. See Section~\ref{Models} for details.

\paragraph{Our results II: Universal self-assembly with one tile}
We prove in Section~\ref{sec:manygons} that any aTAM system---consist\-ing of
any (finite) number of tiles---can be simulated by just a single tile,
in a generalization of the aTAM model called pfbTAM
(polygonal free-body Tile Assembly Model).
Precisely, we show that any temperature-$\tau$ aTAM system can be converted
into a temperature-$\tau$ pfbTAM system of a single tile type such that the two systems have exactly
the same producible assemblies (modulo isometry).
This construction is \emph{self-seeding} in the sense that it starts from a
single copy of the very same tile; it is even a challenge to get the next copy
of the tile to attach without allowing the construction to grow indefinitely.

Combining this result with intrinsic universality of aTAM \cite{IUSA},
we construct
a single constant-size tile $t_U$ such that
any temperature-$\tau$ aTAM system $\Gamma$ can be converted into a
seed assembly such that adding (infinitely many copies of) tile $t_U$
to the seed in the temperature-2 pfbTAM simulates~$\Gamma$.
This tile $t_U$ is a kind of geometric analog to a universal Turing machine,
simultaneously simulating the shape construction and computational ability
of an arbitrary tile assembly system.
(On the other hand, no standard universal Turing machine has only one state, or only one symbol~\cite{neary2012complexity}.)

The pfbTAM model differs from aTAM in two ways---allowing polygonal tiles
instead of just squares and allowing tiles to rotate---both of which
we claim are physically realistic.
For example, DNA origami \cite{RotOrigami05} is
a rapidly evolving technology that has been used to  successfully build numerous
complex shapes using strands of DNA.  The technology has evolved to the point
where free software automatically designs DNA to fold into essentially
arbitrary desired shapes.  aTAM-like tiles with polyomino
(instead of square) shapes have already been developed in practice
\cite{WooRoth11} and studied in theory \cite{fu2012SAGT}.
Second, rotation is a natural addition, as nanoscale objects do rotate in
reality; the aTAM omitted rotation, as it was easier to work with translation
only, and with only square tiles, it turns out to be equivalent.

Given that our tile rotates and translates, a natural question is what can be achieved with {\em translation only}?  We show that allowing rotation is essential:
the same type of results are impossible with translation-only polygonal tiles.
Specifically, we prove in Section~\ref{sec:inmemoryofporky} that any
single-tile translation-only pfbTAM self-assembly system either can grow
infinitely (producing assemblies of unbounded size) or cannot grow at all
(producing just the seed tile).  This negative result holds even if the system
starts from a more complex seed assembly consisting of up to three copies of
the tile, even though such a seed avoids the difficult issue of getting the
first bond between two tiles.
On the positive side, though, in Section~\ref{sec:translation-positive} we show that a seed assembly consisting of
$O(k)$ copies of the single tile suffices to simulate an arbitrary 1D cellular
automaton for $k$ steps.  In general, we conjecture that no finite seed
suffices for unbounded computational power with translation only,
in stark contrast to allowing rotation.

\paragraph{\hspace{-0.28ex}Our results III:~Hexagonal tile assembly systems}

Along the way, we consider in Section~\ref{sec:hex}
aTAM systems with unit-sized hexagonal tiles on a hexagonal grid.
The only previous paper considering this model
\cite{KSX12} simply showed differences between squares and hexagons
with respect to infinite constructions.
Here we show that any temperature-$2$ square aTAM system can be simulated by a
temperature-$2$ hexagonal aTAM system in which all glues have strength
at most~$1$.
The construction works at a scale factor of only~3: each square tile is simulated by a $3 \times 3$ block of hexagonal tiles.
By contrast, any temperature-$2$ square aTAM system in which all glues have
strength at most $1$ can not grow outside its bounding box;
we use the fact that hexagons have the ability to continue to grow.

This result is a key step to proving our main positive result
(aTAM simulation allowing translation and rotation).
Specifically, we show in Section~\ref{sec:manygons}
how to simulate any tem\-pera\-ture-$2$ hexagonal aTAM system, that uses
strength-$1$ glues, by a rotatable polygon that mimics different tile types by
attaching at different rotations.
More generally, we believe that our result on hexagonal aTAM
may have independent importance in the study of self-assembly systems.

\section{Models}
\label{Models}
\label{sec:models}

\subsection{The Polygonal Free-Body Tile Assembly Model}

The \emph{polygonal free-body Tile Assembly Model} (or \emph{pfbTAM}) generalizes self-assembly models
such as the aTAM by using
arbitrary geometric shapes for the constituent parts that can be translated and rotated by any amount.
In our positive results, we focus on convex regular $n$-gons with small surface geometries; our negative results are valid for arbitrary shapes, as discussed in Section~\ref{sec:inmemoryofporky}.

The basic units of the pfbTAM are  polygonal tiles, and a pfbTAM system $\Gamma$ is defined as $\Gamma = (F, \tau, \sigma)$, where $F$ is a finite set of polygonal tile types, $\tau \in \mathbb{N}$ is the {\em temperature} parameter for the system, and $\sigma$ is a {\em seed assembly} consisting of a set of polygonal tiles from $F$ and their locations.

\paragraph{Tiles}

A {\em polygonal tile} in the pfbTAM model
is bounded by a simple closed polygon enclosing the tile's interior.
Boundaries of two tiles may intersect, but not their interiors.

\paragraph{Glues}

Let $\Sigma$ be an alphabet of \emph{glue types}.
Each glue type $g \in \Sigma$ is assigned a value \emph{strength} ($g\in \mathbb{N}$).
The boundary of the polygon tile is divided into intervals called \emph{sides} (which may be more complex than single line segments), and each interval is assigned a glue type from $\Sigma$.
A pair of boundary intervals are \emph{coincident} if there is a bijection between pairs of points on the two intervals such that each pair is coincident.
For a given pair of sides on distinct polygon tiles, we define three types of geometric compatibility, based upon whether the tiles can be oriented such that portions of their boundaries are coincident.
See Figure~\ref{fig:manygon-geometry-pair-defns} for a description of geometric compatibility.

\begin{figure}[htp]
    \begin{center}
    \includegraphics[width=.9\columnwidth]{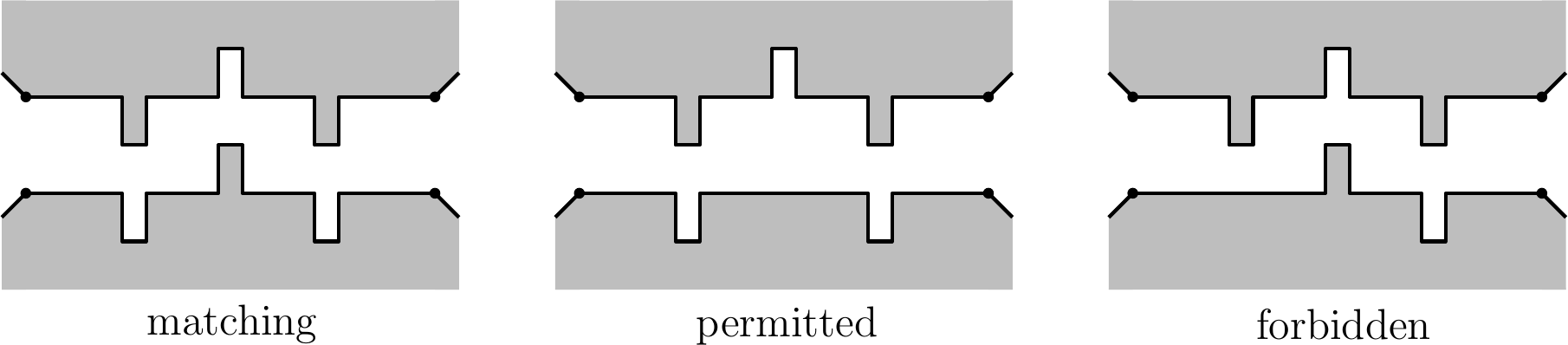}
    \caption{
    \label{fig:manygon-geometry-pair-defns}
    The three possible types of side-side interactions between a pair of polygon free-body tiles.
    Matching sides form a bond with strength according to their glues, while permitted and forbidden sides do not form a bond.}
    \end{center}
\ifabstract
 \vspace*{-4mm}
\else
 \vspace*{-2mm}
\fi
\end{figure}

If there exists a pair of tile orientations so that the pair of sides become fully coincident, then these sides have \emph{matching geometry}.
If a pair of sides do not have matching geometry, but can be oriented such that the endpoints (and possibly more) of the sides are coincident, then the sides have \emph{permitted geometry}.
Finally, if a pair of sides do not have permitted geometry, then they have \emph{forbidden geometry}.

Bonding between tiles occurs via the interaction of side pairs.
If a pair of sides have permitted or forbidden geometry, then they cannot form a bond, regardless of glues.
If the sides instead have matching geometry, then they can bond with strength according to their glues.
Matching glues form a bond with strength determined by the glue type.
Non-matching glues form zero-strength bonds.

\paragraph{Systems and assembly}

An {\em assembly} is a collection of tiles that are bound together by their
adjacent glues; an assembly is {\em $\tau$-stable} if any possible way of
separating it into two disconnected assemblies must break glue bonds
that sum to at least the value $\tau$.  Formally, we define the \emph{bond
graph} of an assembly to be the planar graph consisting of a labeled node, and
an edge between two tiles iff they have edges that form a bond.
A tile can attach to an assembly if it is able to bind with strength at least $\tau$.

Assembly of a pfbTAM system $\Gamma = (F, \tau, \sigma)$ starts with the seed
assembly $\sigma$ and proceeds as individual tiles from $F$
nondeterministically bind $\tau$-stably and one at a time to the growing
assembly. An assembly is \emph{producible} if there is a sequence of
$\tau$-stable assemblies that produces it; a producible assembly is {\em
terminal} if there are no tiles that can $\tau$-stably attach to it.
The set of producible assemblies of $\Gamma$ is $PROD_\Gamma$, and the set of
terminal assemblies is $TERM_\Gamma$.  A $\Gamma$ is {\em directed}
(a.k.a. \emph{deterministic}, \emph{confluent}) if there is a single
producible, terminal assembly (i.e. $|TERM_\Gamma| = 1)$.

\paragraph{Rotationally restricted systems}  In some cases we restrict the granularity in which polygonal tiles may rotate, and whether or not they can flip.  A $\phi$-restricted system limits $PROD_\Gamma$ to only contains assemblies that are obtained through attachments of tiles in $F$ that are translations and flips and rotations of degrees $c \cdot \phi$ for integers $c$.  Rotationally restricted systems with $\phi = 0$ that do not allow flipping are referred to as \emph{translation-only} systems.

\subsection{aTAM and hTAM}

\paragraph{aTAM} The abstract Tile Assembly Model can be formulated as a restricted version of the pfbTAM.  An aTAM system is a translation-only pfbTAM system $\Gamma = (T,\tau,s)$ such that $T$ consists of equal size squares in the same rotational orientation with one glue type assigned to each tile side.

\paragraph{hTAM} The hexagonal Tile Assembly Model is a restricted version of the pfbTAM.  An hTAM system is a trans\-lation-only pfbTAM system $\Gamma = (T,\tau,s)$ such that $T$ consists of equal size regular hexagons in the same rotational orientation with one glue type assigned to each tile side.

\subsection{ Simulation }
In this section we define what it means for the pfbTAM to simulate the hTAM and aTAM and for the hTAM to simulate the aTAM.

\paragraph{Simulating hTAM systems with pfbTAM systems}

At a high level, we say a pfbTAM system $\Gamma_f = (T', \tau', \sigma)$ simulates an hTAM system $\Gamma_h = (T, \tau, \sigma)$ if there is a bijection between orientations (composed of rotations and flips) of tiles in $T'$ and tiles in $T$ such that this bijection yields a second bijection between producible assemblies of $\Gamma_f$ and $\Gamma_h$, where both tile location and bond structure is preserved.

A pfbTAM system $\Gamma_f = (T', \tau', \sigma')$ \emph{simulates} hTAM
system $\Gamma_h = (T, \tau, \sigma)$ if there's a mapping $\phi: T' \times
[0, 2\pi) \times \{R, R'\} \rightarrow T$ of orientations of tiles (specified
by an angle in $[0, 2\pi)$ and one of two reflections $R$ or $R'$) in
$T'$ to tiles in $T$ such that for every bond graph $G'$ generated by an
assembly produced by $\Gamma_f$, there's a bond graph $G$ produced by an
assembly of $\Gamma_h$, iff $G'$ yields $G$ when each node is mapped
via $\phi$. Also, for each assembly $A_f'$ produced by $\Gamma_f$ as a single
tile addition from assembly $A_f$, an assembly $A_h'$ in $\Gamma_h$ equivalent to
$A_f'$ via $\phi$ can be produced from $A_h$ (equivalent to $A_f$ via $\phi$) by a single tile addition, and vice versa.
\ifabstract
\else
(Namely, the exact same assemblies are producible in both systems, in exactly the same orderings
of tile additions.)
\fi

\paragraph{Simulating aTAM systems with hTAM systems}

An assembly $A_h$ over hexagon tiles $T_h$ is a {\em valid $c$-block representation} for odd, positive integer $c$ and partial function $\phi: T_h \rightarrow T_a$  if (1) $A_h$ is evenly divisible into $c \times c$ blocks of tiles as shown in Figure~\ref{fig:hex1a-layout}, and (2) $\phi(x)$ is defined for hex tile $x$ in $A_h$ iff $x$ is at the center of a $c\times c$ block.

For a $c$-block representation $A$, define the $c$-bond graph of $A$ to consist of a vertex for each center hex tile $x$ of each $c\times c$ block with node label $\phi(x)$.  Two hex tile vertices are connected by an edge if there exists a connected path of bonded hex tiles connecting the two vertices of length exactly $c$ that consists of purely north jumps, or south jumps, or southeast jumps, or northwest jumps.

An hTAM system $\Gamma_h = (T_h, \tau_h , \sigma_h)$ simulates aTAM system
$\Gamma_a = (T_a , \tau_a , \sigma_a)$ at scale $c$ for positive, odd integer
$c$, if there's a partial function $\phi: T_h \rightarrow T_a$, such that (1)
every tile in any producible assembly of $\Gamma_h$ which is of size greater than $c^2 - 1$ is
within distance $d \leq c$ from a tile $x$ for which
$\phi(x)$ is defined, and (2)  there's a producible assembly $A_h$ of
$\Gamma_h$ that's a valid $c$-block representation for function $\phi(x)$ with
$c$-block bond graph $G_h$ iff there's a producible assembly $A_a$ of
$\Gamma_a$ with the same bond graph $G_h$.  Further, for each producible assembly $A_a$  of $\Gamma_a$ which
grows into some $A_a'$ via a single tile addition, there are equivalent $c$-block representation assemblies $A_h$ and $A_h'$ of $\Gamma_h$
such that $A_h$ grows directly into $A_h'$ via some number of tile additions.  Note: during several tile additions, namely
those which don't fill positions where $\phi(x)$ is defined, the assembly of $\Gamma_h$ will still
map to $A_a$.  Vice versa, for any pair of assemblies $A_h$ and $A_h'$ of $\Gamma_h$ such that $A_h$ grows
into $A_h'$ via a single tile addition and the $c$-block representations of $A_h$ and $A_h'$ map to different assemblies in $\Gamma_a$, there exist assemblies $A_a$ and $A_a'$ of $\Gamma_a$ such that $A_a$ grows into $A_a'$ via a single tile addition.

\paragraph{Simulating aTAM systems with pfbTAM systems}
We define a $c$-scaled simulation of an aTAM system by a pfbTAM system by mapping $c\times c$ blocks within pfbTAM assemblies to aTAM tiles, where this mapping reads rotations and reflections of pfbTAM tiles in the blocks. The following definition is based on the more formal definition of~\cite{IUSA}.

A pfbTAM system $\Gamma_f = (T', \tau', \sigma')$ \emph{simulates} an  aTAM
system $\Gamma_a = (T, \tau, \sigma)$ at scale $c\in \mathbb{N}$, if both systems have equivalent {\em production} and {\em dynamics} under a representation function $\phi$ defined as follows.

(1) Production:  there is a mapping $\phi: {({T'\cup \{\mathrm{empty}\}} \times [0, 2\pi) \times \{R, R'\}})^{c^{2}} \rightarrow T \cup \{\mathrm{empty}\}$  of $c \times c$ blocks of tiles from~$T'$ and possibly empty locations (where $\phi$ is defined on the orientations of those tiles, specified by a rotation angle in $[0, 2\pi)$ and one of two reflections~$R$ or~$R'$) to tiles in $T$ (or empty locations) such that for every producible assembly~$\pi$ in~$\Gamma_f$ there is a producible assembly $\alpha$ in $\Gamma_a$ where $\alpha =  \phi^{\ast}(\pi )$, and for every  producible assembly $\alpha$ in $\Gamma_a$ there exists a producible assembly $\pi$ in $\Gamma_f$ where  $\alpha = \phi^{\ast}(\pi )$ (here~$\phi^{\ast}$ denotes the function~$\phi$ applied to an entire assembly, in the most obvious block-wise way).  We also require that~$\pi$ maps {\em cleanly} to $\alpha$ under $\phi^{\ast}$, that is, for all non-empty $c \times c$ blocks $b$ in $\pi$ it is the case that at least one neighbor of $\phi(b)$ in $\phi^{\ast}(\pi)$ is non-empty, or else~$\pi$ has at most one non-empty $c \times c $ block. In other words, $\pi$ may have tiles in $c \times c$ blocks representing empty space in $\alpha$, but only if that position is adjacent to a tile in $\alpha$.

(2) Dynamics: if there exist producible assemblies $\alpha$ and $\alpha'$ in $\Gamma_a$ such that $\alpha \rightarrow_1 \alpha'$ (growth by single tile addition), then for every producible $\pi$ in $\Gamma_f$ where $\alpha = \phi^{\ast}(\pi )$ it is the case that there exists~$\pi'$  such that $\pi \rightarrow_{\ast} \pi'$ (growth by one or more tile additions) in $\Gamma_f$ where $\alpha' = \phi^{\ast}(\pi')$. Furthermore, for every pair of producible assemblies $\pi$, $\pi'$ in $\Gamma_f$, if $\pi \rightarrow_{\ast} \pi'$, and $\alpha = \phi^{\ast}(\pi)$ and $\alpha' = \phi^{\ast}(\pi')$, then $\alpha \rightarrow_{\ast} \alpha'$ for assemblies $\alpha, \alpha'$ in $\Gamma_a$.

\subsection{Plane tiling}\label{def:planetiling}

Plane tiling systems, such as Wang tiles~\cite{Wang-1961} and Robinson tiles~\cite{Robinson-1971} consist of sets of shapes (called \emph{tiles}), placed on a regular lattice, so that they cover the entire plane in an infinite arrangement.
A \emph{plane tiling system} $(S, L, T, C)$  is a set of tiles $S$, specified both with shape and (optionally) color patterns on the boundary of each shape, a square or hexagonal lattice $L$, a set of transformations $T$ that the tiles can undergo that necessarily includes translation (the tiles must translate to locations throughout the plane) and optionally rotation and reflection, and a set of tile adjacency constraints $C$ that requires that color patterns on adjacent tiles must either match or be paired in complementary pairs.

A \emph{plane tiling family} $(T, C)$ consists of all plane tiling systems where the tiles are permitted to undergo a set of transformations and coincident tile edges must obey a set of tile adjacency constraints.
The set of transformations include translation and a set $T \subseteq \{t_r, t_f\}$ of optional transformations: rotation and reflection, denoted $t_r$ and $t_f$, respectively.
The tile adjacent constraint $C \in \{ c_c, c_m \}$ is either that adjacent edges of tiles must match ($C = c_m$) or be paired with their complementary geometry/color pattern ($C = c_c$).
We also define a \emph{nearly-plane tiling system} to be a tiling system with the relaxed constraint that tiles need only be placed at every lattice location and touch neighbors specified by the lattice graphs, but need not fill the plane.

\section{Low-Strength Hexagons Simulate \ifabstract \\ \fi High-Strength Squares}\label{sec:hex}
\begin{figure*}[htp]
\centering
\ifabstract
\newcommand{\dwfigspace}{16.5} %
\else
\newcommand{\dwfigspace}{12} %
\fi
  \subfloat[][Square tiles (left) simulated by $3\times 3$ hexagonal supertiles (right).~4 strength $< \tau$ input glues are indicated by  black rectangles. Potential input hexagon sides to the  empty white $3\times 3$ supertile are highlighted in bold.~To simulate a strength $< \tau$ glue, each superside places a single hexagon tile, indicated by a small colored rectangle, with the  goal of cooperatively claiming the center (gray) location.]{%
        \label{fig:hex1a-layout}%
        \centering
       \hspace{\dwfigspace pt}\hspace{6pt}\includegraphics[height=1.5in]{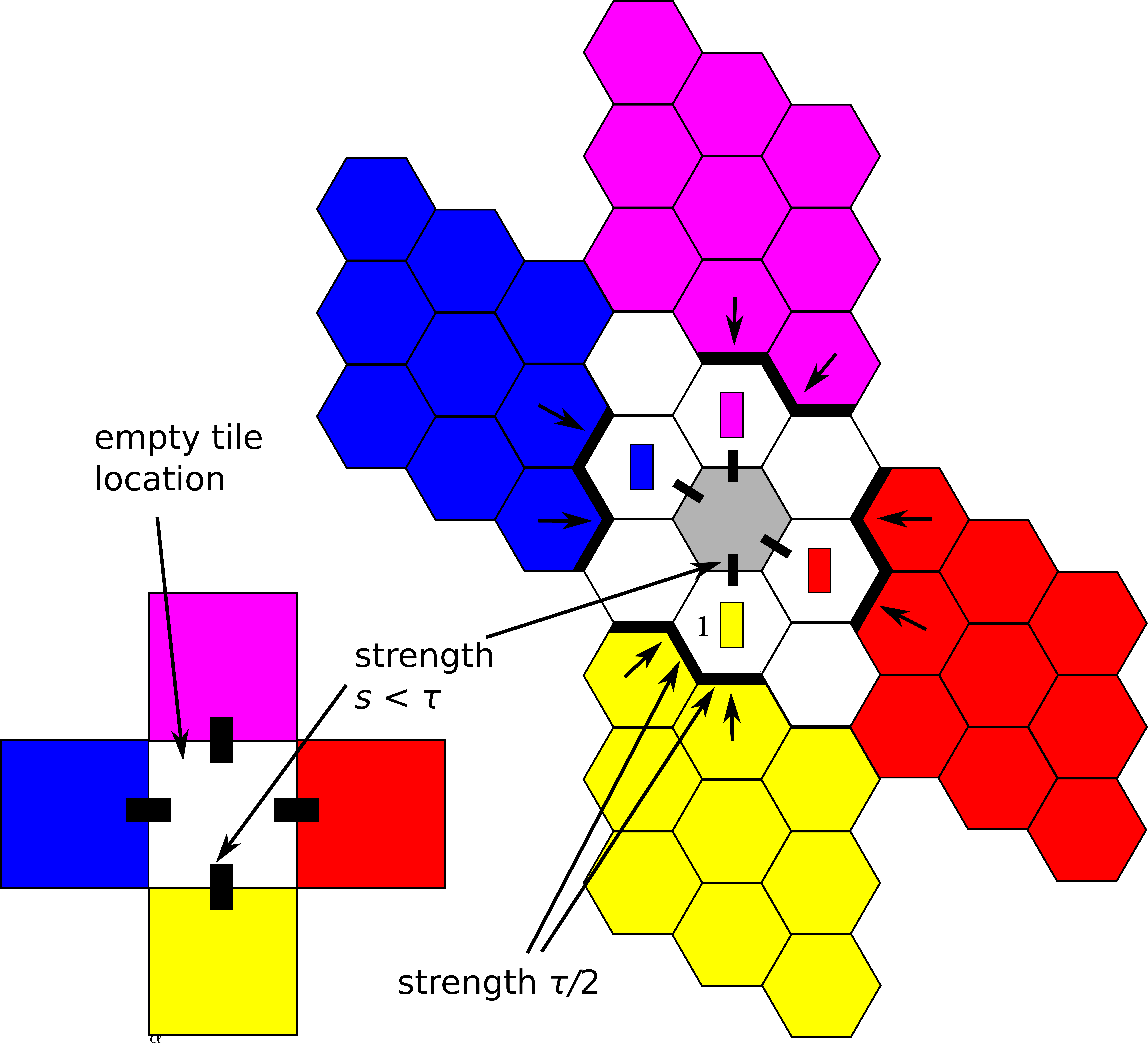}\hspace{\dwfigspace pt}\hspace{6pt}}%
        \hspace{5pt}
  \subfloat[][Potential output hexagon sides from the white $3\times 3$ supertile are highlighted with bold gray lines. A valid simulation must place output glues at these sides, unless the side already contains an adjacent (input) side that came from the gray supertiles.]{%
        \label{fig:hex1-outputs}%
        \centering
       \hspace{\dwfigspace pt}\includegraphics[height=1.5in]{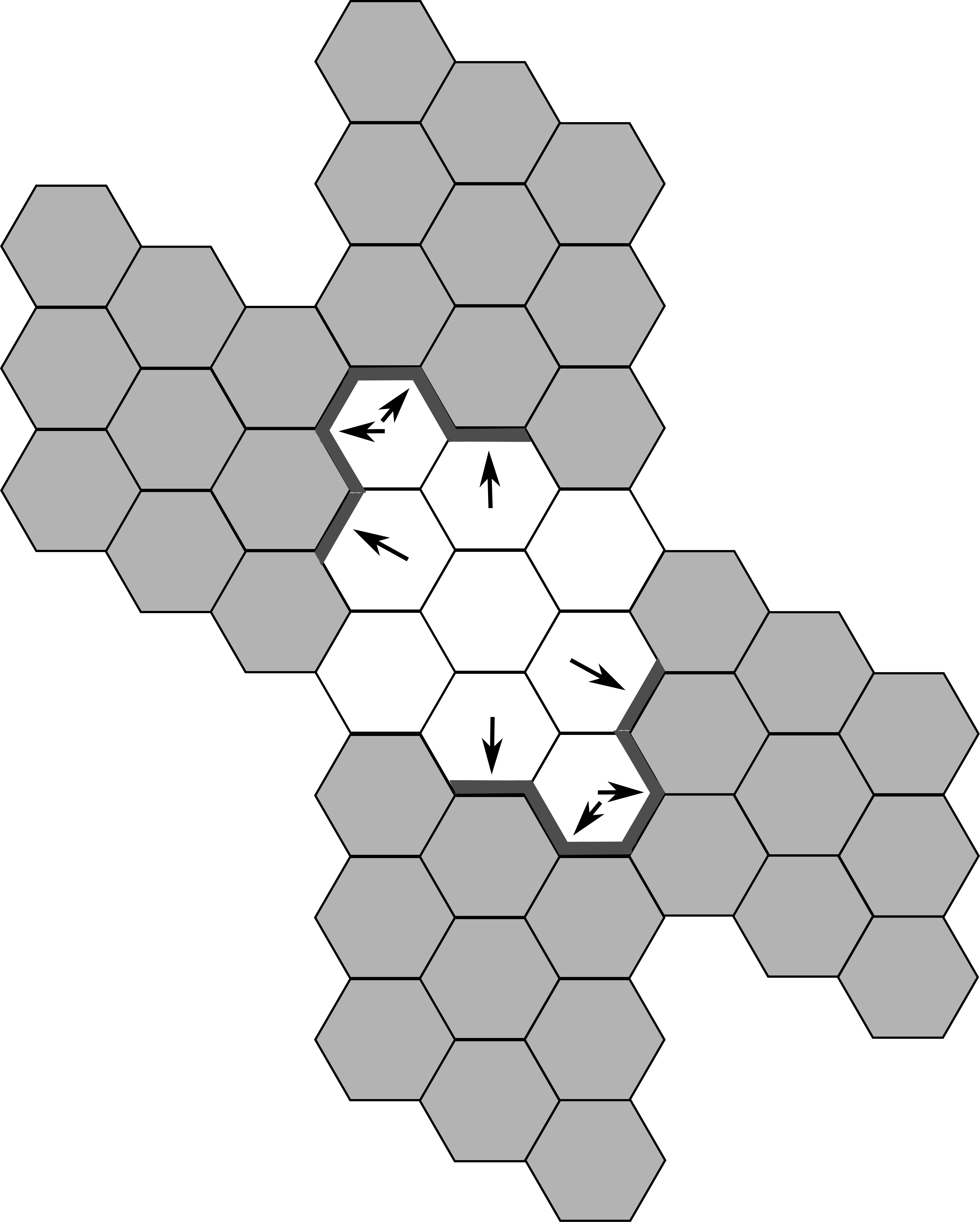}\hspace{\dwfigspace pt}}%
        \hspace{5pt}
  \subfloat[][North \& South strength $< \tau$ cooperating supersides. After the center tile is cooperatively claimed by North \& South, the remaining 4 tiles are placed,  encoding  output supersides.]{%
       \label{fig:hex--NS-less-than-tau}%
        \hspace{\dwfigspace pt} \includegraphics[height=1.5in]{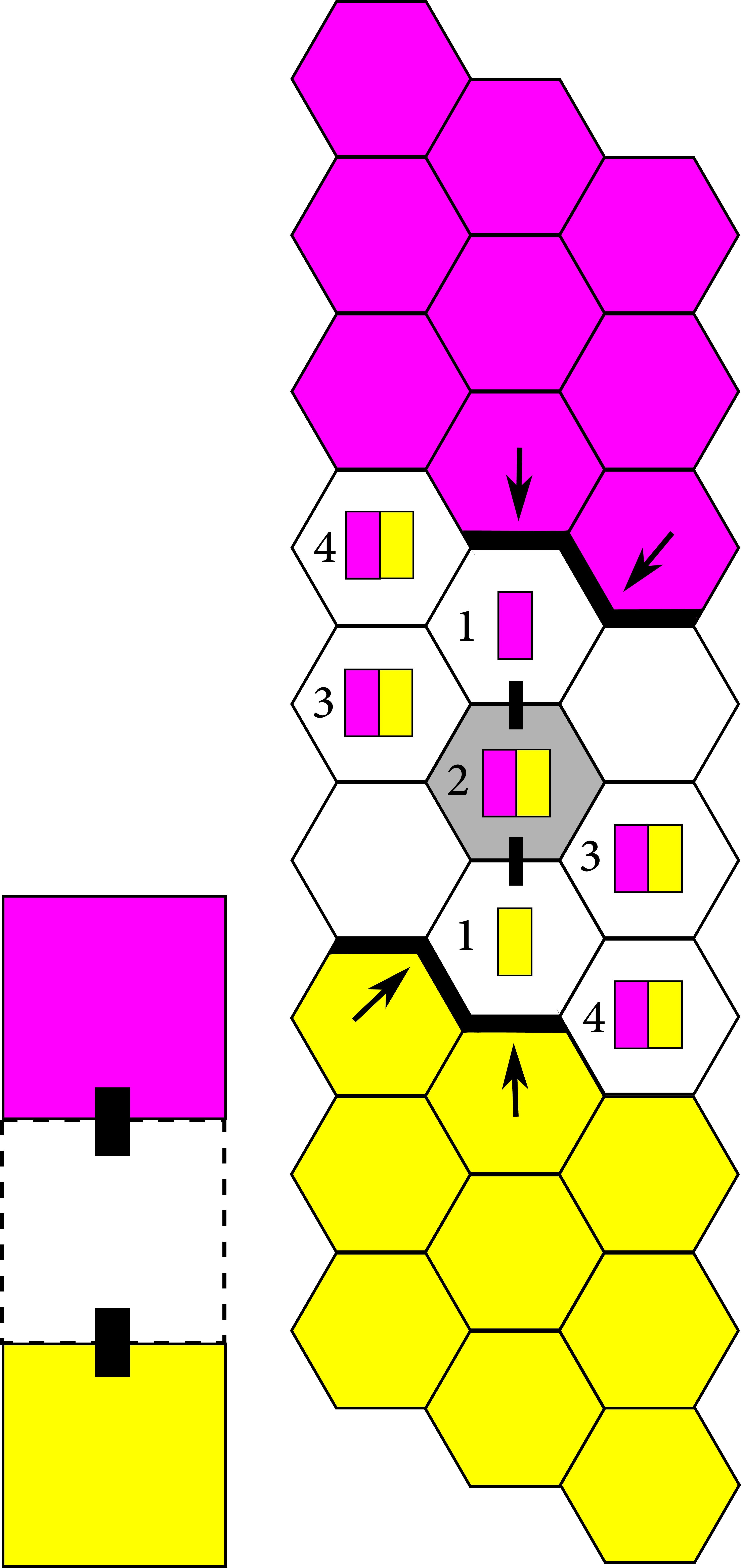}\hspace{\dwfigspace pt}}%
  \hspace{5pt}
  \subfloat[][West \& South strength $< \tau$ cooperating supersides. After the center tile is cooperatively claimed by North \& South, the remaining 4 tiles are placed, encoding suitable output supersides.]{%
        \label{fig:hex--NW-less-than-tau}%
        \centering
       \hspace{\dwfigspace pt}\includegraphics[height=1.29in]{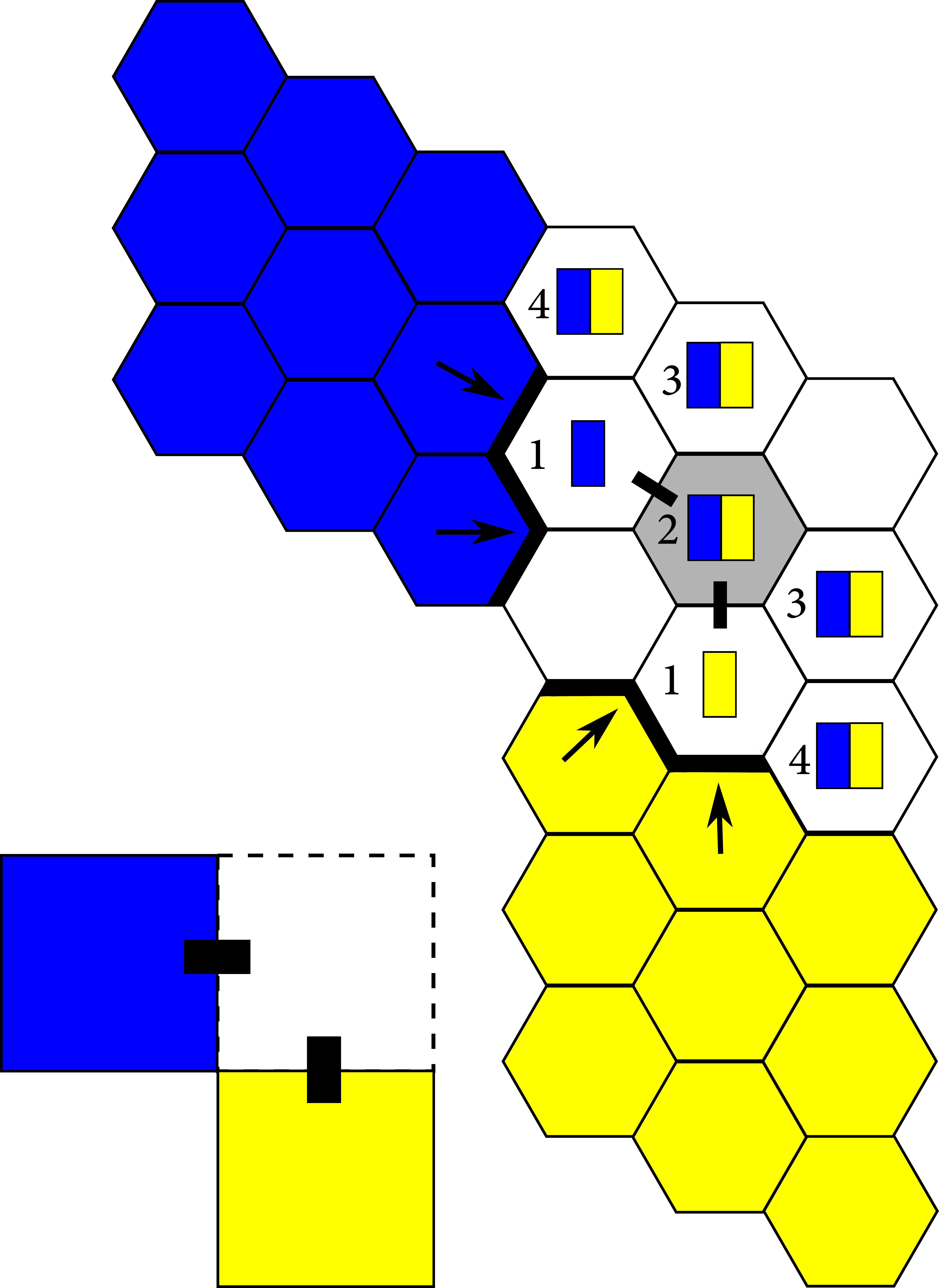}\hspace{\dwfigspace pt} } \vspace{-1ex}%
      \caption{Simulating  strength $< \tau$  squares with a temperature $\tau$ hexagon system with strength $< \tau$ glues. All hexagon glues are of strength $\lceil t/2 \rceil$, except glues bordering the center tile whose strengths are those of the simulated square input glues.}
  \label{fig:hex1}
  \centering
\ifabstract
\vspace*{-3mm}
\fi
\end{figure*}

The following lemma states that any temperature $\tau$ aTAM system can be simulated by a temperature $\tau$ hTAM system that uses only glues of strength $<\tau$.
\ifabstract
In this section we sketch the details of the simulation. Additional arguments for  correctness  appear in the full-length paper.
\else
In this section we present the details of the simulation and an argument for its correctness.
\fi

\begin{lemma}
For any aTAM system $\Gamma = (T, \tau, \sigma)$ with $|\sigma| = 1$ and $\tau > 1$, there exists a hexagon assembly system $\Gamma' = (H, \tau, \sigma')$  that simulates $\Gamma$ and has the property that all glues  in the hexagonal-tile set $H$ are of strength  $< \tau $. Also, $|H| = O(|T|^2)$,  $|\sigma'| = 3$ and the simulation has a constant scale blowup factor of $3 \times 3$.
\label{lem:hex}
\end{lemma}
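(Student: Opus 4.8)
The plan is to give an explicit $3\times 3$ block substitution: each aTAM square tile $t \in T$ is replaced by a $3\times 3$ block of hexagons, which we call a \emph{supertile}, so that the center hexagon of the block carries the name of $t$ (this is where $\phi$ is defined), the four axis-aligned ``supersides'' of the block carry copies of the four glues of $t$, and the remaining four hexagons act as purely structural filler. The hexagonal lattice will be oriented so that a $3\times 3$ hexagon block tiles out in a square-like pattern; the four block-to-block adjacency directions (``north jumps, south jumps, southeast jumps, northwest jumps'' from the simulation definition) will be identified with the four edge directions N, S, E, W of the simulated square lattice. I would fix all hexagon glue strengths to be $\lceil \tau/2 \rceil$, \emph{except} for the glues on the edges bordering the center hexagon, whose strengths equal the strengths of the simulated square glues they encode; note $\lceil \tau/2 \rceil < \tau$ for $\tau>1$, so every hexagon glue has strength strictly less than $\tau$, as required, and $|H| = O(|T|^2)$ because, roughly, each hexagon type needs to remember at most a pair of (input glue, output glue) data from the square system.

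The heart of the construction is getting the \emph{cooperation} right: a square tile $t$ attaches to an assembly at temperature $\tau$ only when the total strength of matching glues on its exposed sides is at least $\tau$. The supertile for $t$ must reproduce this behavior. I would design each superside so that an incoming block's output superside presents a row of hexagon sides whose total strength at the relevant interface is exactly the strength of the corresponding square glue; since the interior filler hexagons all use strength $\lceil \tau/2\rceil$ glues, the sub-assembly that actually grows from an input superside cannot complete the center hexagon on its own unless that one superside already carries strength $\geq \tau$ (the strength-$\tau$ case), whereas two supersides each of strength $\geq\lceil\tau/2\rceil$ together \emph{can} cooperatively claim the center hexagon (the strength-$<\tau$ case: two cooperating glues summing to $\geq \tau$). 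The two representative cases in Figures~\ref{fig:hex--NS-less-than-tau} and~\ref{fig:hex--NW-less-than-tau} (opposite supersides N/S, and adjacent supersides N/W) show the two combinatorial types of pairs of input sides; I would argue that after the center is claimed, the remaining four hexagons of the block attach deterministically (each with a single strength-$\lceil\tau/2\rceil$ bond plus the geometry, or with two such bonds), and in doing so they expose exactly the output supersides dictated by Figure~\ref{fig:hex1-outputs} — an output glue is placed on every block side except those already occupied by an input side from a neighboring block. The seed assembly $\sigma'$ has size $3$: the original single seed tile of $\Gamma$ is simulated by a partially-formed supertile consisting of (say) the center hexagon plus two of its structural neighbors, enough to present all four output supersides and get growth started, but small enough to fall under the $c^2-1$ threshold in the simulation definition.

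With the block gadget in hand, the remainder is bookkeeping to check the simulation definition for hTAM-simulates-aTAM from Section~\ref{Models}. I would verify: (1) the size-$>c^2-1$ condition — any producible assembly larger than $8$ hexagons has every hexagon within distance $\leq 3$ of a center hexagon, because growth never ``escapes'' a block without first completing the center and then the whole block; (2) the $c$-block-representation correspondence — a producible hexagon assembly that happens to be evenly divisible into complete blocks has its $c$-bond graph equal to the bond graph of a producible square assembly, and conversely, proven by induction on the number of tile additions using the cooperation analysis above; and (3) the dynamics — a single square tile addition $\alpha \to_1 \alpha'$ is mirrored by the finite sequence of hexagon additions that builds the corresponding block (all of which, until the center is filled, keep the $\phi^\ast$-image equal to $\alpha$), and conversely any hexagon addition either leaves the block-image unchanged or completes a center and corresponds to a legal square addition.

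The main obstacle I anticipate is \textbf{ruling out spurious assemblies}: I must ensure that the strength-$\lceil\tau/2\rceil$ filler hexagons cannot self-assemble into unintended configurations — e.g.\ two filler hexagons cooperating across block boundaries, a block growing its center from a single structural neighbor, or partial blocks presenting output glues prematurely — since any such leak would put hexagons far from any center (violating condition (1)) or create bond-graph edges with no square counterpart. The geometry of hexagons (a hexagon has six sides, only some of which participate in ``jumps'' of length $3$) is exactly the tool that prevents this, and this is where the hypothesis $\tau>1$ and the inequality $\lceil\tau/2\rceil < \tau$ are essential: no filler glue alone suffices to add a tile, so every tile addition is forced to respect block structure. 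Carefully enumerating the possible side–side interactions (matching vs.\ permitted vs.\ forbidden geometry) to show the filler hexagons only fit where intended, and that the center hexagon can \emph{only} be claimed by the sanctioned cooperating pairs, is the delicate part of the proof.
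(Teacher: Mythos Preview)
Your overall architecture — a $3\times 3$ hexagon supertile whose center encodes the simulated square tile, with input supersides cooperating to claim the center via glues whose strengths reflect the simulated glue — is exactly the paper's approach, and your treatment of the seed, the $O(|T|^2)$ tile count, and the bookkeeping against the simulation definition are all essentially right.

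However, there is a genuine gap in your handling of \emph{strength-$\tau$ square glues}. You stipulate that ``the glues on the edges bordering the center hexagon [have] strengths equal [to] the strengths of the simulated square glues they encode,'' and then justify the strength-$<\tau$ requirement by observing $\lceil\tau/2\rceil < \tau$. But that inequality only controls the \emph{filler} glues; a square glue of strength exactly $\tau$ would, under your rule, produce a hexagon glue of strength $\tau$ at the center boundary, directly violating the lemma's conclusion. You acknowledge the ``strength-$\tau$ case'' exists and say such a superside can claim the center on its own, but you never explain how it does so without this forbidden glue.

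The paper resolves this with a separate mechanism for strength-$\tau$ supersides: rather than placing one hexagon adjacent to the center advertising a strength-$s$ glue (the $s<\tau$ mechanism you describe correctly), a strength-$\tau$ superside places \emph{two} hexagons adjacent to the center, each advertising only a strength-$\lceil\tau/2\rceil$ glue; together these sum to at least $\tau$ and can claim the center unassisted, yet every individual glue stays strictly below $\tau$. This two-tile trick then forces additional care — the six center-adjacent positions must be allocated among up to four input supersides each of which may want two of them, so adjacent pairs of supersides (N/E and S/W) are made to compete for a shared position — and this is the piece of the construction your proposal is missing.
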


\begin{proof} (Lemma~\ref{lem:hex}) \paragraph{Representation}
Let $\Gamma = (T,\tau,\sigma)$ be any aTAM system.  We create an hTAM
system $\Gamma' = (H,\tau,\sigma')$ that simulates $\Gamma$ at a scale factor
of $3$, i.e., each tile from $T$ in an assembly of $\Gamma$ is
represented by a $3 \times 3$ square of $9$ hexagonal tiles from $H$ in an
assembly of $\Gamma'$.  Each such $3 \times 3$ block in
$\Gamma'$ is a {\em supertile}.  The hexagonal plane is logically divided into
supertile blocks (e.g., see Figure~\ref{fig:hex1} for 5 supertile hex
blocks simulating 5 square tile locations) so that each supertile has $4$
supertile neighbors (N, E, S, and W), which is partially accomplished by
``ignoring'' the northeast edge of the northeast tile and the southwest edge of
the southwest tile of each supertile. For each $t \in T$, there are several
supertiles in $\Gamma'$ that represent $t$; there must be a way to map an
assembly in $\Gamma'$ to an assembly in $\Gamma$.  This is done with a mapping
function that maps each supertile to some $t$ by identifying its center tile.
For each $t \in T$ there will potentially be several tile types $h \in H$ which
can be placed as a center tile of a supertile and map the entire supertile to
that $t$, but for each such $h$, there is exactly one $t$ to which it maps.
Furthermore, if a supertile is incomplete and does not have a center tile
placed, then it maps to an empty location in $\Gamma$.

\paragraph{Simulation overview}
The formation of a (non-seed) supertile $s$ begins from its outside, initiated
by tile attachment from the adjacent sides of existing supertiles that serve
as ``inputs'' to $s$.  The information about each simulated input glue $g$
is conveyed to locations adjacent to the center location of $s$ by a
single tile or pair of tiles specific to the direction of the input and the
strength of $g$.  If the strength of $g$ is $ < \tau$, exactly one tile is
placed with one side adjacent to the center position.  If the strength of $g$
is $ \tau$, two tiles are placed, providing two such sides.  Since $H$ contains
no $\tau$-strength glues, every tile placement must be the result of
cooperation between the glues of at least $2$ tiles, which serve as input glues
to the newly placed tile.  Thus, if $g$'s strength is $< \tau$, it will only
supply one of the necessary sides which can allow a center tile to be placed,
and that will only be able to occur if sufficiently many additional input sides
place tiles adjacent to the center to provide the necessary cooperation.
However, if $g$'s strength is $\tau$, it will be possible for the tiles
representing $g$ to result in the necessary $2$ sides adjacent to the center
position, allowing the center tile to be placed (as would be expected, since in
$\Gamma$ the presence of a single $\tau$-strength glue in a neighboring
position is enough to allow the placement of a tile).  Note that due to the
fact that the center hexagonal tile has only $6$ edges, in order to allow each
input path the potential of acquiring two edges adjacent to the center, the
south and west pair of input sides and the north and east pair each share a
position among the two sides which is ``competed'' for in the case that both
directions represent input sides with $\tau$-strength glues.  Only one of each
pair has the potential to acquire both positions and thus perhaps eventually
claim placement of the center tile.  This is consistent with the simulation of
$\Gamma$ since, if an untiled position has multiple neighbors with
$\tau$-strength glues, any of those neighbors can potentially independently
direct the placement of the new tile.

Finally, after the center position is tiled, any locations adjacent to the center which weren't used as paths for input glues are then tiled by tiles which convey the output glues consistent with the tile $t$ being simulated to the sides which will simulate output sides.

Figure~\ref{fig:hex1a-layout} shows four square tiles arranged around an empty white center square, and the south square tile (yellow) has a strength $s < \tau $ glue $g$ exposed to its north. %
Input glue $g$, and its strength $s$, are encoded at the 3 southmost bold hexagon edges in Figure~\ref{fig:hex1a-layout}, these are the {\em input sides} to the supertile, and each is of strength $\lceil \tau/2 \rceil$. The four sets of hexagonal input sides to the supertile are indicated with bold lines (and arrows) in Figure~\ref{fig:hex1a-layout}. It follows that the supertile's {\em output sides} are as indicated in bold in  Figure~\ref{fig:hex1-outputs}.

The square tile type that is simulated by a supertile is defined by the hex tile type that is placed at the center of the supertile (shown in gray in  Figure~\ref{fig:hex1a-layout}). Sides of supertiles, or {\em supersides}, compete and/or cooperate to claim this center tile location and thus set the identity of the forming supertile (in terms of the square tile type which it is simulating).

\paragraph{Strength $< \tau$}
There are two types of bonds  to simulate, strength $< \tau$ and strength $\tau$. We first consider strength $< \tau$ bonds. Here, input supersides first attempt to place a single hex tile at one of the 4 locations indicated by the colored rectangles in Figure~\ref{fig:hex1a-layout}. Then, the goal is for supersides to cooperate to place a tile at the center gray location. The tile placed at the center defines the square tile type simulated by the supertile. A superside encoding a strength $s$ square glue will advertise a strength $s$ glue to this center tile position as shown in Figure~\ref{fig:hex1a-layout}.  All other glues in the supertile are of strength $\lceil \tau/2\rceil$.

Figures~\ref{fig:hex--NS-less-than-tau} and~\ref{fig:hex--NW-less-than-tau} show two specific examples  where two superside cooperate to place a tile at the center. Numbers indicate precedence of placement within a distance 1 neighborhood, specifically, hex tile 2 can not be placed until the two neighboring tiles numbered 1 have been placed. All hex glues (in the white region) are of strength $\lceil \tau/2 \rceil $, except for two glues: the pair of cooperating tiles with label 1 expose glues of identical strength to their simulated square edge.
The other cases of 2 cooperating supertiles, that encode strength $< \tau$ square sides, are symmetric with Figures~\ref{fig:hex--NS-less-than-tau} and~\ref{fig:hex--NW-less-than-tau}. The case of 3 or 4 cooperating supersides can be understood from Figure~\ref{fig:hex1a-layout}. Specifically, for the case of  4 cooperating supertiles, that encode strength $< \tau$ square sides, the center gray hex tile that encodes the simulated square tile type can be placed via cooperation of all 4 supersides. When there are 3 cooperating supertiles, that each encode strength $< \tau$ square sides, we proceed similarly: 3 hex tiles cooperate to place the center tile, then tiles are cooperatively placed to tile the output hex supersides. %
It is straightforward, but tedious, to see  that everything works in the presence of mismatches: the essential idea is that a mismatching  superside does not cooperate in the placement of the center tile, and furthermore does not block the placement of any output tiles (except at its own superside).

\paragraph{Strength $\tau$}
Simulation of strength $\tau$ bonds is illustrated in Figure~\ref{fig:hex2}. One of the main differences with the $< \tau$ case is that a strength $\tau$ superside must be able to claim the center hex tile without cooperating with other supersides.
Hence the strategy is to tile two positions adjacent to the center position (Figure~\ref{fig:hex-tau-layout}), and let both of these advertise a strength $\lceil \tau/2 \rceil$ glue to the center tile location.  Another trick used here is for strength $\tau$ supersides to share positions where they place their tiles: for example,  in Figure~\ref{fig:hex-tau-layout}) it can be seen that both south and west will try to place a tile at the south-west hex tile location within the supertile. This trick is used to ensure that output paths are not blocked.

\begin{figure*}[tp] %
\centering
\ifabstract
\newcommand{\dwfigspace}{10}
\else
\newcommand{\dwfigspace}{5.5}
\fi
  \subfloat[][Simulating square strength $\leq \tau$ glues using hexagons with strength $< \tau$ glues. Layout: Potential input hexagon sides to the white  $3\times 3$ supertile are highlighted in bold. Input supersides attempt to claim the center tile by first placing two hexagonal tiles as indicated by the small colored rectangles.]{%
        \label{fig:hex-tau-layout}%
        \centering   %
        \hspace{\dwfigspace pt}\includegraphics[height=1.5in]{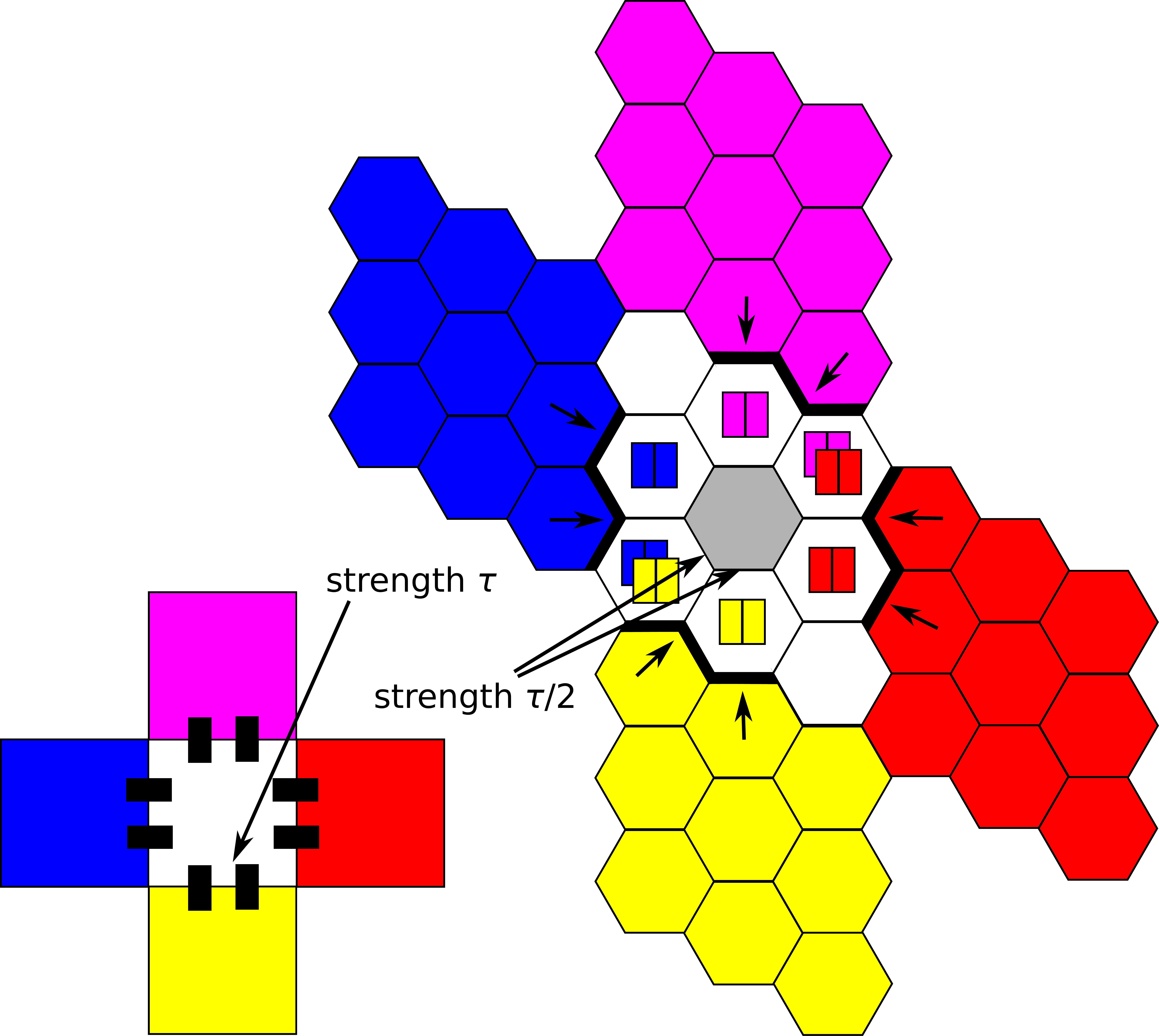}\hspace{\dwfigspace pt}}%
        \hspace{5pt}
  \subfloat[][Simulating a single South strength $\tau$ glue. The center tile is claimed by South placing 3 tiles. Numbers indicate tile placement precedence. 4 subsequent tiles are appropriately placed to encode the output supersides.]{%
        \label{fig:hex-tau-south}%
        \centering
        \hspace{\dwfigspace pt}\includegraphics[height=1.08in]{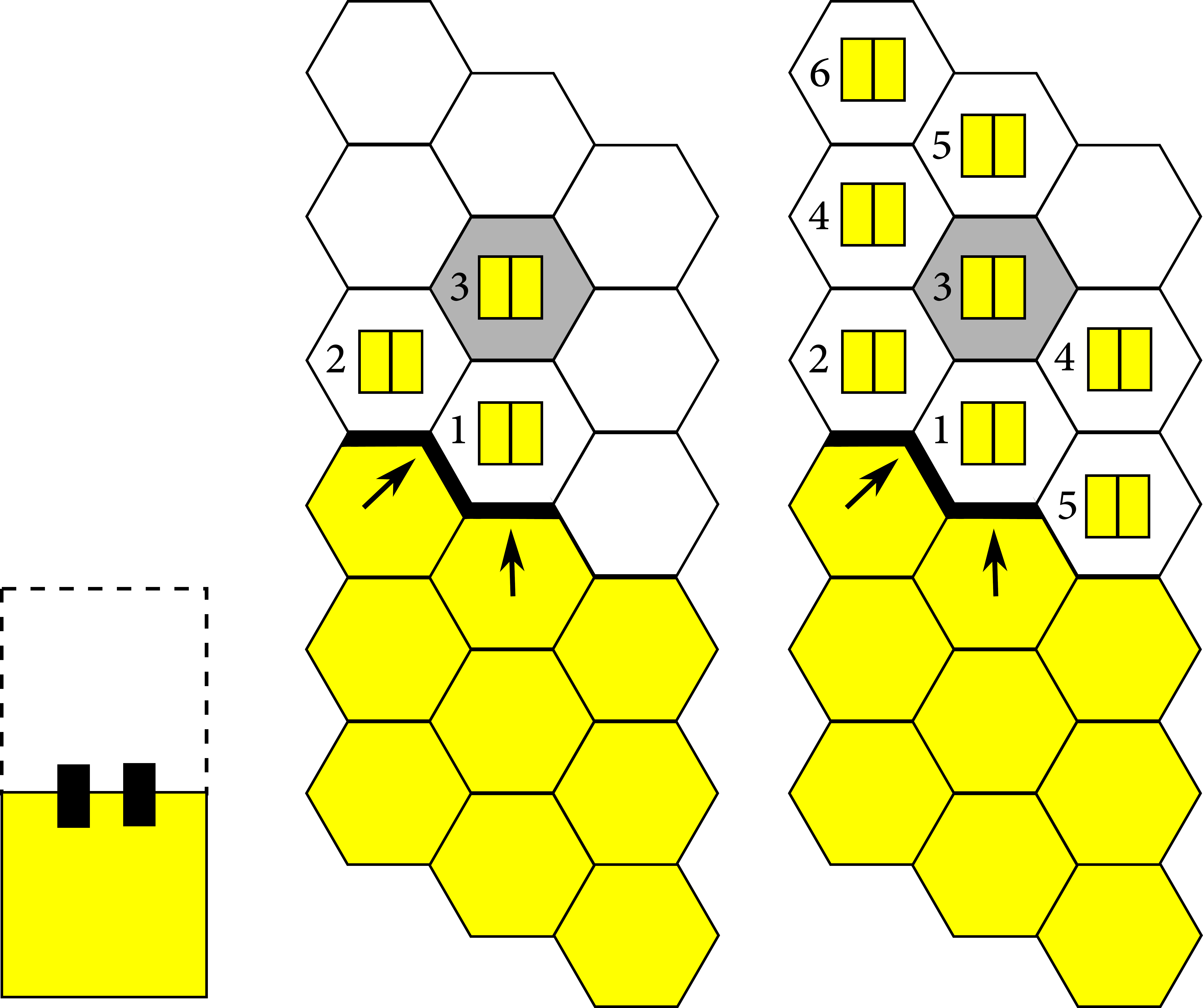}\hspace{\dwfigspace pt}}%
        \hspace{5pt}
  \subfloat[][2 strength $\tau$ supersides compete to claim the center tile. South wins in this case and places an output an superside to the East. In order to place the output superside to the North, South cooperates with one of West's tiles.]{%
        \label{fig:hex-south-tau-c}%
        \centering
        \hspace{\dwfigspace pt}\hspace{10pt}\includegraphics[height=1.3in]{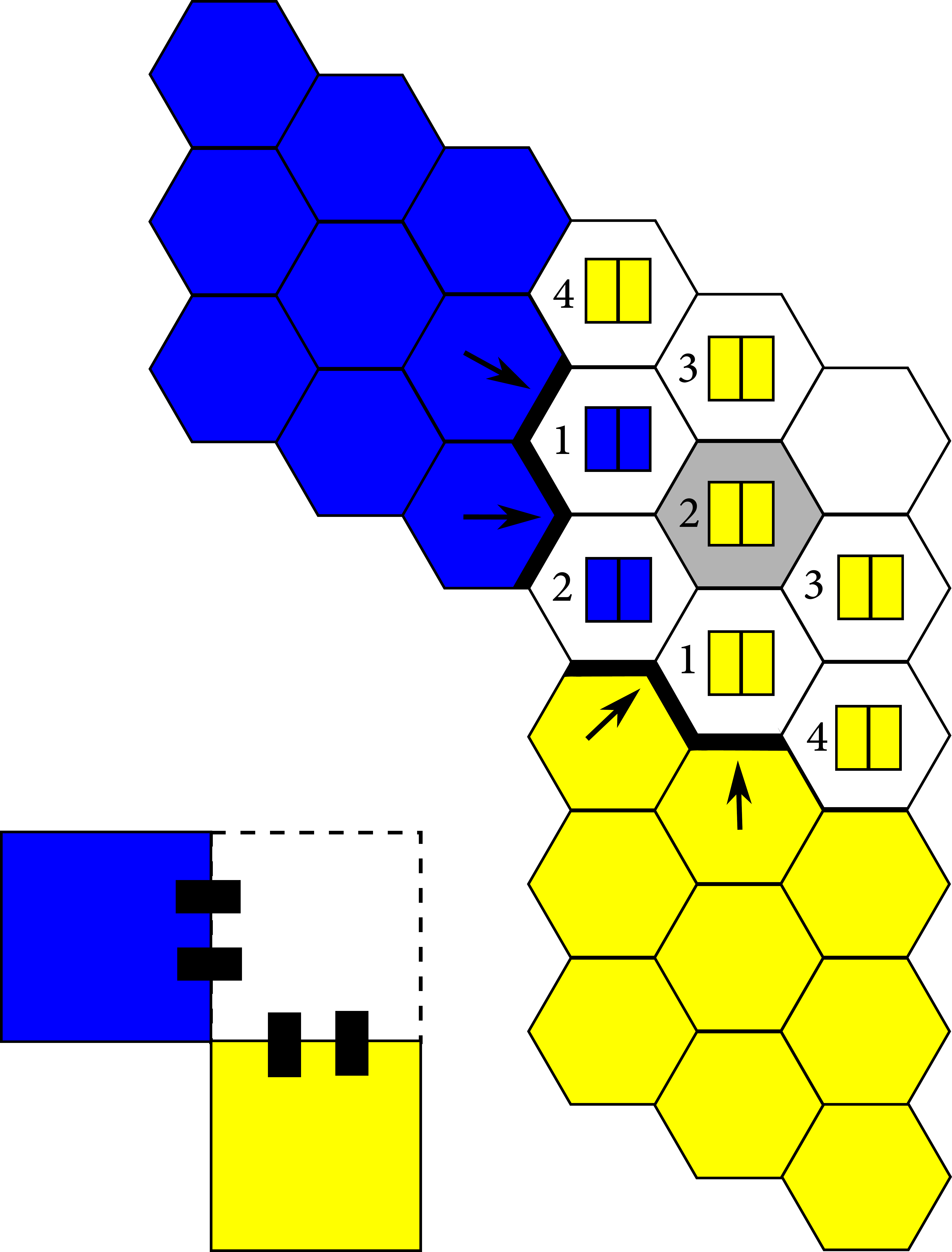}\hspace{10pt}\hspace{\dwfigspace pt}}%
        \hspace{5pt}
  \subfloat[][South wins! A strength $\tau$ south superside claims the center tile in the presence of a North strength $< \tau$ supertile.  South wins and places output supersides to the East and West, by piggybacking on North's single tile.]{%
        \label{fig:hex-tau-south-left}%
        \centering
       \hspace{\dwfigspace pt}\hspace{15pt} \includegraphics[height=1.5in]{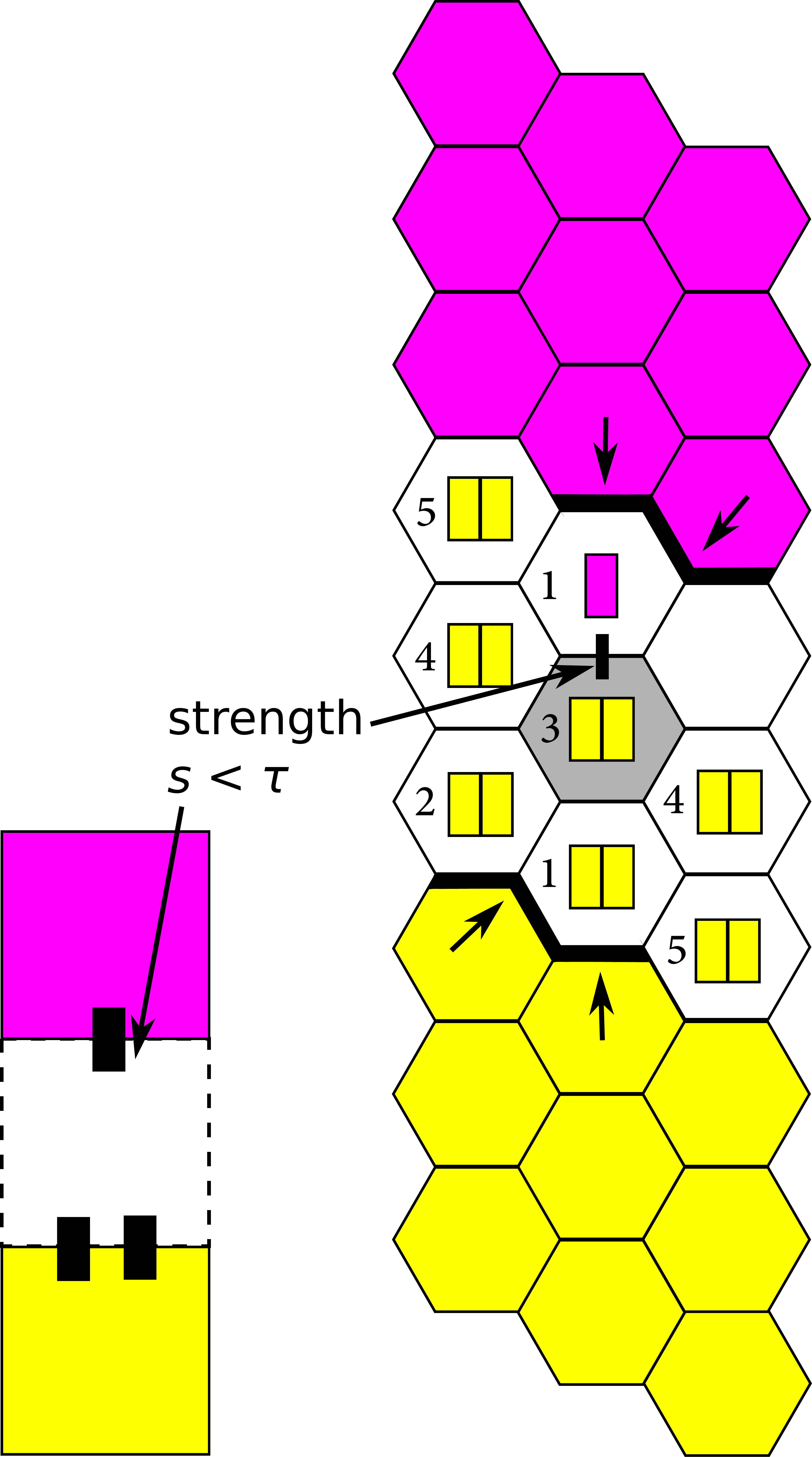}\hspace{\dwfigspace pt}\hspace{15pt}}%
  \centering
  \caption{Simulating square strength $\leq \tau$ glues using hexagons with strength $< \tau$ glues.}\label{fig:hex2}
\ifabstract
\vspace*{-3mm}
\fi
\end{figure*}

\paragraph{Seed structure}

\ifabstract
The encoded aTAM seed tile can be hardcoded using 3 hexagon tiles that grow into a 9 tile superseed. See full paper for details.
\else

Since supertiles are able to grow from existing supertiles in a way which simulates the tile attachments of $\Gamma$, we now need to define the seed structure.  Since $H$ can contain no $\tau$-strength glues, it is impossible for $\Gamma'$ to be singly-seeded, i.e. $|\sigma'| > 1$.  In fact, for $\sigma'$ to be $\tau$-stable, $|\sigma'| \ge 3$.  Therefore, to construct $\sigma'$ we create the hexagonal tiles necessary to form a version of the supertile which represents the single tile of $\sigma$ where all sides are output sides, and create $\sigma'$ by combining 3 of those tiles in a $\tau$-stable configuration.  From this seed, the full supertile will be able to form and the full simulation will be able to proceed.

The seed assembly $S$ of $\Gamma'$ consists of 3 hexagonal tiles that are stable at temperature $\tau$, and that grow into a supertile that encodes the seed tile $s$ of the square system $\Gamma$. This supertile can be hard-coded using 9 unique tile types that use strength  $\lceil \tau/2 \rceil$ glues. Note that  the square system seed tile can appear anywhere in an assembly, and not just at the seed location. Such other versions of the seed supertile do not use these 9 unique tile types, and simply use the standard tiles described in the  construction above. We need to show that a partially formed seed should not be able to initiate growth from one superside, that can grow a sequence of supertiles that come back and block the ``filling out'' of the seed supertile. This can be prevented by having the 3-tile hexagon seed structure consist of a triangle of hexes at the center tile, bottom left and bottom center locations, of the $3\times 3$ supertile. It can seen, by examining Figure~\ref{fig:hex-in-out}(a), that this 3-tile seed must grow to size 5 before it can tile an output superside in such a way that initiates growth of a new supertile. At this point, 2 of the 4 output supersides are tiled, and so they can not be blocked.  Of course, these output sides could produce supertile tentacles that could grow around and attempt to block the other 2 output supersides by tiling inside the 3 ``outputting locations'' of the seed supertile location. For this to happen such a tentacle needs to grow a supertile adjacent to one of the 2 remaining (untiled) supersides, hence that side is blocked in a way that is completely valid. However the tentacle can place 1 or 2 hexagon tiles inside the seed supertile at this point (at input locations), but then its growth is terminated because it can not claim the (already tiled) center tile location. Furthermore, there is one untiled  corner site remaining in the supertile that is adjacent to tentacle tiles. However, this is at an output-only location in the superseed location, and so the tentacle presents 0-strength glues there and so can not tile that position. There remains a tiling path (using both generic, and seed supertile-only glues) to tile the remaining output superside.
\fi %
\ifabstract     %
\else
\subsection{Correctness of Hexagon Construction}
\begin{figure*}[t]
\centering
        \includegraphics[width=\textwidth]{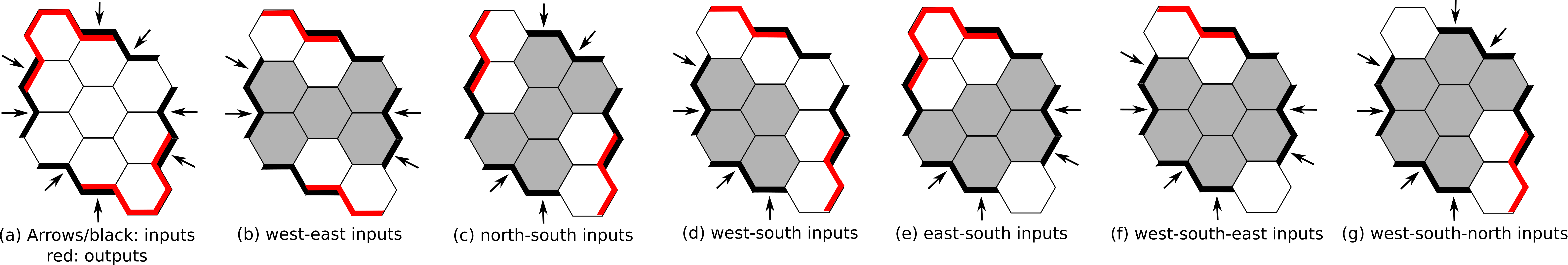}%
  \caption{Hex inputs (bold black, with arrows), and hex outputs (red), for the four supersides. The shaded gray area is a superset of the tiled area upon placement of the center hex tile, irrespective of the strength of square glue being simulated. the figure covers a number of 2- and 3-sided binding cases. In each case we see there is a tiling path to the outputs that can be tiled using strength $\lceil t/2 \rceil$ hex glues.}\label{fig:hex-in-out}
\ifabstract
\vspace*{-3mm}
\else
\vspace*{-5mm}
\fi
\end{figure*}

In order to show that a supertile correctly simulates a square tile, we show that if a square binding event should be simulated, the relevant combination of input supersides leads to a valid placement of output supersides, and otherwise no output supersides should be presented. It can then be seen from Figures~\ref{fig:hex1a-layout} and~\ref{fig:hex1-outputs}, that appropriately tiled output supersides implies that those supersides are ready to act as input supersides to the adjacent supertiles. We consider 4 main cases, where there are 1, 2, 3 or 4 input supersides present (of course, not all input supersides may end up contributing to the supertile choice, e.g.\ 1 or more may mismatch).

When there is exactly 1 strength $<\tau$ superside present the center tile is not claimed, and so no square tile type is simulated  (Figure~\ref{fig:hex1a-layout}). When there is exactly one strength $\tau$ superside present the center hex tile is claimed, as shown in Figure~\ref{fig:hex-tau-south}. There is a path to appropriately place the 3 output supersides (using strength $\lceil \tau/2 \rceil$ glues).

Next we consider the case of 2 input supersides. It can be seen from Figures~\ref{fig:hex--NS-less-than-tau},~\ref{fig:hex--NW-less-than-tau}, \ref{fig:hex-south-tau-c} and~\ref{fig:hex-tau-south-left} (and their rotations) that there are scenarios where either one of the 2 sides can tile a path to the center tile, and the winning superside is determined by the order of tile placement. In other words, each superside has an opportunity to claim the center tile, and no superside is blocked until the center tile is placed.  Figures \ref{fig:hex-in-out}b--e, and all rotations of these, can be used to argue that there is a path to tile the output supersides appropriately. Immediately upon claiming the center tile, the shaded area shows the total area tiled when  there are 2 input supersides of strength $\tau$, and it shows a superset of the tiled are when there are input supersides is of strength $< \tau$. In each case there is a path to tile the remaining 2 output supersides: specifically there is a path to tile the 2 red output locations using strength $\lceil \tau/2 \rceil$ glues.

For the case of 3 input supersides, Figures~\ref{fig:hex1a-layout} and~\ref{fig:hex-tau-layout} can be used to reason that the center tile can be placed, and furthermore that through cooperation between supersides, all supersides have the opportunity to individually claim (strength $\tau$ supersides) or cooperatively claim (strength $< \tau$ supersides) this tile location. Figures \ref{fig:hex-in-out}f and~\ref{fig:hex-in-out}g (and 2 rotations of these) illustrate the situation upon claiming of  the center location (the shaded area represents a superset of the tile locations at this point).  Again, in each case there is a path to tile the remaining output superside, specifically there is a path to tile the red output location using strength $\lceil \tau/2 \rceil$ glues.

For the case of 4 input supersides, Figures~\ref{fig:hex1a-layout} and~\ref{fig:hex-tau-layout}, illustrate the various ways in which the center tile could be claimed.

Of course, it could be the case that the input supersides encode mismatching square sides that correspond to no tile placement, and this is handled by the construction as follows. (a) A superside encodes a glue that has no corresponding opposite side glue: here no hex tiles are at all placed by the corresponding ``input superside'' (the relevant supersides in Figures~\ref{fig:hex1a-layout} and~\ref{fig:hex-tau-layout}  would have {\em no} tiles placed with colored rectangles). (b) superside encoded insufficient strength (or mismatching glues) for an attachment: in this case the tiles denoted using colored rectangles in Figure~\ref{fig:hex1a-layout} advertise insufficient strength, or mismatching glues, to the center tile position.

In order to generate all of the hexagonal tiles that are needed for the different versions of each supertile, it is sufficient to create $O(|T|^2)$ hexagonal tiles.  Intuitively, this allows for the creation of a unique hexagonal tile for each combination of the aTAM tile types represented by its two input tiles.
 \fi  %
\end{proof}

\ifabstract
Furthermore, it easily follows from this construction that an aTAM system with seed size $|\sigma| \ge 1$ is simulated by an hTAM system where $9|\sigma|$ seed hexagonal tiles simulating the aTAM seed assembly are placed appropriately. This gives: %
\else
Furthermore, it is straightforward to see from the above construction that an aTAM system with seed $\sigma$ where $|\sigma| \ge 1$ (i.e.\ a seed assembly consisting of many tiles) is simulated by an hTAM system where the $9|\sigma|$ hexagonal tiles simulating the aTAM seed assembly are appropriately placed to represent that seed assembly. Thus the following Corollary holds.
\fi

\begin{corollary}
For any aTAM system $\Gamma = (T, \tau, \sigma)$ with $|\sigma| \ge 1$ and $\tau > 1$, there exists a hexagon assembly system $\Gamma' = (H, \tau, \sigma')$  that simulates $\Gamma$ and has the property that all glues  in the hexagonal-tile set $H$ are of strength  $< \tau $. Also, $|H| = O(|T|^2)$,  $|\sigma'| = 9|\sigma|$ and the simulation has a constant scale blowup factor of $3 \times 3$. \label{cor:hex}
\end{corollary}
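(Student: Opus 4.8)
The plan is to reuse, essentially verbatim, the hexagonal tile set $H$ and the entire supertile machinery built in the proof of Lemma~\ref{lem:hex}, changing only the seed. In Lemma~\ref{lem:hex} the single seed tile of $\Gamma$ is encoded by a $3$-tile hexagonal nucleus that self-completes into a $9$-tile superseed, all of whose supersides are output supersides. For the corollary I would skip the growing phase and simply place the full $9$-tile superseed directly, doing this once for each tile of $\sigma$, with the superseeds laid out exactly as $\sigma$ is laid out in $\Gamma$ but scaled by $3$. Thus $\sigma'$ has exactly $9|\sigma|$ hexagon tiles. Every hexagon used in such a superseed---the center tile (which fixes the simulated square type), the output-superside tiles, and, where needed, input-superside tiles---is already a standard tile of the construction, so no new tile types are introduced and $|H| = O(|T|^{2})$ still holds.

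Two things then need to be checked beyond Lemma~\ref{lem:hex}. First, $\tau$-stability of $\sigma'$. Each individual $3\times 3$ hexagonal block is a producible (hence $\tau$-stable) superseed of the single-seed system of Lemma~\ref{lem:hex} (instantiated with that block's square type as its lone seed tile), so no sub-assembly of one block can be split off at cost less than $\tau$. For superseeds corresponding to \emph{adjacent} tiles of $\sigma$, I would realize the shared boundary as exactly the standard supertile-to-supertile interface already used in the construction when one complete supertile abuts another, which carries total bond strength at least $\tau$ (all those hexagon-edge bonds have strength $\lceil \tau/2 \rceil$). To fix the ``input vs.\ output'' orientation of these interfaces unambiguously, take a spanning tree of the bond graph of $\sigma$ (which is connected since $\sigma$ is $\tau$-stable), root it, and for each non-root seed tile let the superside facing its parent be built as an input superside encoding the (matching) glue of that parent edge while all its other supersides are output supersides; the root's block is all-output as in Lemma~\ref{lem:hex}. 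This makes $\sigma'$ a $\tau$-stable assembly whose $c$-block representation maps, block by block through the center hexagons, precisely onto $\sigma$ under the same representation function $\phi$ from Lemma~\ref{lem:hex}.

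Second, the dynamics. Because each superseed is a complete supertile exposing output supersides exactly like any supertile produced during the simulation, the rest of the argument is unchanged: from $\sigma'$, neighboring supertiles nucleate and fill out exactly as analyzed in Lemma~\ref{lem:hex}, so $\Gamma'$ has equivalent production and dynamics to $\Gamma$ under $\phi^{\ast}$ (including the ``clean mapping'' caveat that partially tiled input-superside hexagons of an otherwise-empty block sit next to an occupied block). In fact, the hardest part of Lemma~\ref{lem:hex}---arguing that a partially formed $3$-tile seed cannot send out a tentacle of supertiles that loops back and blocks the completion of the seed superside layout---simply does not arise, since here every seed superside is fully tiled from the outset. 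The only genuinely new bookkeeping is the spanning-tree orientation of the inter-superseed interfaces described above, which is routine; this is why the extension is ``straightforward''.
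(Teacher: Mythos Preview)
Your proposal is correct and matches the paper's approach, which is simply the one-sentence remark preceding the corollary: replace the three-tile hexagonal seed by the full $9|\sigma|$ hexagons obtained by laying down one completed $3\times 3$ supertile per tile of~$\sigma$. The spanning-tree orientation you introduce to disambiguate input versus output supersides at the interfaces inside~$\sigma'$ is a sensible way to make ``appropriately placed'' precise; the paper does not spell this out and treats the extension as routine. Your observation that the tentacle-blocking analysis from Lemma~\ref{lem:hex} is no longer needed (since every seed supertile is fully formed at time zero) is also correct and worth noting.
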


\section{Encoding Glues in Geometry}
\label{sec:glue-geometry}

We utilize the idea (similar to \cite{fu2012SAGT}) of using small surface geometries on polygonal tiles to encode additional information.
These geometries have two canonical types: indentations and protrusions, which we call \emph{dents} and \emph{bumps} for convenience.
Such geometries are also used extensively in subsequent sections to both encode information and enforce constraints on how tiles can bond.
\ifabstract
\else
However, here we merely describe how to perform a simple modification on a polygonal tile system to reduce the number of glues in the system to one by encoding these glues in bump-and-dent geometry.
\fi

\ifabstract
For a given set of tiles with $k$ glues, the set is modified by removing all glues and placing a sequence of $k$ dents followed by $k$ bumps around the midpoint of each side's boundary.
Each side's glue $g_i$ is encoded by partioning the tip of the $i$th leftmost dent and $i$th rightmost bump into a set of $k/2$ short sides, giving each short side a generic strength-1 glue used on \emph{all sides}.
Combined with the bump-and-dent geometry, this ensures that any positive-strength bond is formed only when two sides with matching glues in the original tile set are coincident.
\else

\paragraph{Modification}

Given a polygonal tile set $T$ with glue set $G = \{g_1, g_2, \dots, g_m\}$, we define a \emph{glue geometry} consisting of a sequence of $m$ dents and $m$ bumps.
Each side of every tile in $T$ is modified to include this geometry oriented around the midpoint of the side (dents on the counterclockwise (CCW) side, bumps clockwise (CW)), with the original glue of the side removed and a set of small \emph{subsides} with a common strength-1 glue created within the glue geometry. (From now on, we are working with a slightly generalized pfbTAM model that permits this kind of glue placement.) 
For a side with strength-$k$ glue $g_i$, these subsides are created by dividing the tip of the $i$th leftmost dent and $i$th rightmost bump into $\lfloor k \rfloor$ short sides, each with a strength-1 glue (see Figure~\ref{fig:glue-geom-ex}).
The subsides have non-repeating lengths unique to $g_i$.
If $k$ is odd, a short side between the bumps and dents of a length unique to $g_i$ with the common strength-1 glue is also created.

\begin{figure}[t] %
  \begin{center}
  \includegraphics[width=0.6\columnwidth]{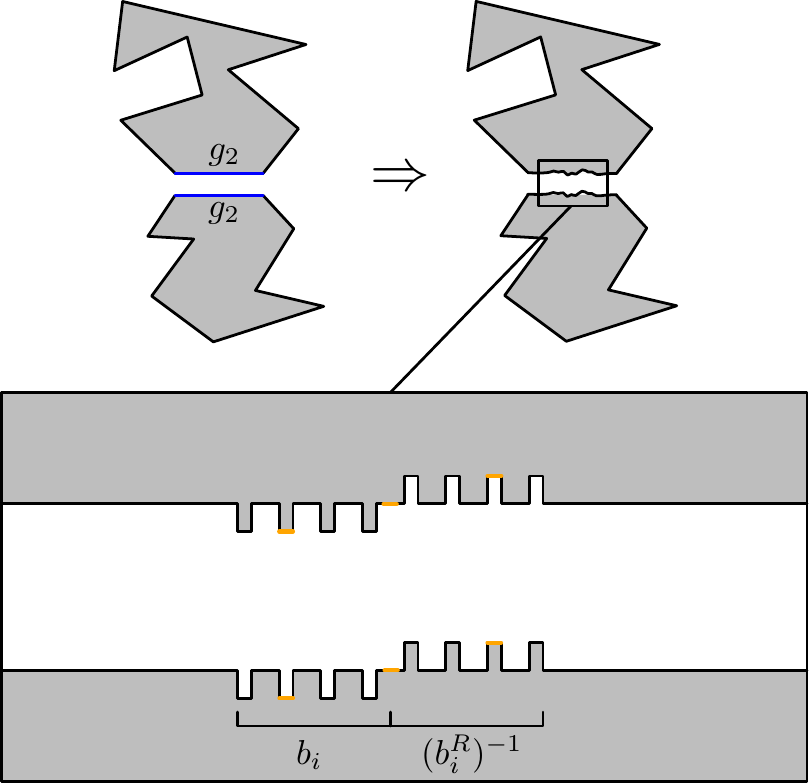}
  \caption{
  \label{fig:glue-geom-ex}
  Converting a side with a strength-5 glue $g_2$ (blue) to a set of smaller sides using a common glue (orange). The original glue set has 4 glues in this case.}
  \end{center}
\end{figure}

\paragraph{Correctness}

First, because the glue geometries are small geometries placed at the midpoint of each side and matching only each other, sides can only attempt to bond with coincident midpoints, as they did in the unmodified tile set, with the $k$ dents of each glue geometry accepting the $k$ bumps of the other.
Also note that side pairs are not prevented from coming together in this way, even if the sides do not bond, as all glue geometries have matching geometries with each other.

Next, any pair of unmodified sides that bond with positive strength have identical glues.
In the modified tile set, the glue geometries of these sets both have a sequence of subsides with strength-1 glues on the tips of the $i$th leftmost dent and $i$th rightmost bump (and possibly on a short side spanning the midpoint of the unmodified edge).
As a result, a total of $2 \cdot \lfloor k/2 \rfloor$ strength-1 bonds are formed, plus 1 if $k$ is odd.
So bonds with total strength $k$ are formed in the modified tile set.

Finally, any pair of sides in the original tile set with distinct glues do not form a positive strength bond, but may become coincident and form a ``strength-0 bond''.
In this case, the glue geometries have matching geometries as mentioned before, but the glues in the glue geometries for both sides lie on distinct bumps and dents, and do not meet.

All that remains is to show that no unwanted positive-strength bonds can occur in the modified tile set.
The cases in which the two corresponding sides in the original tile set are coincident has already been considered.
Recall that the lengths of the subsides are unique to each $g_i$.
As a result, the only sets of subsides that can form positive-strength bonds are those corresponding to the same $g_i$.
Because the lengths are non-repeating, the only matching sequence of subsides occurs when two bumps both have subsides with glues corresponding to the same glues \emph{and} these glues appear mirrored, as when one tile has been reflected or ``flipped''.

Since the glue geometry is small and the constructions using glue geometry either use large bump and dent geometry to forbid tile reflection (as in Section~\ref{sec:manygons}) or exist in models that forbid reflection, such an orientation of subsides cannot occur.
So all bonds occuring in the modified tile set correspond to valid bonds in the original tile set with the same total strength.

\fi

\section{Self-Assembly With a Single, Rotatable Tile Type}
\label{sec:manygons}

Now we introduce the use of \emph{rotation} of a single polygonal tile to encode a set of multiple simple tiles.
We show that hTAM systems that do not contain $\tau$-strength glues can be simulated by a pfbTAM system consisting of a single (nearly) regular convex polygon (with small surface geometry) that may rotate and flip.
The hTAM systems are simulated at scale 1 in a natural way: the rotation angle of a tile indicates the hexagonal tile it simulates.
The use of small surface geometry (as used in Section~\ref{sec:glue-geometry}) constrains the possible orientations in which the tile can attach to the seeded assembly to match only
\ifabstract
\else
those that correspond
\fi
to valid hexagon tiles.

\ifabstract
\else
First we define the class of polygonal tiles utilized for this construction.
Then we present the full details of the construction for simulating any hTAM system without any $\tau$-strength glues using a single-tile pfbTAM system.
Because there are no $\tau$-strength glues in the simulated hTAM system, any such system assembling a non-trivial assembly requires a multi-tile seed.
The construction we present uses an equivalent seed in the simulating pfbTAM.
We then use this construction to prove Theorem~\ref{thm:single-tile-iusa}.
\fi

\ifabstract
We can eliminate the multi-tile seed in the simulating pfbTAM (by using a $\tau$-strength glue), thus making the pfbTAM system self-seeding.
In this case no seed assembly must be created as input for the system, but instead one individual copy of the single tile type will assume the role of seed for each assembly which forms by then growing from it.
\else
Finally, we describe a modification to the construction that eliminates the need for a multi-tile seed in the simulating pfbTAM (by using a $\tau$-strength glue), thereby making the pfbTAM system self-seeding.
In this case no seed assembly must be created as input for the system, but instead one individual copy of the single tile type will assume the role of seed for each assembly which forms by then growing from it), and thus prove Theorem~\ref{thm:self-seed-sim2}.
\fi

\subsection{$n$-gon tiles}\label{sec:n-gons}

\ifabstract
\else
A \emph{regular $n$-gon} is an equilateral and equiangular convex polygon with $n$ sides, for $n \in \mathbb{N}$ and $n \ge 3$ with each side length 1.
\fi
 For this construction, we define a family of polygonal tiles we call \emph{$n$-gon tiles} as a set of unit regular $n$-gons with the addition of small-scale surface protrusions (\emph{bumps}) and indentations (\emph{dents}), as used to encode glues in Section~\ref{sec:glue-geometry}.
\paragraph{Side geometry}

The geometry of each side of a tile consists of the placement of a bump and dent pair in one of two locations on the polygon side, either towards the clockwise or counterclockwise end of the side.
\ifabstract
\else
See Figure~\ref{fig:manygon-b-n-d-details} for an example of the geometries on a polygonal tile and details about the sizes and placement of bumps and dents.
The interior angles between the edges of the $n$-gon are $\alpha = (1 - 2/n) \cdot 180^\circ$.
Each bump consists of a rectangular protrusion, extending perpendicularly from the tile's side, while each dent is a rectangular concavity in the side.
Bumps and dents have identical size and are placed adjacent to each other.
Let $h$ and $w$ be the height and width of the bumps and dents, and $d$ be the distance separating them, with dents preceding bumps when travelling counterclockwise along the boundary of the tile.
Set $h = d  \cdot  \tan(180^{\circ} - \alpha) / 2$, $w = d/4$, $d = 1/10$, where $CCW$ and $CW$ geometries are placed distance $d$ from the counterclockwise-most and clockwise-most ends of the polygon edge, respectively.

\fi
\ifabstract
These geometries have the following properties: 1. dents in the interior of the tile do not collide with each other and 2. for any two tiles with coinciding sides, the bump on any non-coinciding side of a polygon does not intersect the other polygon.
\else
We claim that these geometries have the following properties: 1. dents in the interior of the tile do not collide with each other (see the dents in the detail of Figure~\ref{fig:manygon-b-n-d-details} for an example of dents that are close to a collision), and 2. for any two tiles with a pair of coinciding sides, the bump on any non-coinciding side of a polygon does not intersect the other polygon.

Property 1 holds as long as the dents in consecutive sides with $CCW$ and $CW$ geometries do not collide, because all other pairs of dents are further apart.
Because the fewest sides that an $n$-gon could have is $6$ (which would occur if $c = 1$), the smallest that $\alpha$ can be is $(1 - 2/6) \cdot 180^\circ = 120^\circ$.
Then because $h = d \cdot \tan(180^{\circ} - \alpha) / 2 \leq d \cdot 1.74 / 2 < d$, it is impossible for two dents to intersect.

Property 2 holds provided that a $CW$ bump on side $s_i$ is contained entirely within the half-plane formed by the supporting line (the dotted line in Figure~\ref{fig:manygon-b-n-d-details}) of side $s_{i-1}$ and containing the rest of the tile.
Because $\alpha \geq 120^{\circ}$ for all $n$-gon tiles with $n \geq 6$, then $\tan(180^{\circ}-\alpha) \leq 1.74$.
The bump is distance $d$ away from the vertex and has height $h$, so since $h/d = \tan(180^{\circ} - \alpha)$, the bump does not intersect the supporting line.

\begin{figure}[t] %
  \begin{center}
  \includegraphics[width=0.6\columnwidth]{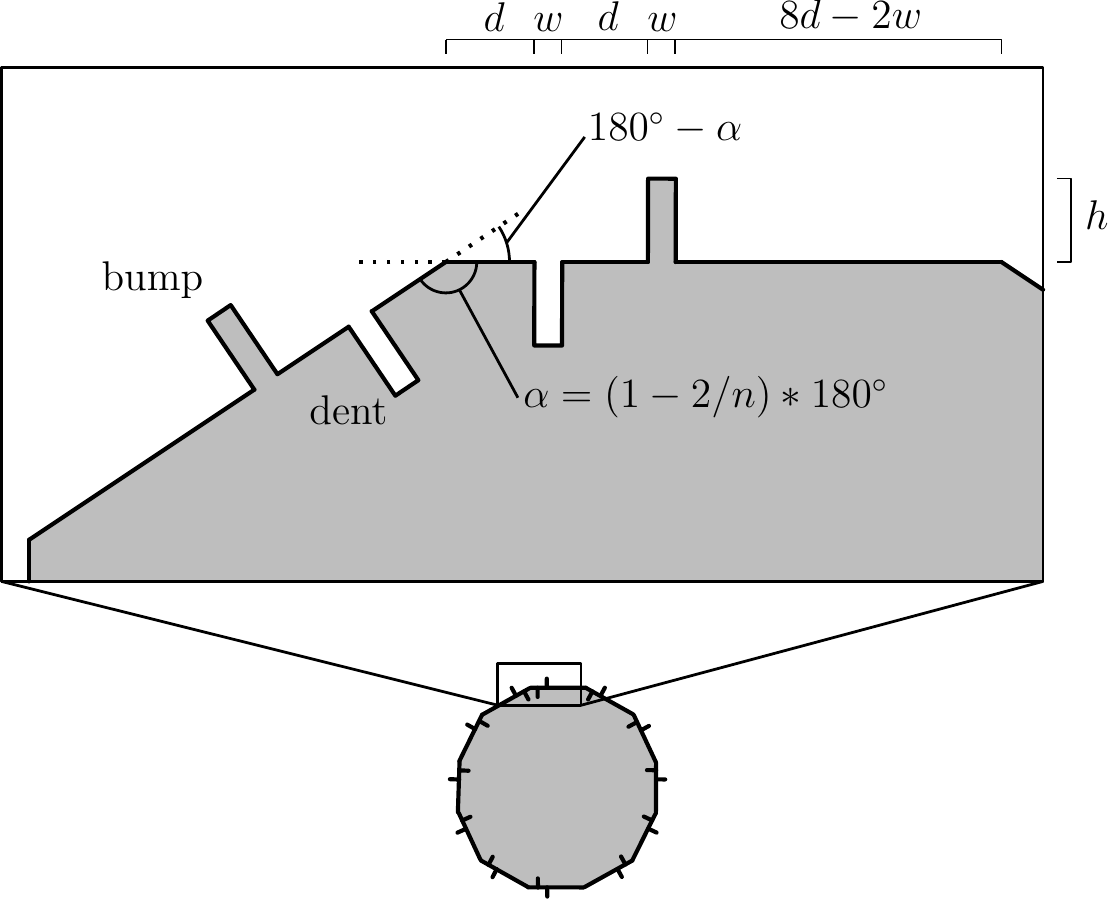}
  \caption{
  \label{fig:manygon-b-n-d-details}
  A pair of adjacent sides of an $n$-gon tile.}
  \end{center}
\end{figure}
\fi

Each $n$-gon tile $t \in F$ has a sequence of sides oriented CCW and starting with $s_0$, and an identity rotation and flip with $s_0$ horizontal at the top of the polygon.
Specifying each side of $t$ requires a glue label and location for the bump-dent geometry (either at the $CW$ or $CCW$ end of the side).

\subsection{Simulating the hTAM}\label{sec:many-gons}

\begin{lemma}\label{lemma:poly-sim}
For any hTAM system $\Gamma_h = (T, \tau, s)$ such that $T$ does not contain any $\tau$-strength glues and $|s|=3$, there is a free-body self-assembly system $\Gamma_f = (F, \tau, s_f)$ with $|F| = 1$ and $|s_f|=3$ that simulates $\Gamma_h$.
\end{lemma}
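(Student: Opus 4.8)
The plan is to realize each hexagonal tile type $h \in T$ as a distinct rotational (and possibly reflected) orientation of a single $n$-gon tile, where $n = 6c$ for the appropriate scale-related constant (here the hTAM is simulated at scale $1$, so morally $n = 6$, but we subdivide each hexagon side into many subsides to leave room for the glue geometry of Section~\ref{sec:glue-geometry}, so $n$ is a larger constant depending only on $|T|$). First I would fix a bijection between the $6$ sides of a hexagon in a fixed orientation and the $6$ "side-groups" of the $n$-gon, and between the $|T|$ hexagon tile types and $|T|$ admissible rotation angles $\theta_1, \dots, \theta_{|T|}$ of the single tile. On side-group $j$ of the tile, when the tile is placed at angle $\theta_i$, I would stamp the glue that hTAM tile $h_i$ carries on its side $j$, encoded via the bump-and-dent glue geometry of Section~\ref{sec:glue-geometry} so that only one generic strength-$1$ glue label is needed. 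Since $T$ has no $\tau$-strength glues by hypothesis, every attachment in $\Gamma_h$ is cooperative (at least two bonds), which is exactly what lets a single tile work: no single bond can ever trigger an attachment, so the tile is forced to "agree" with at least two already-placed neighbors about which rotation it is in, and those neighbors jointly pin down a unique hexagon tile type (or reject).

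The heart of the argument is the \emph{geometric rigidity} step: I must show that the only orientations in which the $n$-gon can attach $\tau$-stably to the seeded assembly are the intended ones $\{\theta_i\}$ (up to the lattice symmetry), and that in any such attachment the bond structure mirrors exactly the hexagonal bond structure. The large-scale bump/dent geometry of Section~\ref{sec:n-gons} is used here: Property 1 (interior dents don't collide) guarantees that two tiles \emph{can} sit on coincident hexagon-lattice sides, and Property 2 (a bump on a non-coinciding side never pokes into a neighbor) guarantees that placing a tile correctly never blocks a future correct placement. Conversely I need that any "misaligned" rotation — one not equal to some $\theta_i$, or a reflection when reflection is disallowed — causes a bump to collide with an already-placed tile's interior, hence is geometrically forbidden; the non-repeating subside lengths from Section~\ref{sec:glue-geometry} rule out spurious positive-strength bonds between mismatched sides (the mirrored-subside case being excluded precisely because the coarse geometry forbids the reflection that would be needed to align them). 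I would present this as a finite case check over the $\le 6$ lattice-edge incidences a new tile can have with its neighbors, in each case arguing (i) forbidden geometry blocks all unintended angles, and (ii) for the intended angles, the subside bonds sum to exactly the strength of the corresponding hexagon glue, so a $\tau$-stable attachment occurs iff the analogous hexagon tile attaches in $\Gamma_h$.

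Given rigidity, the simulation relation is then bookkeeping: define $\phi$ on tile orientations by $\phi(t, \theta_i, R) = h_i$ (and leave $\phi$ undefined / map to the "no hexagon" outcome on the disallowed orientations, which by rigidity never arise in producible assemblies), and check that $\phi$ induces the required bijection on bond graphs and commutes with single-tile addition in both directions — the "$\Rightarrow$" direction because any hTAM attachment of $h_i$ at a location lifts to attaching the $n$-gon at angle $\theta_i$ there (legal by Property 2), and the "$\Leftarrow$" direction because rigidity forces any $n$-gon attachment to be some $(\theta_i$-attachment$)$ whose image is a legal hTAM attachment. Finally, the seed: since $|s| = 3$ and $\Gamma_h$ has no $\tau$-glue, $s$ is a $3$-tile hexagonal seed; I take $s_f$ to be the $3$ corresponding $n$-gon tiles, each at the rotation encoding the hexagon tile it replaces, placed at the same lattice positions — this is $\tau$-stable by the same subside-bond accounting, and $|s_f| = 3$ as required. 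I expect the rigidity step (the finite geometric case analysis showing every unintended rotation/reflection is blocked by a bump collision while every intended one is unobstructed) to be the main obstacle; everything downstream is routine once it is in hand.
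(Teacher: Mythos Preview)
Your overall strategy matches the paper's: encode the $c=|T|$ hexagon tile types as $c$ rotations of a single regular $6c$-gon, with side $s_{j+kc}$ carrying the direction-$k$ glue of hexagon tile $j$, and take $s_f$ to be three suitably rotated copies of this polygon. Where your proposal falls short is exactly at the step you flag as the main obstacle, the rigidity argument, because you are missing the two concrete mechanisms the paper uses to obtain it.

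First, the paper does \emph{not} rely on bump collisions with a neighbor's interior, nor on the subside-length trick of Section~\ref{sec:glue-geometry}, to kill unwanted rotations. Properties~1 and~2 of Section~\ref{sec:n-gons} only certify that \emph{correct} placements are geometrically feasible; they say nothing about blocking incorrect ones. Instead, the paper first relabels every glue by its opposite-direction pair (so a North glue $a$ becomes $a_1$, a NW glue $b$ becomes $b_2$, etc.). This already cuts the twelve hexagon-compatible orientations of the $6c$-gon down to at most four that could possibly see matching glues on both cooperating sides: the intended one, its $180^\circ$ rotate, and (only when both input sides lie in the same direction pair) a flip and the flip's $180^\circ$ rotate. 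Without this relabeling, a rotation by $60^\circ$ --- which in your setup turns the NW glue of tile $j$ into the ``North'' position --- can bond perfectly well if that glue happens to appear on a South side somewhere in $T$, and nothing in your proposal rules this out.

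Second, the remaining three bad orientations are eliminated by the CCW/CW placement of the bump--dent pair: sides $s_0,\dots,s_{3c-1}$ (the N, NW, SW groups) carry the pair at the CCW end, and sides $s_{3c},\dots,s_{6c-1}$ (S, SE, NE) at the CW end. A side can become fully coincident with a neighbor's side only if the two placements are opposite, which holds for the intended orientation and fails for the $180^\circ$ rotation and both flips (the paper checks these three cases pictorially). Your proposal gestures at ``large-scale bump/dent geometry'' doing this job but never specifies the CCW/CW asymmetry, and the collision-with-interior picture you sketch is not what actually happens: the obstruction is that the two \emph{abutting} sides themselves fail to be fully coincident (forbidden or merely permitted geometry), so no bond forms there. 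Once you insert these two ingredients --- direction-pair glue labels and the CCW/CW split at side index $3c$ --- your case analysis collapses to the paper's short one and the rest of your outline (bond-graph bijection, seed construction) goes through as written. Note also that the paper's proof of this lemma keeps multiple glue types; the single-glue reduction of Section~\ref{sec:glue-geometry} is orthogonal and not invoked here.
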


\begin{proof} (Lemma~\ref{lemma:poly-sim})
To prove Lemma~\ref{lemma:poly-sim}, we provide the following construction to create $\Gamma_f$ from the definition of $\Gamma_h$.  Let $c = |T|$, the number of tile types in the simulated system.  $F$ will consist of a single $n$-gon tile type where $n = 6c$, created by utilizing the following transformations:

\ifabstract
\else
\begin{figure}[htp]
\centering
  \subfloat[][Assuming $T$ consists of $2$ tile types, the blue and green, the $12$-gon tile on the right would be created.]{%
        \label{fig:polygonal-tile-creation1}%
        \centering
        \includegraphics[width=.9\columnwidth]{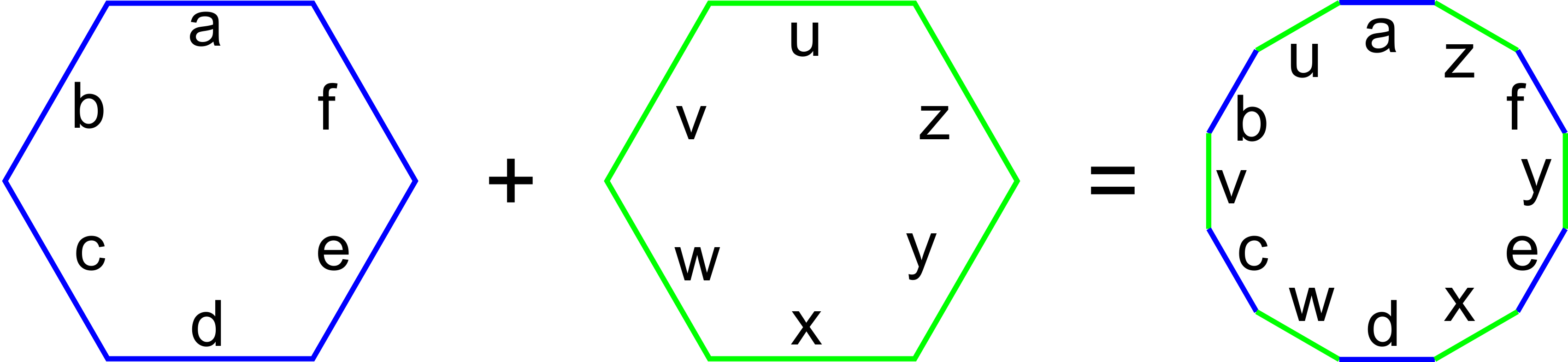}}%
        \hspace{25pt}
  \subfloat[][The $12$-gon inscribed in a hexagon, with the orientation which represents the blue tile (left), and the orientation representing the green tile (right).]{%
        \label{fig:polygonal-tile-creation2}%
        \centering
        \includegraphics[width=.6\columnwidth]{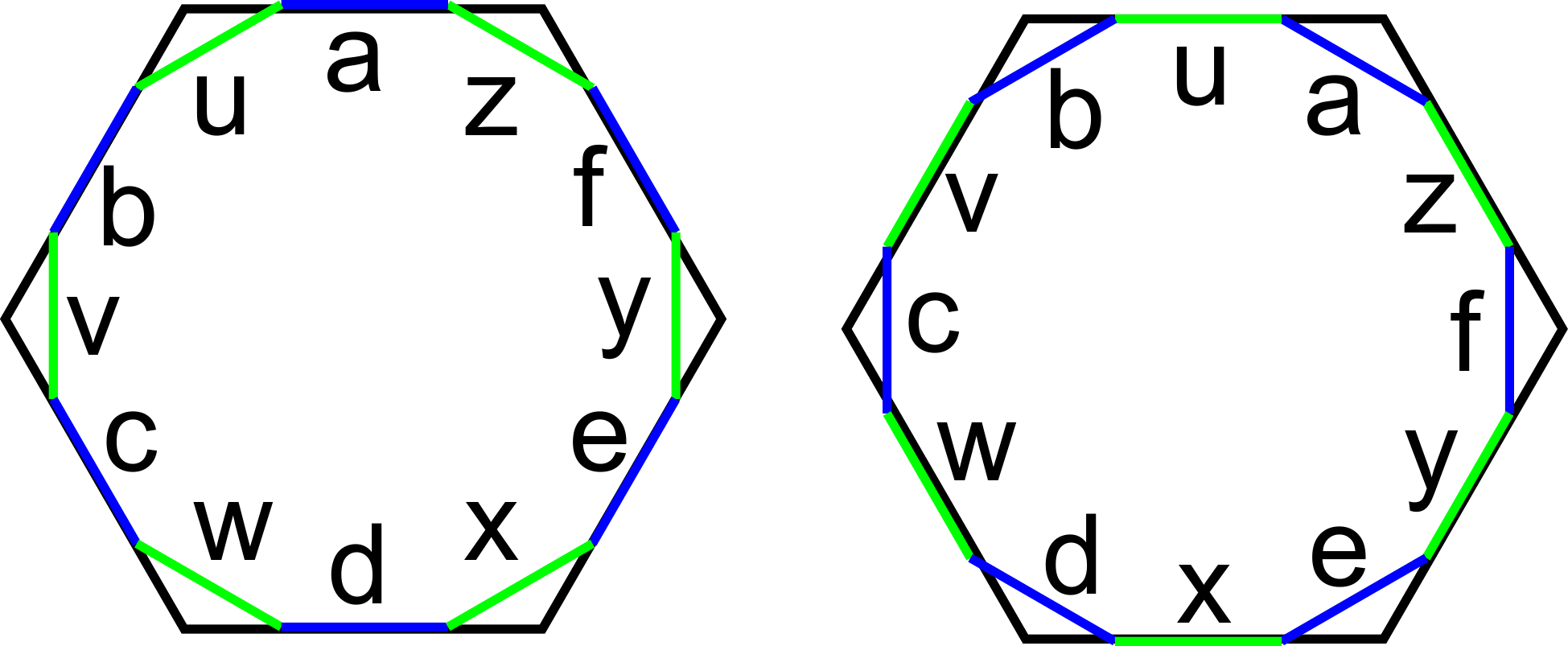}
        }%
  \caption{Example creation of a $12$-gon tile type.}
  \label{fig:polygonal-creation}
  \centering
\end{figure}
\fi

\begin{enumerate}

\item
\label{item-direction-pairs}
Create a new (hexagon) tile set $T'$ as follows: for each $t \in T$, add a new tile type $t'$ in $T'$ where every glue is also labeled according to the pair of opposite sides ($1$ for $N/S$, $2$ for $NW/SE$, $3$ for $SW/NE$) it is found on.
\ifabstract
\else
For instance, a set of six counterclockwise glues $a, b, a, c, d, b$ on $t$ would become $a_1, b_2, a_3, c_1, d_2, b_3$.
This ensures that a glue appearing on a side $s$ of a tile is only able to bind to another glue which is on either the same or opposite side of a tile.
\fi

\item
Let $p \in F$ be an $n$-gon tile type.
Orient $p$ in its identity orientation, with $s_0$ placed horizontally as the uppermost side.
For sides $s_j, 0 \leq j < c$, assign the North glue of the $j$th tile in $T$.
Similarly, for sides $s_{j+c}, s_{j+2c}, \ldots, s_{j+5c}$, assign the NW, SW, S, SE, and NE glues of the $j$th tile in $T$.
\ifabstract
\else
Thus, each consecutive group of $n/6$ sides, starting from $s_0$, represents the sides of one particular direction of the $c = n/6$ tiles in $T$.
This results in the sides of $p$ containing the $6$ glues of each tile type in $T$ in positions such that they coincide with the $6$ sides of the smallest hexagon into which $p$ can be inscribed (in some orientation).
See Figure~\ref{fig:polygonal-creation} for an example.
\fi

\item
The final modification to $p$ adds geometries consisting of an adjacent bump and dent pair per side at one of two locations ($CCW$ or $CW$) along the edge.
Assign the geometry locations for $s_j, j < 3c$ to be $CCW$, and the geometry locations of for $s_j, 3c \leq j$ to be $CW$.
\ifabstract
\else
Thus the geometry on the consecutive sides representing the North, Northwest, and Southwest sides of tiles in $T$ has one configuration, and the geometry on the other sides is distinct.
\fi

\end{enumerate}

To form the seed $s_f$, we define $s_1$, $s_2$, and $s_3$, where $s_1,s_2,s_3
\in T$, as the three tile types that form $s$.  We then take $3$ copies of the
$n$-gon $p \in F$, denoted by $s_{p1}$, $s_{p2}$, and $s_{p3}$.  Orient
$s_{p1}$ so that if it is inscribed within the smallest hexagon that contains
it, and that hexagon is oriented such that it has north and south sides, then
the $6$ sides of $s_{p1}$ that coincide with the bounding hexagon represent
the $6$ sides of $s_1$ in that same orientation.  Orient $s_{p2}$ and $s_{p3}$
similarly with respect to $s_2$ and $s_3$.  Now place $s_{p1}$, $s_{p2}$, and
$s_{p3}$ adjacent to each other (without additional rotation) in the same
relative configuration as $s_1$, $s_2$, and $s_3$ occur within $s$.  Since the
three adjacent pairs of edges in $s$ bind to form a $\tau$-stable assembly, by
the construction of $p$, the edges of the $3$ copies of $p$ in their current
rotations and positions are guaranteed to also bind to form the analogous
$\tau$-stable seed $s_f$.

Assembly proceeds by correctly oriented copies of $p$ binding with at least $2$ edges of $n$-gons in the growing assembly (since there are no $\tau$-strength glues).  As it has been shown that correct tile bindings will occur, we now show that incorrect bindings cannot occur.  Incorrect bindings must be prevented in the cases of 1. rotations that don't correspond to hexagonal tiles which should bind being allowed to bind, and 2. copies of $p$ which have flipped over (and possibly also rotated) being allowed to bind.

To correctly simulate non-rotatable hexagons, each side of $p$ must only bind to other sides that simulate the opposite side of a hexagon, i.e., sides of $p$ that represent N, NW, SW, S, SE, and NE sides of tiles from $T$ must bind only to sides of $p$ that represent S, SE, NE, N, NW, and SWsides, respectively.
As previously mentioned, this creates the pairs of opposite directions ($N/S$, $NW/SE$, and $SW/NE$).
Binding to a side representing a direction not in the correct pair is ensured by Step~\ref{item-direction-pairs} in the creation of $p$.
Correct binding to a complementary side is allowed by the existence of the correct glues and also the definition of the geometries.

Incorrect binding is prevented by a combination of geometries and direction-specific glues.
For each orientation corresponding to a hTAM tile, there are 11 others that are ``incorrect'', but related to the tile: they are the 11 orientations that correspond to rotations of flips of the hexagonal tile.
Because the simulating pfbTAM system uses distinct glues for each pair of opposite directions, and at least two bonds are needed for a tile to attach, there are at most four possible orientations of the 12 total that can possibly attach: 1. the valid orientation, 2. a $180^{\circ}$ rotation from valid, 3./4. a flip about the direction pair containing the two bonds, and a $180^{\circ}$ rotation of this flip.
Note that 3. and 4. only exist if only two bonds are needed to attach and they lie in the same direction pair, otherwise only 1. and 2. are possible.
Figures~\ref{fig:polygonal-tile-bad-rotation}, \ref{fig:polygonal-tile-bad-flip1}, and~\ref{fig:polygonal-tile-bad-flip2} show that Cases~2,~3, and~4 are all forbidden due to mismatch in geometries.
\begin{figure}[htp]
\centering
  \subfloat[][Two copies of $p$, with top rotated $180^\circ$ relative to bottom.]{%
        \label{fig:polygonal-tile-bad-rotation}%
        \centering
        \hspace{3pt}\includegraphics[width=0.8in]{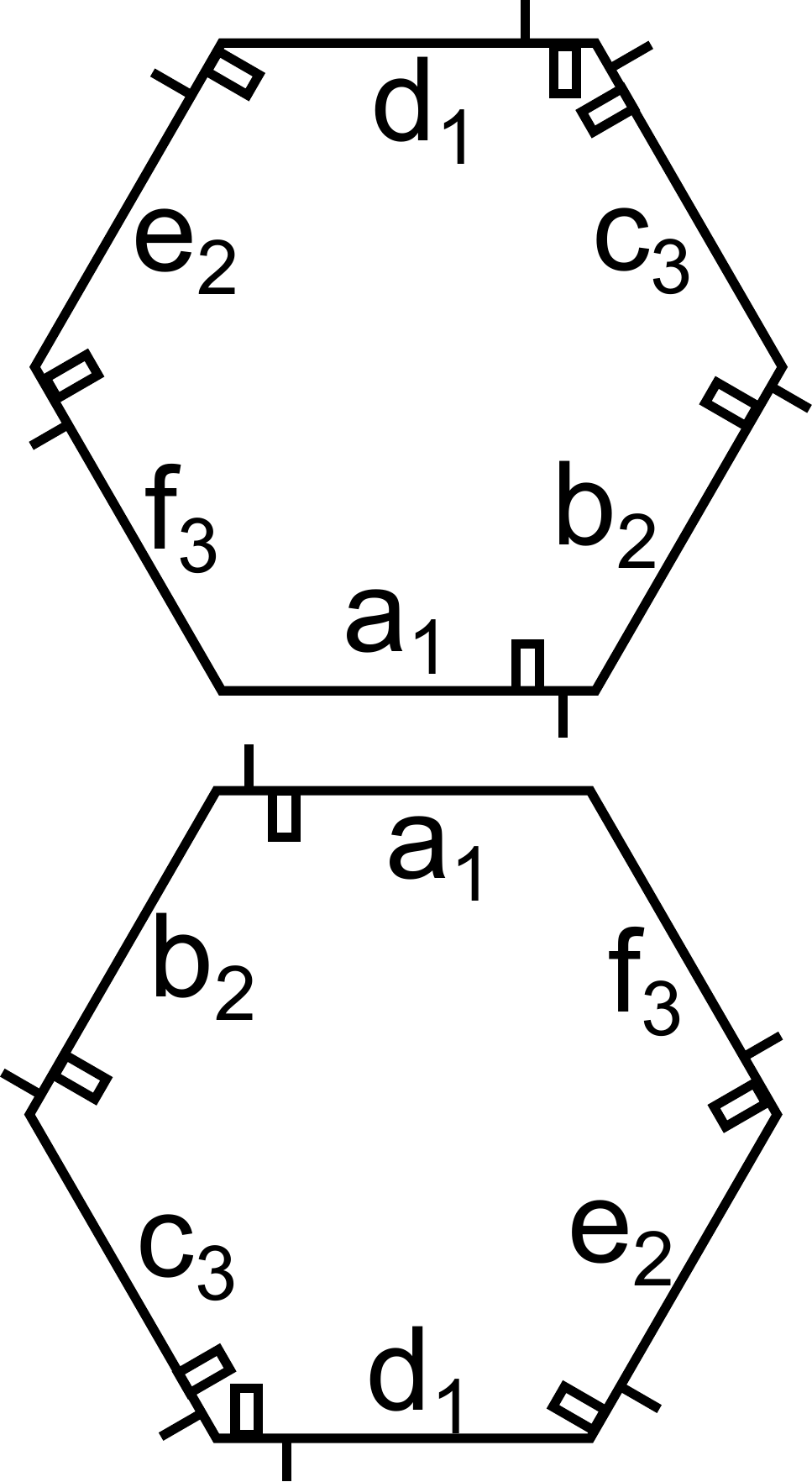}\hspace{6pt}}%
        \hspace{7pt}
  \subfloat[][Two copies of $p$, with top flipped
\ifabstract
\else
horizontally
\fi
and  rotated $180^{\circ}$.]{%
        \label{fig:polygonal-tile-bad-flip1}%
        \centering
        \hspace{2pt}\includegraphics[width=0.8in]{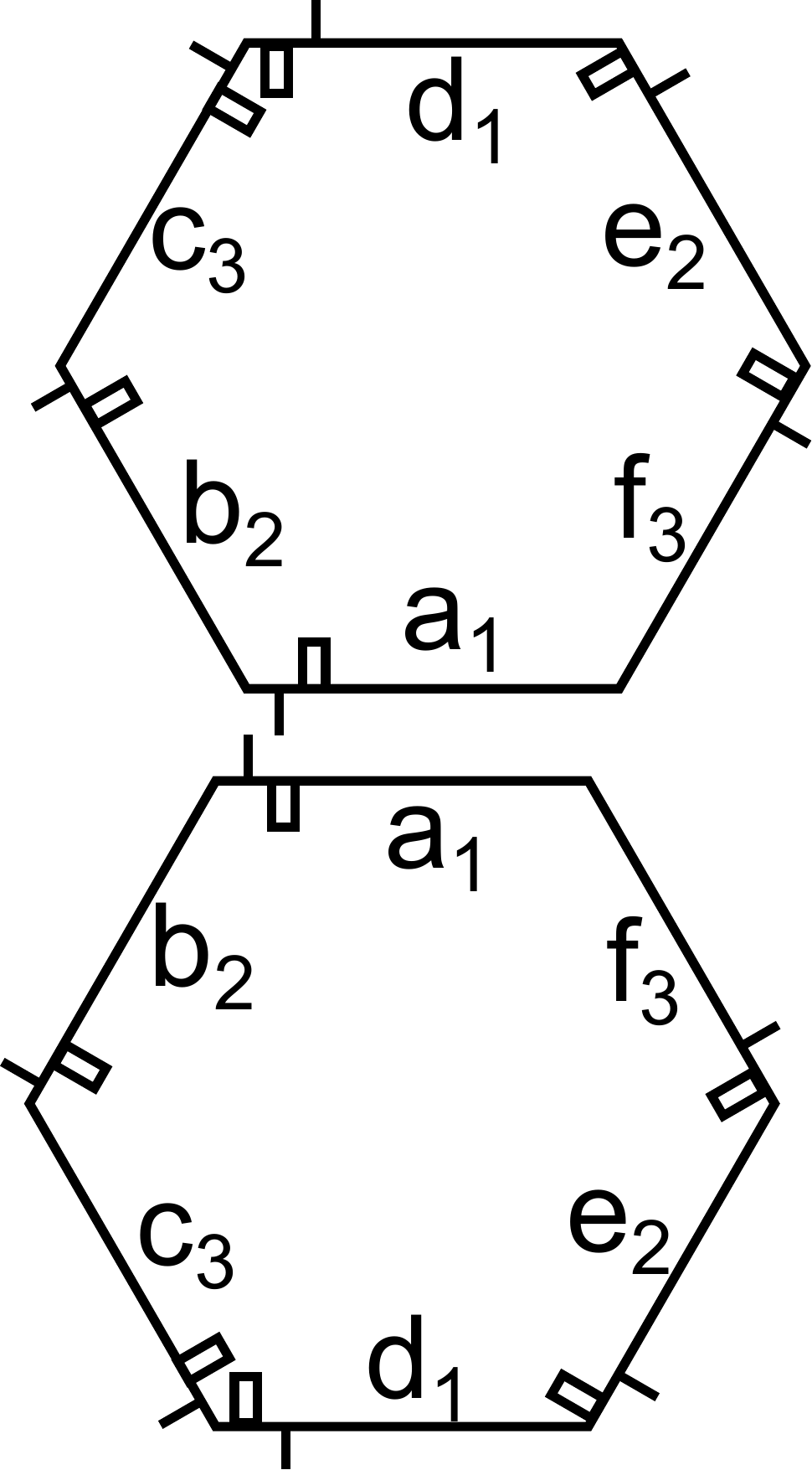}\hspace{2pt}
        }%
        \hspace{7pt}
  \subfloat[][Two copies of $p$, with top flipped horizontally.]{%
        \label{fig:polygonal-tile-bad-flip2}%
        \centering
        \hspace{2pt}\includegraphics[width=0.8in]{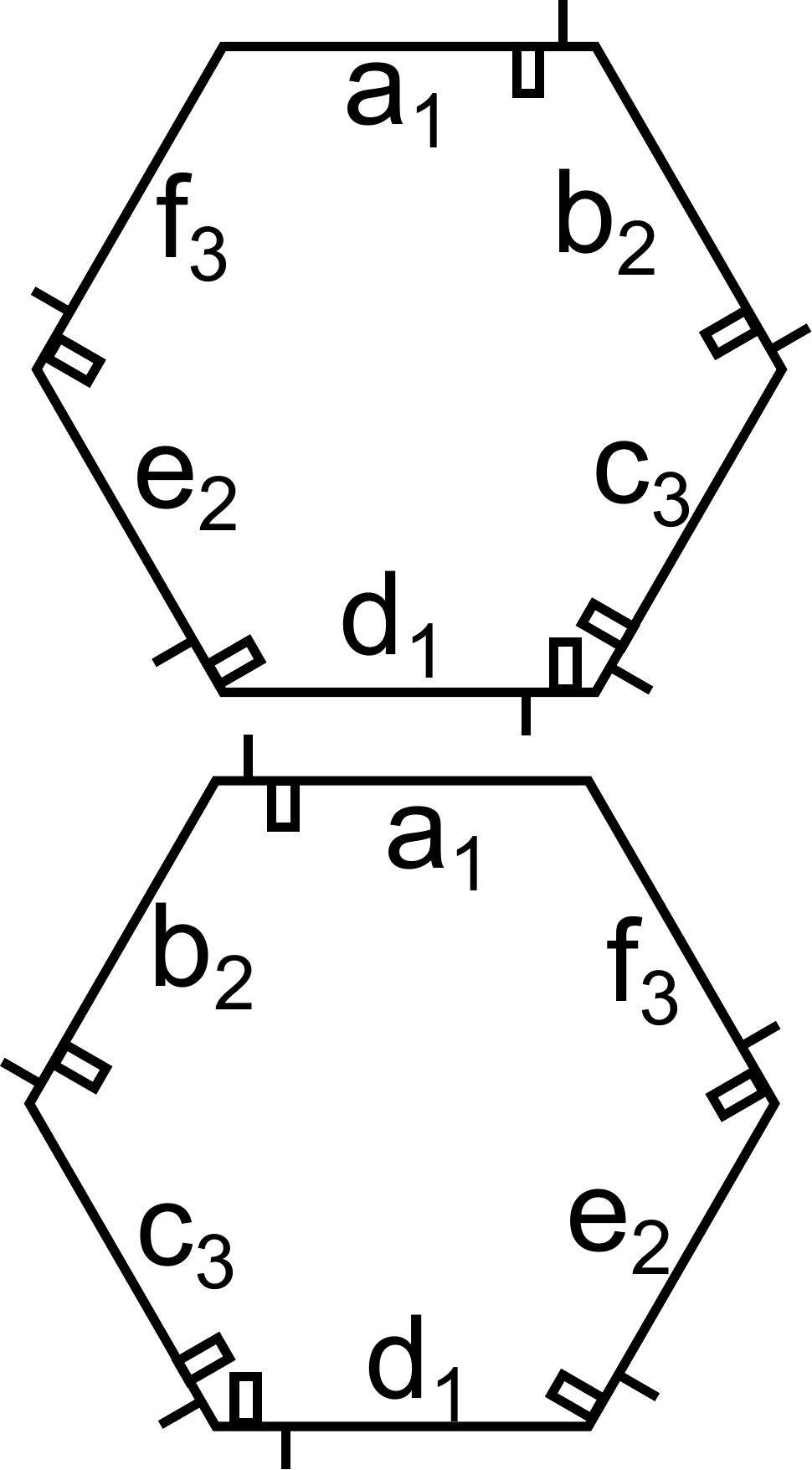}\hspace{2pt}
        }%
  \caption{Cases with glues in correct opposite-direction pairs, but binding is prohibited by geometry mismatches.}
  \label{fig:polygonal-bad-bindings}
  \centering
\ifabstract
\vspace*{-3mm}
\else
\vspace*{6mm}
\fi
\end{figure}
\end{proof}

\begin{theorem} (Universal Single Tile Simulation)\label{thm:single-tile-iusa}
There exists a single polygonal tile $p$ such that, for any aTAM system $\Gamma = (T,\tau,\sigma)$ where $|\sigma| = 1$ and $\tau > 1$, there is a $\tau$-stable seed assembly $\sigma'$ such that the pfbTAM system $\Gamma' = (\{ p\}, 2, \sigma')$ simulates $\Gamma$.
\end{theorem}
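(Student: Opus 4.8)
The plan is to obtain the universal tile $p$ by composing three simulations, arranging things so that the shape, glues, and surface geometry of $p$ depend only on a fixed universal aTAM tile set and not on the particular system $\Gamma$ being simulated.

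First I would invoke the intrinsic universality of the aTAM \cite{IUSA}: there is a single, fixed temperature-$2$ aTAM tile set $U$ such that any aTAM system $\Gamma=(T,\tau,\sigma)$ with $\tau>1$ can be compiled into a seed assembly $\sigma_\Gamma$ (essentially a line of tiles encoding $T$, $\tau$, and $\sigma$) for which $(U,2,\sigma_\Gamma)$ simulates $\Gamma$ at some ($\Gamma$-dependent) scale. Next I would apply Corollary~\ref{cor:hex} to the temperature-$2$ system $(U,2,\sigma_\Gamma)$; since $2>1$ this produces an hTAM system $(H_U,2,\sigma'_\Gamma)$ that simulates it at scale $3$, all of whose glues have strength $\lceil 2/2\rceil = 1 < 2$ (including the $9|\sigma_\Gamma|$ hard-coded seed-supertile tiles that form $\sigma'_\Gamma$). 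Because $|H_U| = O(|U|^2)$ and $U$ is fixed, $H_U$ is itself a fixed, universal hexagonal tile set.

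Then I would run the construction of Lemma~\ref{lemma:poly-sim} on $(H_U,2,\sigma'_\Gamma)$, using a mild generalization from seed size $|s|=3$ to an arbitrary finite seed. The single $n$-gon tile $p$ with $n = 6|H_U|$ is built exactly as in Lemma~\ref{lemma:poly-sim} (direction-tagged glues together with the bump/dent geometry that forbids the bad rotations and flips of Figures~\ref{fig:polygonal-tile-bad-rotation}--\ref{fig:polygonal-tile-bad-flip2}), and the seed $\sigma'$ is formed by placing $|\sigma'_\Gamma|$ copies of $p$ in the same relative positions as the hexagons of $\sigma'_\Gamma$, each rotated to the orientation encoding the hexagon it represents. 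The verification that $\sigma'$ is $\tau$-stable and that $(\{p\},2,\sigma')$ simulates $(H_U,2,\sigma'_\Gamma)$ carries over from the $|s|=3$ case: at each of its $6|H_U|$ admissible rotations, $p$ reproduces exactly the bonding behavior of the corresponding non-rotatable hexagon, while geometry and the direction-specific glues rule out every spurious orientation, so producible assemblies and their single-tile dynamics are in bijection. The key point is that $p = p(H_U)$ depends only on $U$, hence is the promised universal tile.

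Finally I would compose the three simulation relations --- $U$ simulates $\Gamma$ (aTAM-by-aTAM), $H_U$ simulates $U$ (hTAM-by-aTAM), and $\{p\}$ simulates $H_U$ (pfbTAM-by-hTAM) --- into a single pfbTAM-simulates-aTAM relation for $\Gamma' = (\{p\},2,\sigma')$, with representation functions composed block-wise and scale factors multiplied ($3$ times the IU scale times $1$). I expect the main obstacle to be exactly this transitivity bookkeeping under the simulation definitions of Section~\ref{sec:models}: checking that the composed block map still maps \emph{cleanly}, that a single upstream tile addition is matched by one-or-more downstream additions and conversely, and that the nested block structures (hexagon supertiles inside $n$-gon blocks inside IU macro-tiles) align --- rather than any new geometric idea, since all geometric content is inherited from Corollary~\ref{cor:hex} and Lemma~\ref{lemma:poly-sim}. (If one also wants $p$ to use a single glue type, apply the glue-in-geometry reduction of Section~\ref{sec:glue-geometry} to $p$; its hypothesis that reflections are forbidden is already guaranteed by the large bump/dent geometry of the Lemma~\ref{lemma:poly-sim} construction.)
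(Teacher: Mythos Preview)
Your proposal is correct and follows essentially the same approach as the paper: compose intrinsic universality of the aTAM (to get a fixed tile set $U$), Corollary~\ref{cor:hex} (to get a fixed low-strength hexagonal tile set $H$), and the $n$-gon construction of Lemma~\ref{lemma:poly-sim} (to get the fixed tile $p$), with only the seed depending on~$\Gamma$. The paper's proof is essentially the same three-line composition and does not spell out the transitivity bookkeeping you flag; one minor wording issue is that the $9|\sigma_\Gamma|$ hexagons in $\sigma'_\Gamma$ are not ``hard-coded'' new tile types but rather fully-formed supertiles built from the standard $O(|U|^2)$ tiles of $H_U$, which is exactly what makes $H_U$ (and hence $p$) independent of~$\Gamma$.
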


\begin{proof}%
By Theorem 3.1 of \cite{IUSA}, there is a single aTAM tile set which, when properly seeded and operating at temperature $\tau = 2$, can simulate any aTAM tile assembly system.  Let $U$ be that tile set.  Let $\mathcal{T}$ be an arbitrary aTAM tile assembly system.  By Theorem 3.1 of \cite{IUSA} there is an aTAM system $\mathcal{T'} = (U, 2, \sigma)$ which simulates $\mathcal{T}$.  Let $\mathcal{H} = (H, 2, \sigma_H)$ be the hTAM system which simulates $\mathcal{T'}$ without using strength 2 glues (as constructed using the construction in Section~\ref{sec:hex} and defining the seed $\sigma_H$ as the fully formed supertiles which represent $\sigma$ as mentioned for Corollary~\ref{cor:hex}).  Let $p$ be the polygonal tile type generated from $H$ using the steps in the proof of Lemma~\ref{lemma:poly-sim}, and $\sigma_p$ be a seed assembly of properly rotated and positioned copies of $p$ to represent $\sigma_H$. Then, it follows that $\mathcal{P} = (\{p\}, 2, \sigma_p)$ simulates $\mathcal{T}$.
\end{proof}

As $U$ is a fixed tile set, $p$ is also a fixed polygonal tile with a
constant number of sides.  With $T$ tiles in the simulated
aTAM system, the scale factor of our simulation is $O(|T|^4log|T|)$, as that
is the scaling factor for the simulation of $\mathcal{T}$ by $\mathcal{T'}$
(see~\cite{IUSA}), and the only further scaling is a constant (i.e.~$3\times
3$) for the simulation of $\mathcal{T'}$ by $\mathcal{H}$.

\ifabstract
\else
\subsection{A Self-Seeding pfbTAM}

In the construction described in~\ref{sec:many-gons}, the three-hexagon-tile seed assembly in an hTAM is simulated by a three-polygon-tile seed assembly in the pfbTAM.
In this section we describe an extension of this construction that also simulates any hTAM system with no strength-$\tau$ glues, but does not use a multi-polygon-tile seed.
Instead it has the more traditional single-tile seed, with the tradeoff that a single strength-$\tau$ bond is used.
Notice that if no $\tau$-strength bonds exist in a pfbTAM system (or any seeded aTAM-like system), assembling any multi-tile assembly requires a multi-tile seed, as no two-tile $\tau$-stable assemblies exist.

The construction will be similar to the simulation construction in that the tile is a polygon tile with glues on each side and small geometries.
However, several additional steps are used to create a polygon tile that simulates the hexagon system and has the following property: every new polygon tile that attaches to the seed assembly during the assembly process bonds using a side that lies in a particular $60^{\circ}$ wedge of the polygon tile.
To achieve this, the techniques of \cite{Versus} for their simulation results of Section 3, specifically ``minimal glue sets'' and ``inward-outward glues'', are utilized.

\begin{theorem}(Self-Seeding Single Tile Simulation)
\label{thm:self-seed-sim2}
For any aTAM system $\Gamma = (T,\tau,\sigma)$ with $|\sigma| = 1$ and $\tau > 1$, there is a pfbTAM system $\Gamma' = (F, \tau, \sigma')$ with $|F| = 1$ and $|\sigma'| = 1$ that simulates $\Gamma$.
\end{theorem}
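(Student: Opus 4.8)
The plan is to build on Lemma~\ref{lemma:poly-sim} (and hence Theorem~\ref{thm:single-tile-iusa}), which already gives a single polygonal tile simulating any hTAM system without $\tau$-strength glues, but starting from a three-tile seed. The only obstacle to a single-tile seed is precisely the absence of $\tau$-strength bonds: no two-tile $\tau$-stable assembly can exist, so some single bond in the simulating system must carry strength $\tau$. So I would start by applying Corollary~\ref{cor:hex} to obtain, from the given aTAM system $\Gamma = (T,\tau,\sigma)$ with $|\sigma|=1$, an hTAM system $\Gamma_h = (H,\tau,\sigma_h)$ simulating $\Gamma$ with all glues of strength $<\tau$; the superseed $\sigma_h$ is a nine-hexagon block encoding the single seed tile of $\Gamma$. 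My job is then to simulate $\Gamma_h$ by a single polygonal tile $p$ growing from a single copy of $p$.

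The key idea is to designate one special bond in the construction as the "ignition" bond and give it strength $\tau$, so that a second copy of $p$ can attach to the (lone) seed copy, after which all subsequent growth proceeds cooperatively as in Lemma~\ref{lemma:poly-sim}. Concretely, I would first "hard-wire" the nine-hexagon superseed of $\Gamma_h$ into nine distinguished rotation/flip states of $p$ (just as the proof of Lemma~\ref{lemma:poly-sim} already maps a three-hexagon seed to three oriented copies of $p$, except now we need the full nine-tile supertile built from a single starting copy). To bootstrap this, I would pick the first internal bond of the superseed to be strength $\tau$ and all others strength $\lceil \tau/2\rceil$ (which matches the hexagon construction); the single seed copy of $p$ is placed in the orientation of, say, the center hexagon, the second copy attaches via the unique $\tau$-strength side, and then the remaining seven copies of $p$ fill in cooperatively using the $\lceil\tau/2\rceil$ glues, exactly reproducing $\sigma_h$. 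From that point, external growth of new supertiles is governed by the strength-$<\tau$ glue machinery, so the dynamics are unchanged from Lemma~\ref{lemma:poly-sim}.

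The main obstacle is controlling what the newly available $\tau$-strength glue can do: a single strong glue means a copy of $p$ could in principle attach anywhere that side is exposed, in any of the many rotation/flip states, potentially seeding spurious assemblies or blocking the correct filling-out of the superseed (the same "tentacle" issue addressed in the hexagon seed-structure paragraph, but now more acute because of the strong bond). This is exactly the difficulty the excerpt flags when it says "it is even a challenge to get the next copy of the tile to attach without allowing the construction to grow indefinitely." To handle it I would borrow the "minimal glue sets" and "inward-outward glues" techniques of~\cite{Versus}, as the theorem statement's surrounding text indicates: arrange the geometry and glue labels so that (i) the $\tau$-strength side can be presented to a new tile only from within a single $60^\circ$ wedge of the already-placed tiles, (ii) the strong glue on an incoming tile is of an "inward" type that is never exposed "outward" once the tile is in place, so each strong bond is used at most once and cannot propagate, and (iii) small surface geometry forbids the flipped/rotated orientations of $p$ from engaging the strong side, just as in Figures~\ref{fig:polygonal-tile-bad-rotation}--\ref{fig:polygonal-tile-bad-flip2}.

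With those safeguards, I would verify correctness in the usual two parts. For production: show that from the single seed copy of $p$, the only producible assemblies are (a) the partial and complete superseed blocks, and (b) exactly the assemblies producible by $\Gamma_h$ (modulo the block encoding), by checking that no stray strong or cooperative bond creates a block not representing a hexagon of $\Gamma_h$ and that tentacles growing back toward the superseed can place at most harmless input-location tiles and then stall, as in the hexagon seed argument. For dynamics: show the step-by-step correspondence $\pi \to_\ast \pi'$ with single hexagon additions $\sigma_h \to_1 \cdots$, which is inherited from Lemma~\ref{lemma:poly-sim} for all non-seed steps and verified directly for the finitely many seed-assembly steps. Composing the simulation of $\Gamma$ by $\Gamma_h$ (Corollary~\ref{cor:hex}) with the simulation of $\Gamma_h$ by $\Gamma'=(\{p\},\tau,\{p\})$ then yields a single-tile, single-seed pfbTAM system simulating $\Gamma$, completing the proof; taking $\Gamma$ to be the intrinsically universal aTAM system of~\cite{IUSA} (as in Theorem~\ref{thm:single-tile-iusa}) upgrades this to a fixed self-seeding universal tile.
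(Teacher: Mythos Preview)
Your plan matches the paper's approach closely: reduce to the no-$\tau$-strength hTAM, simulate that with a single polygon as in Lemma~\ref{lemma:poly-sim}, introduce exactly one $\tau$-strength ``ignition'' bond to self-seed, and use the minimal-glue-set/\emph{IN}--\emph{OUT} machinery of~\cite{Versus} together with a $60^{\circ}$-wedge invariant to keep that strong bond from being reused. Two points of divergence are worth flagging.

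First, a cosmetic one: the paper works from the three-hexagon seed of Lemma~\ref{lem:hex} (not the nine-hexagon block of Corollary~\ref{cor:hex}), and only adds \emph{two} extra hexagon-orientations to the polygon for self-seeding---the pair of seed hexagons that share the designated $\tau$-strength edge---rather than hard-wiring all nine. This does not affect correctness, only economy.

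Second, and more substantively, your item~(ii) misdescribes the actual containment mechanism. The $\tau$-strength side in the paper is \emph{not} marked \emph{IN}; if it were, the \emph{IN}--\emph{IN} geometric prohibition would forbid the very first bond between the two seed copies. Instead, the two $\tau$-strength sides carry \emph{no} \emph{IN}/\emph{OUT} geometry at all, so the only sides they geometrically match are each other. What prevents later reuse is purely positional: both $\tau$-strength sides are placed near the $30^{\circ}$ bisector of the $[0^{\circ},60^{\circ})$ wedge, and the io-hTAM rotation trick guarantees that every subsequent attaching tile must use an \emph{IN} side somewhere in that wedge; the neighbor bonded along that \emph{IN} side then physically occludes the nearby $\tau$-strength sides. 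So the invariant you need is not ``the strong glue is an inward type,'' but ``every attachment consumes a side within $60^{\circ}$ of the strong side, hence blocks it.'' Once you state the mechanism this way, your verification sketch goes through as in the paper.
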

\fi

\ifabstract
A natural question raised by Theorem~\ref{thm:self-seed-sim2} is whether the three-tile seed can be eliminated in favor of a single-tile seed that uses a strength-2 glue to ``self-seed'' and form a desired three-tile seed that then begins the simulation.
Indeed, such a self-seeding single-tile pfbTAM system is possible by first converting the hTAM system to a more constrained type of hexagonal tile system (which we call the io-hTAM), and then carefully simulating the io-hTAM system in a way that prevents the strength-2 glue from forming more than a \emph{single} bond in the assembly: the $\tau$-stable attaching the first two tiles in the three-tile seed. 
This gives the desired result:

\begin{theorem}(Self-Seeding Single Tile Simulation)
\label{thm:self-seed-sim2}
For any aTAM system $\Gamma = (T,\tau,\sigma)$ with $|\sigma| = 1$ and $\tau > 1$, there exists a pfbTAM system $\Gamma' = (F, \tau, \sigma')$ with $|F| = 1$ and $|\sigma'| = 1$ that simulates $\Gamma$.
\end{theorem}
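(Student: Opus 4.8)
The plan is to build directly on Lemma~\ref{lemma:poly-sim} and Corollary~\ref{cor:hex}. Given an aTAM system $\Gamma=(T,\tau,\sigma)$ with $|\sigma|=1$, first apply Corollary~\ref{cor:hex} to obtain an equivalent hTAM system $\Gamma_h=(H,\tau,\sigma_H)$ all of whose glues have strength $<\tau$, and then encode $H$ in the sides of a single rotatable $n$-gon as in Lemma~\ref{lemma:poly-sim}, which gives a single-tile pfbTAM system $\Gamma_f=(F,\tau,s_f)$ with $|s_f|=3$ that simulates $\Gamma_h$ and hence $\Gamma$. The only thing standing between this and the theorem is the three-tile seed: as observed above, a pfbTAM system with no strength-$\tau$ glue has no $\tau$-stable two-tile assembly, so a single-tile seed cannot grow at all. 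Thus we must reintroduce exactly one strength-$\tau$ glue to nucleate a three-tile seed ``in place,'' while guaranteeing that this glue never participates in more than the \emph{single} bond that attaches the first two copies of the tile — otherwise the low-strength analysis underlying Lemma~\ref{lemma:poly-sim} (and hence correctness of the simulation) would be destroyed.

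First I would preprocess $\Gamma_h$ into a more rigidly structured hexagonal system, call it the \emph{io-hTAM}, in which every non-seed tile attaches through a designated \emph{input} side (or cooperating pair of sides) and exposes its remaining sides as \emph{outputs}, so that growth has a consistent local orientation. This is obtained by the ``minimal glue set'' and ``inward--outward glue'' transformations of~\cite{Versus}: splitting each glue into an inward copy and an outward copy, combined with renaming glues by side-direction (as already done in Step~\ref{item-direction-pairs} of Lemma~\ref{lemma:poly-sim}), forces each attachment to read its neighbours as inputs and to present only fresh outputs, at a constant blowup in tile count and scale. Encoding this io-hTAM system in a single $n$-gon $p$ as in Lemma~\ref{lemma:poly-sim} then yields the key extra property: every copy of $p$ that attaches to a growing assembly bonds through a side lying in one fixed $60^\circ$ wedge of $p$ (its ``input wedge''), and in particular the output side used by the parent always lies \emph{outside} the input wedge.

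Next I would designate one particular parent-output side / child-input side pair of $p$ and upgrade the two coincident glues there so they collectively have strength $\tau$, leaving every other glue and geometry unchanged, and set $\sigma' = \{p\}$ in a fixed orientation. Then two copies of $p$ can bond $\tau$-stably through this special glue, and I would check that geometry together with the direction-specific glues forces precisely the third copy of the Lemma~\ref{lemma:poly-sim} seed to attach next (via two strength-$<\tau$ bonds), reconstituting $s_f$, after which assembly proceeds identically to that construction. The crucial verification is that the strength-$\tau$ glue is \emph{single-use}: since it sits at an output side of the attaching tile (hence outside that tile's input wedge) and at the input side of a tile that must still satisfy its io-hTAM constraints, any attempt to use it a second time either (a)~needs a further cooperating strength-$<\tau$ bond that is geometrically unavailable, (b)~would place a tile whose input wedge is already occupied, or (c)~is forbidden by the bump-and-dent geometry of Lemma~\ref{lemma:poly-sim}. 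Hence no stray $\tau$-nucleated assembly — and in particular no rogue ``second seed'' — can form, the producible assemblies of $\Gamma'$ coincide with those of $\Gamma_f$, and simulation of $\Gamma$ follows. Note that, unlike Theorem~\ref{thm:single-tile-iusa}, no appeal to intrinsic universality is needed here, since the tile $p$ is allowed to depend on $\Gamma$.

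The main obstacle I expect is exactly this single-use argument: ruling out \emph{every} way a copy of $p$ could exploit the strength-$\tau$ glue more than once, including copies that are flipped and/or rotated so that the special subside geometry happens to re-align, and long ``tentacles'' that grow from a legitimately placed tile, wrap around, and try to present the special side again near the seed. The flipped/rotated cases should reduce to the geometry-mismatch arguments of Figures~\ref{fig:polygonal-tile-bad-rotation}--\ref{fig:polygonal-tile-bad-flip2}, and the tentacle cases to the ``the input position is already claimed, so growth terminates harmlessly'' style of argument used in the seed analysis of Lemma~\ref{lem:hex}; making the io-hTAM reduction and the $60^\circ$-wedge property precise enough to drive these case analyses cleanly is where most of the work lies.
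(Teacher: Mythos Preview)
Your overall plan is essentially the paper's: reduce to an io-hTAM via the minimal-glue-set / IN--OUT marking of~\cite{Versus} (the paper does this explicitly, proving as Lemma~\ref{lem:mod-htam-equiv-and-out} that every exposed side in an io-hTAM assembly is OUT), cyclically shift each hexagon's six-side block so that an IN side lands in a fixed $[0^\circ,60^\circ)$ arc of the polygon, and then introduce a strength-$\tau$ glue for nucleation together with a single-use guarantee.

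The gap is in where you put the $\tau$-glue. You place one instance on an input side (inside the wedge) and the other on an output side (outside it), and your cases (a)--(c) do not cover the obvious attack. After the first two copies bond, say parent-output at polygon position $O$ to child-input at polygon position $I$, the parent's own side $I$ and the child's own side $O$ remain exposed whenever $|O-I|\ge 60^\circ$, since only sides within $60^\circ$ of a bonding side get trapped in the inter-tile wedge. A fresh copy can now attach its side $I$ to the exposed side $O$: a single strength-$\tau$ bond suffices, so (a) fails; the fresh copy's input wedge is empty before this attachment, so (b) fails; and IN--OUT geometry is matching, so (c) fails. Iterating gives unbounded $\tau$-nucleated growth that has nothing to do with~$\Gamma$.

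The paper's fix is precisely to avoid putting any $\tau$-glue instance outside the wedge. It adds two \emph{new} six-side sets (twelve extra polygon sides) encoding two of the three seed hexagons, places the $\tau$-glue side of each adjacent to the bisector $\sim 30^\circ$ of the $[0^\circ,60^\circ)$ wedge with \emph{no} IN/OUT geometry, and marks the other five sides of each new set OUT. Because both $\tau$-glue sides sit next to each other near $30^\circ$, bonding via one immediately blocks the other on each of the first two tiles; and for every subsequent attachment the IN side used lies in $[0^\circ,60^\circ)$, so the $\tau$-glue sides at $\sim 30^\circ$ are again within $60^\circ$ of it and hence geometrically blocked. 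The single-use property thus falls out of proximity inside the wedge alone, with no tentacle or rogue-orientation case analysis needed.
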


\else
\paragraph{The io-hTAM}

Let $\Gamma_h = (T, \tau, s), |s| = 3$ be the hTAM system being simulated.
For each hexagon tile $t \in T$, define a \emph{minimal glue set} to be a subset of the sides of $t$ whose glues have total strength at least $\tau$ such that any strict subset of these sides has total glue strength less than $\tau$.
Create a second hTAM system $\Gamma_h' = (T', \tau, s'), |s'| = 3$ where for each tile in $T$, a set of tiles in $T'$ are created, one for each minimal glue set.
For each tile created, the glue on each side in the minimal glue set is marked $IN$, and the glues of all other sides are marked $OUT$.
We define the io-hTAM model to be identical to the hTAM model except that assembly proceeds under the following additional constraints: no pair of $IN$ sides from distinct tiles may be adjacent, and any pair of adjacent $OUT$ sides form a strength-$0$ bond (but are permitted to touch).
Add three additional tiles to $T'$ generated from the three tiles in $s$ by marking one side per tile $IN$ and all others $OUT$ such that the these three tiles when placed in the same configuration as $s$ form a $\tau$-stable three-tile assembly.
Define this configuration of these three tiles as $s'$.
Define two assemblies $\alpha, \alpha'$ producible by $\Gamma_h, \Gamma_h'$ to be \emph{equivalent} if replacing each tile $\alpha'$ with the tile generating it in $\Gamma_h$ yields $\alpha$.

\begin{lemma}
\label{lem:mod-htam-equiv-and-out}
An assembly is producible by $\Gamma_h$ in the hTAM model if and only if there is an equivalent assembly producible by $\Gamma_h'$ in the io-hTAM model.
Moreover, every exposed side on every assembly produced by $\Gamma_h'$ in the io-hTAM model is marked $OUT$.
\end{lemma}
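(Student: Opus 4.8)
The plan is to prove the equivalence and the ``moreover'' statement together, by induction on the length of production sequences, carrying along the invariant $(\star)$: \emph{in every producible assembly of $\Gamma_h'$ in the io-hTAM, every exposed side is marked $OUT$}. Throughout, note that io-hTAM tiles have the same hexagon shapes as their $\Gamma_h$-generators, so geometric placement is never an issue; only glue strengths and the $IN$/$OUT$ rules matter. The base case for $(\star)$ is the seed $s'$: by construction each of its three tiles has a single $IN$ side and all others $OUT$, and one checks that this marking keeps the configuration $\tau$-stable and already has all exposed sides $OUT$ --- this is where one uses that the $3$-tile hTAM seed is a triangle of hexagons whose three pairwise bonds each have strength $\lceil\tau/2\rceil$ (no strength-$\tau$ glue is available), so the $IN$ marks can be assigned cyclically around the triangle, making every bonded edge an $IN$--$OUT$ pair and leaving four $OUT$ sides exposed per tile.

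Forward direction ($\Gamma_h \Rightarrow \Gamma_h'$). Given a production sequence $s=\beta_0\to_1\cdots\to_1\beta_n=\alpha$ in the hTAM, build $s'=\beta_0'\to_1\cdots\to_1\beta_n'=\alpha'$ in the io-hTAM with $\beta_i'$ equivalent to $\beta_i$. When tile $t$ attaches to $\beta_i$ at position $p$, its bonded sides have total strength $\ge\tau$; choose $M$ among those bonded sides inclusion-minimal subject to total strength $\ge\tau$. Every strict subset of $M$ lies inside the bonded sides, so by minimality has strength $<\tau$; hence $M$ is a minimal glue set of $t$, and there is a $t'\in T'$ generating $t$ whose $IN$ sides are exactly $M$. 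Attach $t'$ at $p$ in $\beta_i'$: by $(\star)$ every side of $\beta_i'$ that $t'$ meets was exposed and is $OUT$, so the $IN$ sides of $t'$ meet $OUT$ sides (legal, bonding with the original glue strength), the remaining touching sides of $t'$ are $OUT$ meeting $OUT$ (legal, strength $0$), and no $IN$--$IN$ adjacency is created; the binding strength is $\sum_{\text{sides in }M}(\text{original strength})\ge\tau$. After the attachment all $IN$ sides of $t'$ are bonded, so $t'$ exposes only $OUT$ sides and no previously non-exposed $IN$ side becomes exposed, preserving $(\star)$; and $\beta_{i+1}'$ is $\tau$-stable because any cut either splits the $\tau$-stable seed $s'$, or, letting $q$ be the first-placed tile on the side of the cut not containing $s'$, the $\ge\tau$-strength bonds $q$ formed when placed all cross the cut.

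Reverse direction ($\Gamma_h' \Rightarrow \Gamma_h$) and the ``moreover'' statement. Given an io-hTAM production sequence $s'=\beta_0'\to_1\cdots\to_1\beta_n'=\alpha'$, first extend $(\star)$ along it: in the inductive step, when $t'$ attaches to $\beta_i'$ with strength $\ge\tau$, by $(\star)$ for $\beta_i'$ the sides $t'$ meets are $OUT$, so $t'$ can gain positive strength only from $IN$ sides bonding to glue-matching $OUT$ sides ($OUT$--$OUT$ gives $0$, $IN$--$IN$ is forbidden); since the $IN$ sides of $t'$ form a \emph{minimal} glue set, if even one failed to find a matching partner the total would be $<\tau$, contradicting a valid attachment --- so all $IN$ sides of $t'$ bond, $t'$ exposes only $OUT$ sides, and $(\star)$ persists. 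This proves the ``moreover'' claim, and it also gives the replacement sequence $s\to_1\cdots\to_1\alpha$ in $\Gamma_h$ (replace each $t'$ by its generator $t$): at step $i$ the sides of $t$ corresponding to the $IN$ sides of $t'$ carry the same glues and, having bonded in $\beta_i'$, match the coincident glues of $\beta_i$, so $t$ attaches to $\beta_i$ at $p$ with strength $\ge\tau$; and $\beta_{i+1}$ is $\tau$-stable because passing from $\beta_{i+1}'$ to $\beta_{i+1}$ only turns strength-$0$ $OUT$--$OUT$ contacts into (possibly) positive bonds and changes nothing else, so every cut breaks at least as much strength in $\beta_{i+1}$ as in $\beta_{i+1}'$.

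The main obstacle is the coupling between the \emph{minimality} of glue sets and the invariant $(\star)$: one needs $(\star)$ to guarantee that an attaching tile's $IN$ sides only ever meet $OUT$ sides, and one needs minimality to conclude that all of those $IN$ sides in fact bond --- which is exactly what restores $(\star)$ and what lets the replacement tile attach in $\Gamma_h$ with the full required strength. A secondary point demanding care is the seed base case described above, and the routine but necessary verification that $\tau$-stability transfers in both directions (trivially toward $\Gamma_h$, and via the ``first tile placed across the cut'' argument toward $\Gamma_h'$).
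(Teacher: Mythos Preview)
Your proof is correct and follows essentially the same approach as the paper: induction on the production sequence, carrying the invariant that every exposed side of a producible io-hTAM assembly is marked $OUT$, using minimality of glue sets both to select the right $t'$ in the forward direction and to force all $IN$ sides of an attaching $t'$ to bond in the reverse direction. You supply considerably more detail than the paper does (the cyclic $IN$/$OUT$ marking of the triangular seed, the explicit $\tau$-stability checks via the first-placed-tile argument), but the skeleton and the key ideas coincide.
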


\begin{proof}
The proof is by induction.
Recall that the seed $s'$ is constructed such that all exposed sides are marked $OUT$.
Now consider producing an assembly $\beta'$ in $\Gamma_h'$ by adding a tile $t'$ to an existing assembly $\alpha'$ where 1. $\beta'$ has an equivalent assembly $\beta$ producible by $\Gamma_h$ by adding a tile $t$, and 2. all exposed sides are marked $OUT$.

Such a tile $t'$ exists, as the set of bonds used by $t$ to attach to $\alpha$ contains a minimal glue set also sufficient to bind and $T'$ contains a tile where this set of sides is marked $IN$ (and all others are marked $OUT$).
Moreover, any tile $t' \in T'$ was generated by some tile in $T$ and if $t'$ can bind in the io-hTAM, this tile must be able to bind in the hTAM, as its binding requirements are strictly weaker.
Additionally, any tile $t'$ attaching uses all $IN$ sides to bond, so any exposed side on $t'$ after attaching to $\alpha'$ to form $\beta'$ must be $OUT$.
So all sides of $\beta'$ are marked $OUT$.
\end{proof}

\paragraph{Implementing the io-hTAM in the pfbTAM}

Next, we consider implementing an io-hTAM system as a pfbTAM system with a single polygon tile.
To do so, we follow the approach in the previous simulation construction with the addition of new geometry to each side to enforce the $IN$-$OUT$ side constraints.
Recall that we want two sides with $IN$/$OUT$ markings to attach using the following rules:

\begin{itemize}
\item Two $IN$ sides are not permitted to meet (forbidden).
\item A pair with opposite marks can meet and bond according to glue strength (matching).
\item Two $OUT$ sides can meet but not form a positive strength bond (permitted).
\end{itemize}

Implementing this as geometry is simple: an $IN$-marked side has a small bump, while an $OUT$-marked side has a matching dent.
See Figure~\ref{fig:manygon-b-n-d-extension-details} for a detailed depiction of the geometry.

\begin{figure}[htp]
    \begin{center}
    \includegraphics[width=0.6\columnwidth]{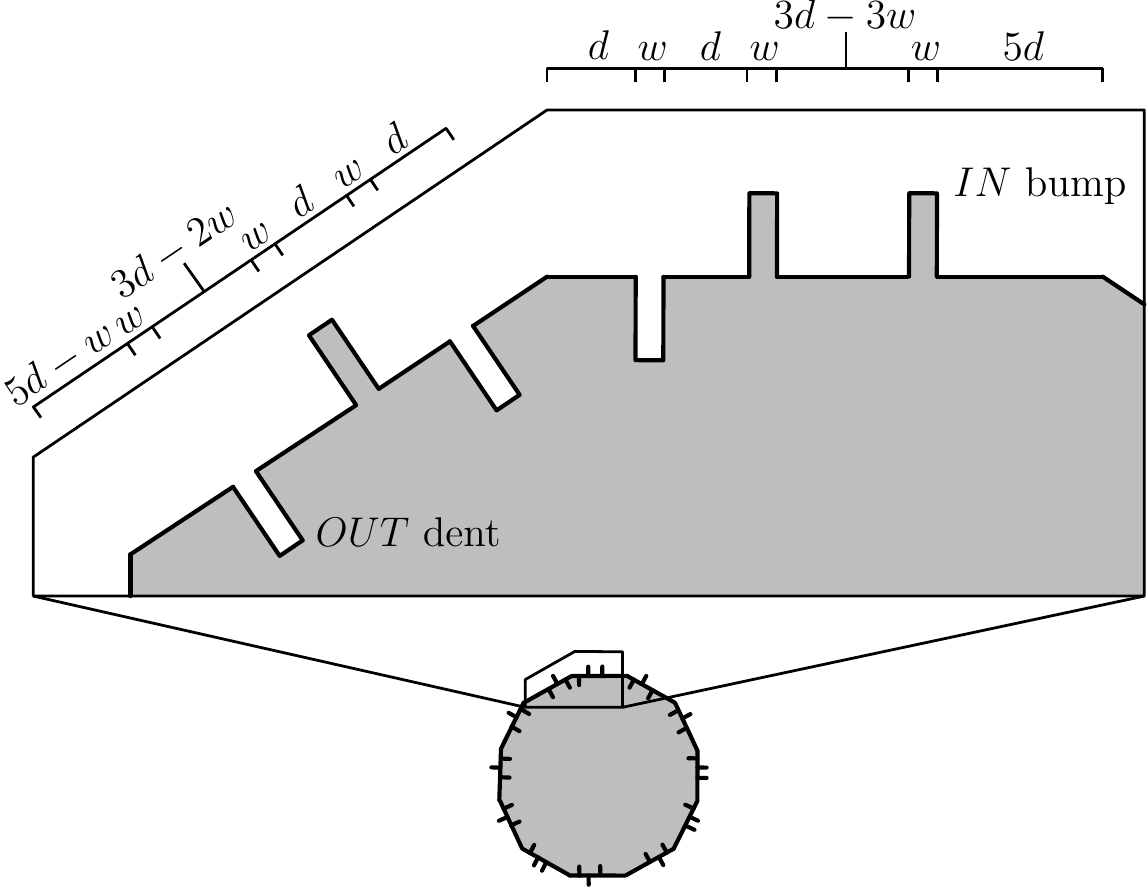}
    \caption{
    \label{fig:manygon-b-n-d-extension-details}
    A pair of adjacent sides of an $n$-gon tile simulating a tile system in the io-hTAM model.}
    \end{center}
\end{figure}

In this case two sides with $IN$ cannot meet due to geometry, $IN$/$OUT$ pairs of sides can meet and can bond, and two sides with $OUT$ can meet, but do not have matching geometry and thus do not form a bond.
We place the bump or dent at the bisector of each side, with the height (or depth) and width equal to that of the bump and dent system implemented in the previous construction.
Now construct a single-tile pfbTAM system as done previously, but with the modified hTAM and additional geometry on each side: for each hexagon tile with marked $IN$ and $OUT$ sides, add six sides to the polygon at $60^{\circ}$ intervals.
Each side corresponds to a side of the hexagon tile, with appropriate glue, bump-and-dent pair according to its location on the hexagon, and bump or dent according to whether it is an $IN$ or $OUT$ side.

\paragraph{Enforcing a polygon tile invariant}

The previous steps yield a single-tile pfbTAM system that simulates an arbitrary io-hTAM system with no strength-$\tau$ glues.
Next, the polygon tile is modified so that during assembly \emph{every} attaching tile must use a side that lives within a particular $60^{\circ}$ arc of the tile's boundary.
Recall that each hexagon tile in the io-hTAM system must use all $IN$ sides (forming a minimal glue set) to attach to an existing assembly, and each hexagon tile in the io-hTAM corresponds to a set of six sides of the polygon tile spaced at $60^{\circ}$ intervals.
For each such set of six sides, `rotate'\footnote{Here rotation is equivalent to cyclic permutation.} the set until the side lying in the interval $[0, 60^{\circ})$ is one which is marked $IN$.
Now the polygonal tile still may attach to simulate the hexagon tiles as before, but with the invariant that every attachment includes a bond formed by a side in the $[0, 60^{\circ})$ interval arc on the boundary of the polygon.

\paragraph{Adding self-seeding}

All that remains is a modification to the simulating polygonal tile that removes the requirement of starting with a three-tile assembly. 
Note that during the assembly process, every attaching tile uses two sides to bond, and any side within $60^{\circ}$ of either of these sides lies in a narrow region between the two tiles sharing the bond.
So geometry prevents polygonal tiles from bonding to such a side.
Additionally, because each attaching polygonal tile uses a side in its angular wedge $[0^{\circ}, 60^{\circ})$, no polygonal tile uses more than one side from this set. 
We now use this geometric fact to create a self-seeding tile.

Select two hexagons in the seed $s'$ of the io-hTAM system to use as a self-seeding two-tile assembly.
Add two more sets of six sides for these two chosen hexagons, and replace the glue (which has strength less than $\tau$) shared by these two hexagons in the three-tile seed $s'$ with a pair of $\tau$ strength glues.
For each six-side set, the side simulating the strength-$\tau$ bond is placed adjacent to the position at angle $30^{\circ}$ (i.e. the bisector of the $[0, 60^{\circ})$ wedge containing only $IN$ sides) and has no $IN$/$OUT$ geometry.
All other sides are marked $OUT$.

Consider the behavior of the single-tile pfbTAM system: starting with a single seed tile, a second tile must come and attach via the strength-$\tau$ bond.
From then on, any tile attaching uses at least two sides to attach, all of which must be $IN$ sides, and one of which lies in a small region of the boundary containing the strength-$\tau$ bond.
Once the tile is attached, the strength-$\tau$ bond becomes unusable, as it is geometrically blocked by the neighboring tile which the $IN$ side in this region attached to.
By induction, this remains true for assemblies produced via an arbitrary number of steps. 

\fi

\section{Translation-Only Systems}

Similar to the previous section, we discuss the power and limitations of
systems composed of tiles of a single type, in terms of the aTAM tile assembly
systems which they can simulate.  However,
\ifabstract
\else
while previous assemblies were
composed of translated copies of the single tile type which were able to
represent each of multiple aTAM tile types by being positioned in specific
relative rotations to each other, here
\fi
we no longer allow rotations.
\ifabstract
\else
Instead,
while assemblies are still composed of translated copies of a single tile type,
the assembly can be thought of as lying within a logical rectangular grid, such
that each tile has a ``central body'' that is contained within the bounding
rectangle for its coordinate position, and the identity of the tile type from
the simulated aTAM system that it is representing is determined by the
position of the tile's central body relative to that bounding box.
\fi
Thus, the
simulation of multiple tile types occurs by one single tile type that can
assume a specific translation relative to a fixed coordinate system for each
tile type that it can simulate: in summary, an aTAM tile type is encoded by the relative position of our single tie type.   Therefore, in this section the term
\emph{translation} is used to describe the local translation of tiles within
the logical bounding boxes of their coordinate locations, rather than the
global translation represented by the positioning at the various coordinate
locations.

We show that such systems can simulate computationally universal systems.  This requires arbitrarily large seeds (whose size is a simple linear function of the number of timesteps in the simulated computation).  We then prove strict limitations on such systems with small seeds.

\subsection{Universality in Translation-Only Systems}
\label{sec:translation-positive}

\begin{figure}[h!]
    \begin{center}
    \ifabstract
           \includegraphics[width=0.6\columnwidth]{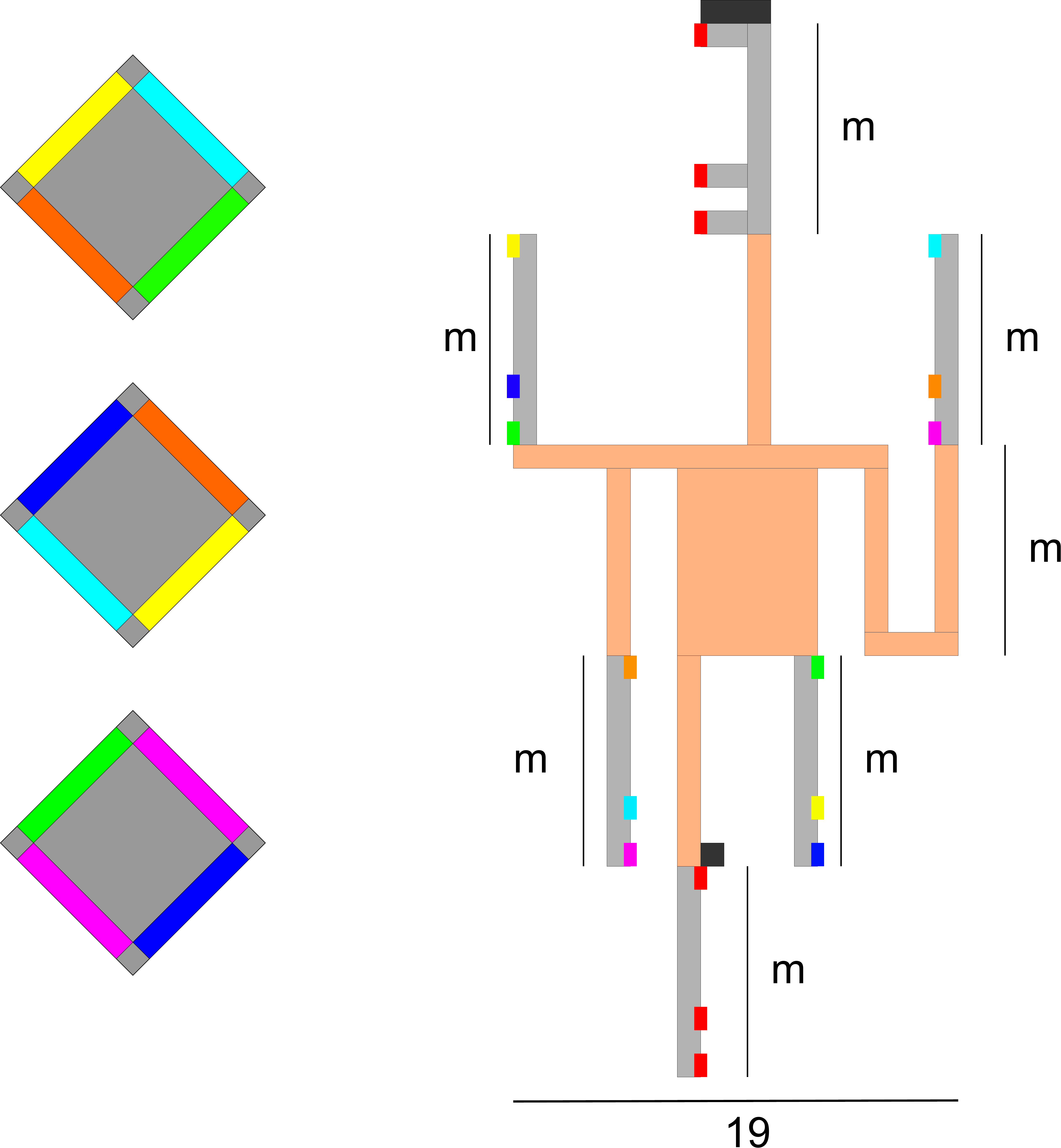}
    \else
           \includegraphics[width=0.6\columnwidth]{images/sliderConversion}
    \fi
    \caption{
    \label{fig:sliderConversion}
A {\em slider tile} derived from a set of square tiles $T$,
\ifabstract
representing each tile in $T$ by sliding up and down vertically. The
slider has 6 bonding pads, each with a sequence of glues positioned according
to a sequence $x_1, \ldots , x_{|T|}\in N$, as established in
Lemma~\ref{lemma:polynomialxi}.  The glue at position $x_i$ on the northwest
pad is the glue on the northwest face of the $i$th  tile type from $T$;
other glues are analogous.
Top and bottom pads have glues assigned at positions $x_1, \ldots , x_{|T|}$,
but glue types on these pads are of the same, neutral type.
\else
representing each tile in $T$ by sliding up and down into different relative translations compared to its
neighbors.  The slider has 6 bonding pads, each with a sequence of glues
positioned according to a sequence of positive integers $x_1, \ldots , x_{|T|}$
that satisfy the algebraic constraints of Lemma~\ref{lemma:polynomialxi}.
The glue at position $x_i$ on the northwest pad is the glue on the northwest
face of the $i$th  tile type from $T$ (starting with $x_1$ on the southmost tip of the pad). The glues for the northeast,
southwest, and southeast pads are similarly assigned.  The top and bottom pads
also have glues assigned at positions $x_1, \ldots , x_{|T|}$, but all glue types on these
pads are of the same, neutral type.
In this figure, $m$ denotes the value
$x_{|T|}$, the largest integer in the sequence $x_1 , \ldots , x_{|T|}$.  From
Lemma~\ref{lemma:polynomialxi}, we know that an appropriate sequence exists
such that $m \leq 3|T|^5$.
\fi
}
    \end{center}
\ifabstract
\vspace*{-3mm}
\fi
\end{figure}

\begin{figure}[t]
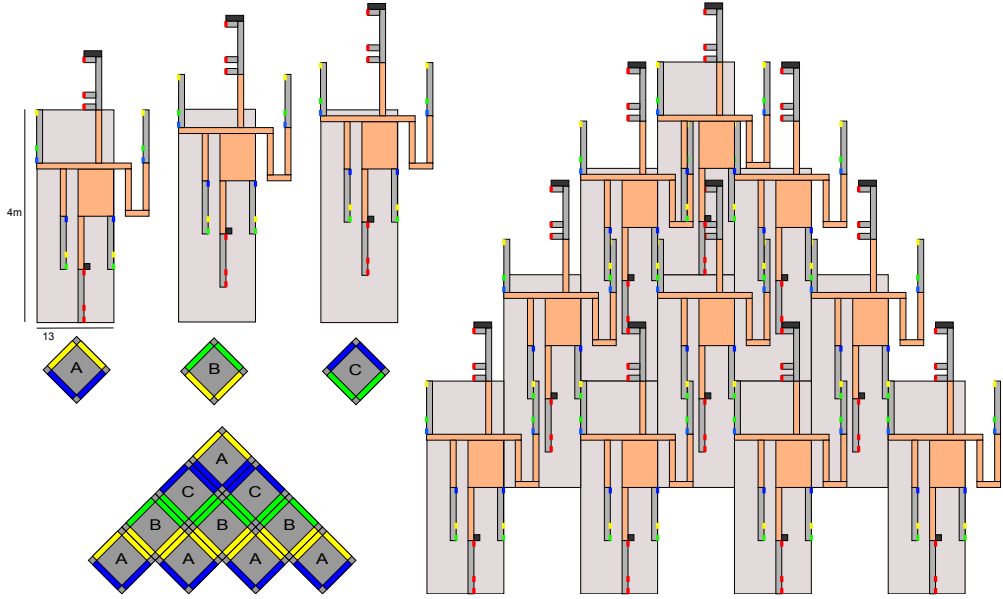

    \begin{center}
    \ifabstract
       \includegraphics[width=0.8\columnwidth]{images/sliderAssemblyMapping}
    \else
       \includegraphics[width=0.8\columnwidth]{images/sliderAssemblyMapping}
   \fi
    \caption{
    \label{fig:sliderAssemblyMapping} An assembly of sliders for a square tile set $T$ is mapped to a corresponding square tile assembly over $T$ by placing an imaginary grid shown by the grey background blocks in this figure.
\ifabstract
\else
The tile of $T$ represented by the slider in each grey block is determined by the north/south translation of the slider in the given block.  In particular, the slider represents the $ith$ tile of $T$ if the $ith$ glue on the northwest pad of the slider lines up with the northernmost position of the grid rectangle. (Note that the assembly shown in this example does not adhere to the double-checker-boarded constraint and so such a produced assembly would not be guaranteed to form without error in the slider construction.)
\fi
}
    \end{center}
\ifabstract
\vspace*{-6mm}
\fi
\end{figure}

\begin{figure}[t]
    \begin{center}
    \ifabstract
       \includegraphics[width=0.45\columnwidth]{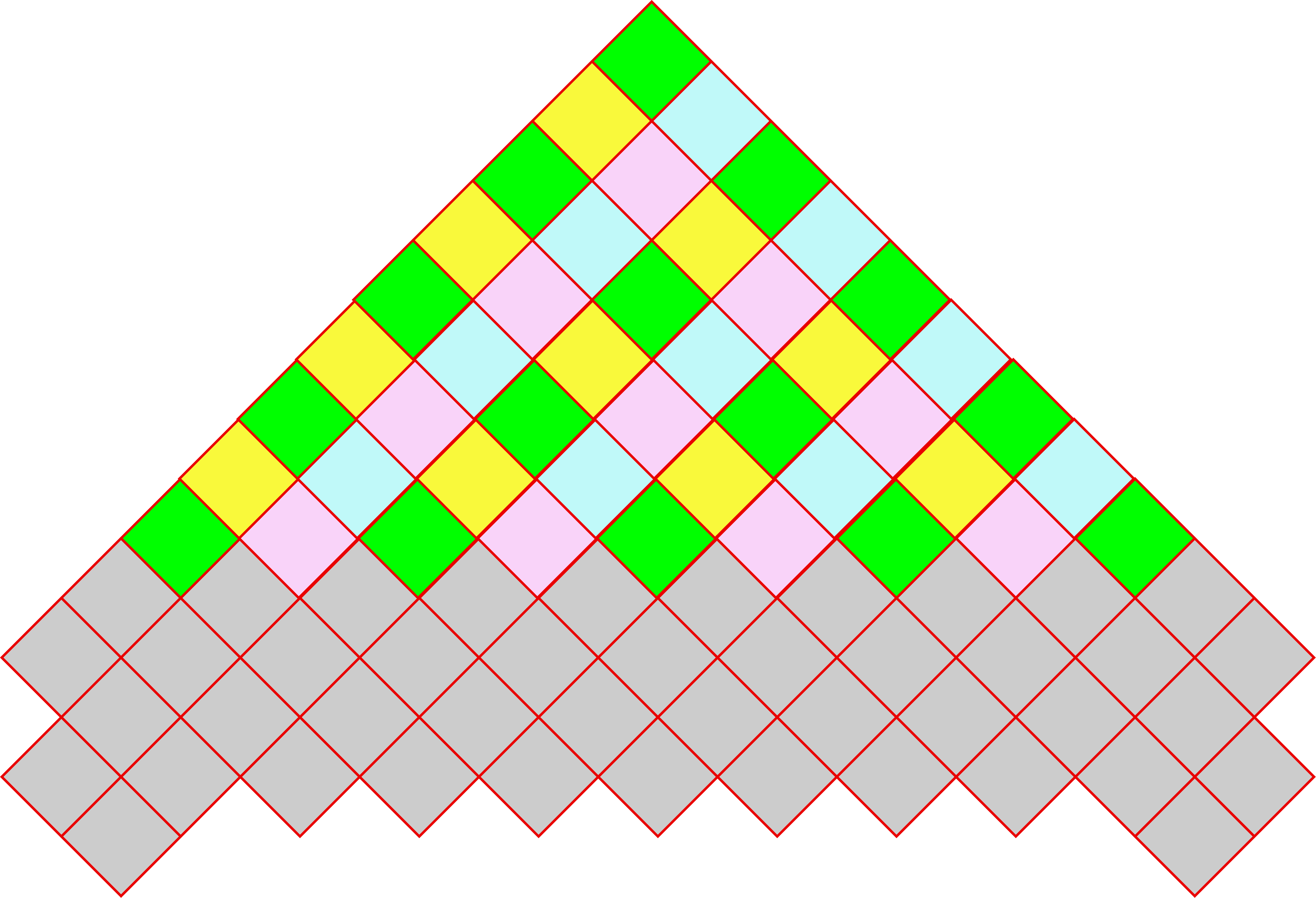}
    \else
       \includegraphics[width=0.55\columnwidth]{images/arrowHeadColored}
    \fi
    \caption{Fat-seeded, double-checkerboarded pyramid systems start with a seed assembly shown in grey and grow upwards with cooperative temperature-2 bonding, yielding a maximum possible assembly in the shape of a pyramid.
\ifabstract
The system is double-checkerboarded, so a placed tile is distinct from its immediate neighbors.
\else
Additionally, the system is double-checkerboarded, meaning a placed tile is distinct from its north, south, east, west, northwest, southeast, northeast, and southwest neighbors.
\fi
    \label{fig:arrowHeadColored} }
    \end{center}
\ifabstract
\vspace*{-6mm}
\fi
\end{figure}

In this section we show that for any given 1D cellular automaton, there exists
a single tile, such that with an appropriate seed consisting of $O(n)$ copies of
this tile, the corresponding translation-only tile system will simulate the
first $n$ steps of the given cellular automaton.  The universality of 1D cellular
automata thus yields a single translation-only tile that is computationally
universal (using infinite seeds).

\subsubsection{Definitions}\label{sec:sliderDefinitions}

\ifabstract
For a given aTAM tile system $\Gamma =(T,\tau,\sigma)$, and a one-to-one function $X : T \rightarrow \mathbb{N}$, such
that the codomain of $X$ is a set of numbers that satisfy
Lemma~\ref{lemma:polynomialxi}, we define the slider tile $SLDR^X_{\Gamma}$ as shown in Figure~\ref{fig:sliderConversion}.  Further, we define a mapping from slider assemblies to aTAM assemblies as shown in Figure~\ref{fig:sliderAssemblyMapping}, which is based on assigning an imaginary grid over the slider assembly and assigning each slider an aTAM tile type based on the relative positive of the slider within the grid box.  For details of the slider tile, mapping slider assemblies to aTAM assemblies, and simulation of aTAM systems in this context, see the full version of the paper.
\else

In this section, for simplicity, we will refer to unit-square aTAM tiles as sitting at a 45 degree rotation from the standard orientation.  In particular, we will refer to the four sides of such a square aTAM tile as the northwest, northeast, southeast, and southwest faces. %

\paragraph{Slider Definition}
Consider some standard square tile aTAM system $\gamma = (T, \tau, s)$.
Further, consider some total one-to-one function $X : T \rightarrow \mathbb{N}$, such
that the codomain of $X$ is a set of numbers that satisfy
Lemma~\ref{lemma:polynomialxi}.  Let this codomain be denoted by the variables
$x_1 , \ldots x_{|T|}$, such that $x_i < x_{i+1}$ for each $i$.  Let $X^{-1}$
denote the inverse of $X$.  Note that by Lemma~\ref{lemma:polynomialxi}, we
have that $1 \leq x_{|T|} \leq 3|T|^5$.  Further, without loss of
generality, assume that the leftmost, southmost tile type in $s$ is such that
$X$ maps it to $x_1$.  For such a $\gamma$ and $X$, we construct the corresponding
\emph{slider} tile, $SLDR^X_{\gamma}$,  in
Figure~\ref{fig:sliderConversion}.  The slider consists of 6 grey pads of
unit width and length equal to $x_{|T|}$.  Further, each pad has a glue
assigned to positions $x_1, \ldots x_{|T|}$ from south to north, on either the
west or east side of the grey pad as depicted in
Figure~\ref{fig:sliderConversion}.  Each of the 6 grey pads are connected by
the tan portion of the tile and the exact dimensions are provided in
Figure~\ref{fig:sliderConversion}.  In short, the width of the tile is a
constant, with the height being linear in $x_{|T|}$.  Additionally, the glue
type on the northwest pad at position $x_i$ is the glue that occurs on the
northwest face of the tile $t \in T$ such that $X(t) = x_i$.  The glue type for
the northeast, southwest, and southeast pads are defined similarly.  For the
north and south pads, the glue types at each position are all of the same glue
type $a$, where $a$ is a strength-1 glue that does not occur within the tile
set $T$.

\paragraph{Slider Assembly Mapping}
We now discuss how an assembly of slider tile types represents an assembly over
a set of square aTAM tiles $T$.  For a pictorial description of the mapping,
see Figure~\ref{fig:sliderAssemblyMapping}.  The key idea is to place an
imaginary grid of grey boxes over a given slider assembly to define the
position each slider is conceptually tiling, as well as the type of tile
represented by referencing the relative north/south translation of the slider
within the grey box.  We now formally define the mapping of a slider assembly
to a square aTAM assembly.

Consider a $\tau$-stable assembly $A$ consisting of translations of a slider tile $SLDR^X_{\gamma}$.  Now consider the westmost, southmost slider tile in $A$.  Assume this slider tile sits at coordinate position $(0, -x_1)$. We now define a partial mapping $f: Z \times Z \rightarrow Z \times Z \times T$, which maps slider coordinate locations within an assembly to both a 2D coordinate position and a tile type in $T$.

To define $f(x,y)$, first let $w = 13$ and $\ell = 4 x_{|T|}$ ($w$ and $\ell$ denote the width and height of the grey grid boxes from Figure~\ref{fig:sliderAssemblyMapping}).  Define $f(x,y)$ as follows:  If for integers $a,b \geq 0$ and $t \in \{x_1, \ldots , x_{|T|} \}$ it is the case that $x=2wa$ and $y=b\ell - t$, then $f(x,y) = (a-b,a+b, X^{-1}(t))$.  If for integers $a,b \geq 0$ and $t \in \{x_1, \ldots , x_{|T|} \}$ it is the case that $x=2wa+w$ and $y=b\ell +\ell/2 - t$, then $f(x,y) = (a-b,a+b+1, X^{-1}(t))$.  If $(x,y)$ does not satisfy either of these constraints, then $f(x,y)$ is undefined.

Given the partial mapping $f$, for a slider assembly $A$ we say $A$ maps to assembly $A'$ over $T$ if $A'$ is the assembly obtained by including each tile of type $t$ at position $(w,y)$ such that $f(x,y) = (w,u,t)$ for some slider in $A$ at position $(x,y)$.  If any slider in $A$ is at a position at which $f$ is not defined, then $A$ does not have a defined mapping to a square aTAM tile assembly over $T$.

\paragraph{Simulating an aTAM system with Sliders}
We say a slider system $\upsilon =(SLDR^X_{\gamma} , 3, s')$ \emph{terminally simulates} an aTAM square-tile system $\gamma =(T, \tau, s)$  if the set of terminal assemblies $TERM_\upsilon$ maps exactly to the set $TERM_\gamma$ when the mapping of slider assemblies is applied to each element of $TERM_\upsilon$.\footnote{This is a weaker definition of simulation than what is considered in \cite{IUSA,Versus} in that it does not model equivalent dynamics.  While our construction actually satisfies a stronger definition of simulation, we omit the more involved simulation definition for simplicity.}  For the remainder of this section we simply use the term \emph{simulates} to refer to terminal simulations.
\paragraph{Pyramid aTAM Systems}
An aTAM system $\Gamma =(T$, $2$, $s)$ is said to be a \emph{fat-seed pyramid system} if 1) $s$ contains some number $n$ of tiles configured in the format described in Figure~\ref{fig:arrowHeadColored}, with the property that all adjacent tile edges match glues, and 2) all glues in $T$ have strength 1, and 3) the tile set $T$ and seed $s$ are such that no tiles can attach to the southern face of the seed.  In addition, a fat-seed pyramid system is said to be \emph{double-checkerboarded} if for any attachable tile during the assembly process, the attached tile, its southwest, southeast, and southern neighbors are all distinct tile types.  An example of a coloring scheme that denotes which tiles must be of differing type is shown in Figure~\ref{fig:arrowHeadColored}.

\paragraph{Planar Assembly}
A pftTAM system $(T,\tau,s)$ is said to be \emph{planar} if for each possible tile attachment for all assembly sequences, the attached tile is guaranteed to have a collision-free path within the plane to slide into attachment position.  In general, planar assembly systems are desirable in that they offer the possibility for implementation within a system that is restricted to assembly on a surface, and further inform what constructions might generalize into 3 dimensions.
\fi
\subsubsection{Simulation of Cellular Automata}

In this section we show that for any double-checkerboar\-ded, fat-seed pyramid aTAM system $\Gamma$, the single-tile, trans\-lation-only slider tile system derived from $\Gamma$ simulates $\Gamma$.  It has been shown that fat-seed pyramid systems with size $n$ seeds are capable of simulating the first $O(n)$ steps of 1D blocked cellular automata~\cite{Winf98,margolus1984physics}, a universal class of cellular automata.  Thus, our result yields a single-tile translation-only system for simulating universal computation.

We first establish a lemma that makes an algebraic claim about sequences of positive integers.  This allows us to assign appropriately spaced glues to our single slider tile, such that certain undesired alignments are infeasible, and further that this can be done with a small polynomially-sized slider tile.

\begin{lemma} \label{lemma:polynomialxi}
  There is a set $\{x_1, x_2, \dots, x_k\}$ of (distinct) integers
  in the range $[1, 3 k^5]$ such that,
  for any indices $a$, $b$, $c$, $d$, $e$, and $f$,
  we have $x_a + x_b + x_c = x_d + x_e + x_f$ if and only if
  the equation holds algebraically, i.e., $\{a,b,c\} = \{d,e,f\}$.
\end{lemma}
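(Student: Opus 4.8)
The plan is to construct the set greedily and then bound the range using a counting argument of the Sidon/$B_3$-set flavor. We seek a set $\{x_1,\dots,x_k\}$ such that the only coincidences among threefold sums $x_a+x_b+x_c$ are the ``trivial'' ones forced by the commutativity of addition, i.e., $x_a+x_b+x_c=x_d+x_e+x_f$ only when $\{a,b,c\}=\{d,e,f\}$ as multisets. Such a set is exactly a $B_3^+$ set (a $B_3$ set in the sense of Sidon). The strategy is: build the $x_i$ one at a time, maintaining the property that no nontrivial equation $x_a+x_b+x_c=x_d+x_e+x_f$ holds; when adding a new element, show that only a bounded number of integers are ``forbidden,'' so some integer in $[1,3k^5]$ is always available.

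First I would set up the greedy process: suppose $x_1<\dots<x_{j-1}$ have been chosen with the $B_3$ property, and we wish to choose $x_j$. A candidate value $v$ fails exactly when it participates in some nontrivial equation among the threefold sums over $\{x_1,\dots,x_{j-1},v\}$. Expanding cases by how many times $v$ appears on each side (it must appear at least once, else the equation lives entirely in the already-good set), each bad equation can be rearranged to express $v$, or $2v$, or $3v$ as a fixed linear combination $\pm x_a \pm x_b \pm x_c \pm x_d \pm \cdots$ of at most six previously chosen elements with small integer coefficients. The key step is to count: the number of such linear combinations is at most a fixed constant times $(j-1)^6 \le k^6$ — wait, that is too many. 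So I would instead organize the count more carefully by the multiplicity of $v$: if $v$ appears $t\in\{1,2,3\}$ times on the left and $t'$ times on the right, then $(t-t')v$ equals a signed sum of at most $(3-t)+(3-t')\le 6$ elements drawn from the old set; subtracting off the equal-multiplicity case reduces it to an equation in which $v$ appears with net coefficient $1$, $2$, or $3$, pinned down by at most four old elements with coefficients in $\{-3,\dots,3\}$. Each such equation determines $v$ uniquely (after dividing by the net coefficient), so the number of forbidden values is at most a constant times $(j-1)^4$, hence at most $C k^4$ for an absolute constant $C$.

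The main obstacle — and the step requiring the most care — is pinning down the constant $C$ so that $C k^4 < 3k^5$, i.e., confirming that at every stage $j\le k$ strictly fewer than $3k^5$ values are forbidden, leaving at least one admissible choice in $[1,3k^5]$. I would handle this by a clean case analysis on the ``shape'' of a potential bad equation (how $v$ is distributed on the two sides), observing that after cancelling common terms we may assume each side is a multiset of at most three elements with no element common to both sides, that $v$ lies on exactly one side (or, if on both, with different multiplicity), and that the equation then reads $nv = \sigma$ with $n\in\{1,2,3\}$ and $\sigma$ a sum of at most three old elements minus a sum of at most three old elements but with $v$ removed, so $\sigma$ is determined by at most $3+3-1 = 5$ old indices — actually, since $v$ occupies at least one slot, at most $5$ old slots remain, giving $\binom{j-1}{\le 5}$ possibilities times a bounded number of sign/multiplicity patterns. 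A crude bound of $30 (j-1)^5 \le 30 k^5$ is not quite enough, so the refinement is essential: after removing $v$ from both sides and cancelling, each side has at most three slots total but the sides share no element, and the count of genuinely distinct determining tuples is $O(k^5)$ with a constant well below $3$; alternatively, a slightly more generous target range such as $[1,Ck^5]$ with the paper's choice $C=3$ already absorbing the worst-case constant, which I would verify by bookkeeping. Once the counting inequality is established, the greedy construction produces the desired set, and the ``if'' direction of the biconditional is trivial since $\{a,b,c\}=\{d,e,f\}$ immediately gives $x_a+x_b+x_c=x_d+x_e+x_f$.
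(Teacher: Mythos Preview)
Your approach is the same as the paper's --- a greedy construction with a forbidden-value count --- but you leave the one step that actually matters unfinished: nailing down the constant $3$. You oscillate between ``at most a constant times $(j-1)^4$'' (which is wrong: the case where $v$ appears once leaves five old indices, not four) and ``$30(j-1)^5$'' (which you correctly note is too weak for the range $[1,3k^5]$), and then defer the resolution to unspecified ``bookkeeping.'' That is the gap.

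The fix is short, and it is exactly what the paper does. After cancelling any occurrences of the new value $v=x_j$ common to both sides, you may assume $v$ appears on one side only, with multiplicity $m\in\{1,2,3\}$. That side then has $3-m$ slots filled by old indices and the other side has $3$ old indices, so the equation reads
\[
m\,v \;=\; (\text{sum of }3\text{ old values}) - (\text{sum of }3-m\text{ old values}),
\]
which determines $v$ uniquely from a tuple of $6-m$ old indices. Hence the number of forbidden values is at most
\[
(j-1)^5 + (j-1)^4 + (j-1)^3 \;\le\; 3(j-1)^5 \;<\; 3k^5.
\]
There are no additional ``sign/multiplicity patterns'' to enumerate once you have fixed the side and the multiplicity $m$; that spurious enumeration is where your factor of $30$ comes from. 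With this count in hand, the greedy choice always succeeds inside $[1,3k^5]$, and the rest of your outline goes through.
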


\ifabstract
\else
\begin{proof}
  We set the $x_i$'s incrementally, mimicking a construction in the
  fusion tree data structure of Fredman and Willard \cite{Fredman-Willard-1993}.
  First we set $x_1 = 1$.
  If we have so far set $x_1, x_2, \dots, x_{i-1}$, we set $x_i$ as follows.
  For any indices $a$, $b$, $c$, $d$, $e$, $f$ $\in \{1,2,\dots,i-1\}$
  for which $x_a + x_b + x_c = x_d + x_e + x_f$ does not hold algebraically,
  the new equations involving one, two, or three copies of~$x_i$---
  \begin{eqnarray*}
  x_i + x_b + x_c &=& x_d + x_e + x_f, \\
  x_i + x_i + x_c &=& x_d + x_e + x_f, \\
  x_i + x_i + x_i &=& x_d + x_e + x_f
  \end{eqnarray*}
  ---have unique solutions for $x_i$:
  \begin{eqnarray*}
  x_i &=& - x_b - x_c + x_d + x_e + x_f, \\
  x_i &=& \textstyle {1 \over 2} ( - x_c + x_d + x_e + x_f ), \\
  x_i &=& \textstyle {1 \over 3} ( x_d + x_e + x_f ).
  \end{eqnarray*}
  Thus, if we set $x_i$ to avoid these $\leq 3 (i-1)^5$ bad values,
  then we guarantee the theorem holds on $\{x_1, x_2, \dots, x_i\}$.
  (In particular, $x_i + x_b + x_b = x_j + x_b + x_b$ will hold only if $i=j$,
   so $x_i$ is distinct from previously chosen $x_j$'s.)
  Setting $x_i$ is possible provided the number of choices for $x_i$ is
  greater than the number of bad values, i.e., $3 k^5 > 3 (i-1)^5$,
  which follows from $i \leq k$.
  Once we finally set $x_k$, we have the desired set $\{x_1, x_2, \dots, x_k\}$.
\end{proof}

\fi
We now leverage our slider construction and the above properties for the main result of this subsection.

\begin{theorem}\label{thm:sliderSimulates}
For any double-checkerboarded, fat-seeded pyramid aTAM system $\Gamma = (T, 2, s)$, there exists a single-tile, translation-only pfbTAM system that simulates $\Gamma$.  Further, the single tile of the simulating system is of size $O(|T|^5)$, and the system satisfies the planar assembly constraint.

\end{theorem}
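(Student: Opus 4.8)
The plan is to derive the simulating system directly from $\Gamma$. Apply Lemma~\ref{lemma:polynomialxi} with $k=|T|$ to get integers $x_1<\cdots<x_{|T|}$ in $[1,3|T|^5]$ having no degenerate triple-sum coincidence, fix a one-to-one map $X\colon T\to\{x_1,\dots,x_{|T|}\}$ sending the southwest-most seed tile to $x_1$, and let the simulating system be $\upsilon=(SLDR^X_{\Gamma},3,s')$, where $SLDR^X_{\Gamma}$ is the slider of Figure~\ref{fig:sliderConversion} and $s'$ is built by replacing each tile $t$ of $s$ with a copy of the slider translated so that position $X(t)$ is active, laid out on the imaginary grid of Figure~\ref{fig:sliderAssemblyMapping} in the same configuration as $s$. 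Since adjacent tiles of $s$ match glues, the coincident NW/NE/SW/SE pads of the corresponding adjacent sliders bond, so $s'$ is a legitimate seed and $f^{*}(s')=s$, where $f$ is the slider-to-square mapping of Section~\ref{sec:sliderDefinitions}. The slider has constant width and height $O(x_{|T|})=O(|T|^5)$, so it is a tile of size $O(|T|^5)$ as claimed; the size bound is then immediate, and the remaining work is to verify that $\upsilon$ terminally simulates $\Gamma$ and that it is planar.

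The heart of the proof is an induction showing that $f^{*}$ carries $PROD_{\upsilon}$ onto $PROD_{\Gamma}$ and $TERM_{\upsilon}$ exactly onto $TERM_{\Gamma}$ (which is precisely what terminal simulation demands). This holds at the seed. For the inductive step, suppose $A\in PROD_{\upsilon}$ with $f^{*}(A)=\alpha\in PROD_{\Gamma}$ and a slider $P$ attaches to $A$, yielding $A'$. Every slider glue has strength $1$ and the temperature is $3$, so $P$ forms at least three unit bonds; because the layout mirrors an upward-growing pyramid with a solid frontier, $P$'s only possible bonding partners are the slider directly below it in its column (through the neutral south/north pads) and the two sliders one aTAM-step down in the two neighbouring columns (through two of the NW/NE/SW/SE pads), so $P$ forms exactly three bonds, one on each of three pads. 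Each such bond occurs at some position $x_{i}$ on a pad of $P$ and some position $x_{j}$ on the coincident pad of its partner, and requiring that a \emph{single} vertical translation of $P$ realize all three coincidences at once gives, after cancelling the fixed grid offsets built into $f$, an equation of the shape $x_{a}+x_{b}+x_{c}=x_{d}+x_{e}+x_{f}$. The double-checkerboard hypothesis on $\Gamma$ is exactly what guarantees that the three partner positions are genuinely independent, so this equation is non-degenerate, and Lemma~\ref{lemma:polynomialxi} forces $\{a,b,c\}=\{d,e,f\}$; this pins $P$ to lie in one grid box at the translation encoding a single tile type $t\in T$, so $f(P)$ is defined. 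Its cell $p$ is empty in $\alpha$ (else $P$'s body would overlap a placed slider), and the two side-pad bonds $P$ formed are matching-glue adjacencies between $t$ and the two tiles $\alpha$ already has below $p$; since $\Gamma$ is a pyramid system, those two strength-$1$ matches let $t$ attach at $p$ in $\Gamma$, so $\alpha\to_{1}f^{*}(A')$.

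For the converse we replay the assembly of $\Gamma$: whenever $\Gamma$ adds a tile $t$ at a frontier cell $p$ using two matching neighbours, $\upsilon$ can add the slider at the translation of grid box $p$ encoding $t$, since the two corresponding side-pad bonds are present (they are copies of $\Gamma$'s matching glues) and the south neutral pad overlaps the slider below in the column, where all glues are the single strength-$1$ type $a$ and the grid period $\ell=4x_{|T|}$ together with the pad lengths guarantee that some pair of neutral positions always coincides; the total strength is $3$. Double-checkerboarding together with Lemma~\ref{lemma:polynomialxi} also rules out any other slider attaching at $p$ in a manner not seen by $f$, so producibility is preserved in both directions and the induction closes, giving $f^{*}(TERM_{\upsilon})=TERM_{\Gamma}$. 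Planarity follows because each attaching slider can be brought into place by sliding it straight down its width-$13$ column from infinity: the pyramid grows upward, so the column is empty above the target cell, and the slider's body and its unit-width pads are thin enough to clear the protruding pads of the two neighbouring columns, so a collision-free path always exists.

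I expect the main obstacle to be the algebraic-rigidity step in the forward direction: one must enumerate the finitely many ways three unit bonds can be distributed among $P$'s six pads and the pads of its three potential partners, check in each case that the induced linear relation among the $x_i$ really has exactly three terms on each side and is non-degenerate (this is the precise role of the double-checkerboard condition), and only then apply Lemma~\ref{lemma:polynomialxi}. Doing this carefully also requires ruling out attachments that use four or more bonds and confirming that mismatching or out-of-grid placements are blocked by geometry rather than merely short on strength; these bookkeeping cases, rather than any single hard idea, are where the real effort lies.
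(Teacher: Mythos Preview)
Your proposal is correct and takes essentially the same approach as the paper: build the slider from Lemma~\ref{lemma:polynomialxi}, verify the fat seed is $3$-stable, and prove by induction that any attaching slider is pinned to a valid grid translation by applying the triple-sum rigidity of Lemma~\ref{lemma:polynomialxi} to the constraints imposed by its three bonds (southwest, south, southeast), with the double-checkerboard property and the geometric bump on the north/south pads ruling out perfect-alignment degeneracies. The one refinement worth noting is that the three simultaneous bond constraints actually yield \emph{two} triple-sum equations rather than one (the paper obtains them by comparing vertical offsets along the three paths $A\!\to\!B\!\to\!D$, $A\!\to\!C\!\to\!D$, and $A\!\to\!D$ in Figure~\ref{fig:tripleThreat}), and a short case split is then needed because the conclusion $\{a,b,c\}=\{d,e,f\}$ of Lemma~\ref{lemma:polynomialxi} admits several matchings; your final paragraph correctly anticipates that this bookkeeping is where the real work lies.
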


\ifabstract
This theorem is proven by showing that the slider tile derived from the input aTAM system will simulate the input system.  For the proof of this theorem see the full version of the paper.  In short, the correctness of the slider simulation relies on the impossibility of attaching a slider to three already correctly placed sliders as shown in Figure~\ref{fig:tripleThreat}.  The proof relies on the algebraic properties of slider glue locations specified in Lemma~\ref{lemma:polynomialxi}.
\else
\begin{proof}
Consider an arbitrary double-checkerboarded, fat-seed pyramid aTAM system
$\Gamma = (T,2,s)$ with size $n$ seed.  We will prove the theorem by showing that the single-tile translation-only system $\beta = (SLDR^X_\Gamma , 3, s')$ is a planar simulation of $\Gamma$, where $SLDR^X_\Gamma$ is a slider derived from $\Gamma$ according to Section~\ref{sec:sliderDefinitions}, and $s'$ is the assembly over $SLDR^X_\Gamma$ that maps to $s$ according to the assembly mapping described in Section~\ref{sec:sliderDefinitions}.

First, we show that
$\beta = (SLDR^X_\Gamma , 3, s')$ satisfies the requirement that $s'$ is
stable.  Note that any cut of a fat-seed must separate at least 3 neighbor
tiles, with a ``neighbor'' tile being any tile directly north, south, northeast,
southeast, northwest, or southwest.  Therefore, since all adjacent edges of $s$
are assumed to be matching strength-1 glues by the definition of fat-pyramid
systems, we know that the seed $s'$ has minimum cut strength of at least 3 (see
Figure~\ref{fig:arrowHeadColored} for a picture example of $s'$).

To finish the argument for correct simulation, we need to show that the set of producible assemblies are the same when the slider assemblies are mapped according to Section~\ref{sec:sliderDefinitions}.  To show this, assume a slider assembly $A$ correctly maps to producible pyramid assembly $A'$.  For any attachable aTAM tile to $A'$, it is easy to see that there is a corresponding position at which a slider tile can attach to $A$ to obtain a slider assembly that maps to the new aTAM assembly.  The crux of the correctness argument lies in showing that any slider that may attach must attach such that the resultant assembly is defined.  If the assembly is defined, it is straightforward to verify that the resultant assembly maps to a corresponding producible pyramid assembly.  Therefore, we will argue that the attachment of a single slider will maintain that the assembly has a defined mapping to a producible pyramid assembly.

To argue this, we first rule out a number of potential issues.  First, an
attaching slider must do so by matching each of its southwest, south, and
southeast pads with one glue.  This holds because all glues have strength 1,
and by Lemma~\ref{lemma:polynomialxi} the sequence of
$x_i$'s are such that at most one pair per pad can line up, assuming a
non-perfect alignment.  In the case of perfect alignment of a southwest or
southeast pad, there will be 0-strength bonding because of the double-checkerboarded
structure of the simulated tile system.  Finally, perfect alignment of the
southern pad is prevented by geometric hindrance in the form of the black bump
protrusions at the base of the southern and northern pads.

Therefore, the only way to extend an existing arrangement of tiles is to
attach a new tile to three different tiles, making use of the southeast, south, and southeast pads,
each with glue strength one. We will proceed by induction to argue that such a newly placed
tile will have to be apropriately placed on the underlying grid, with a vertical shift corresponding
to the appropriate tile type of the simulated tile system.

\begin{figure}[t]
    \begin{center}
    \ifabstract
           \includegraphics[width=0.64\columnwidth]{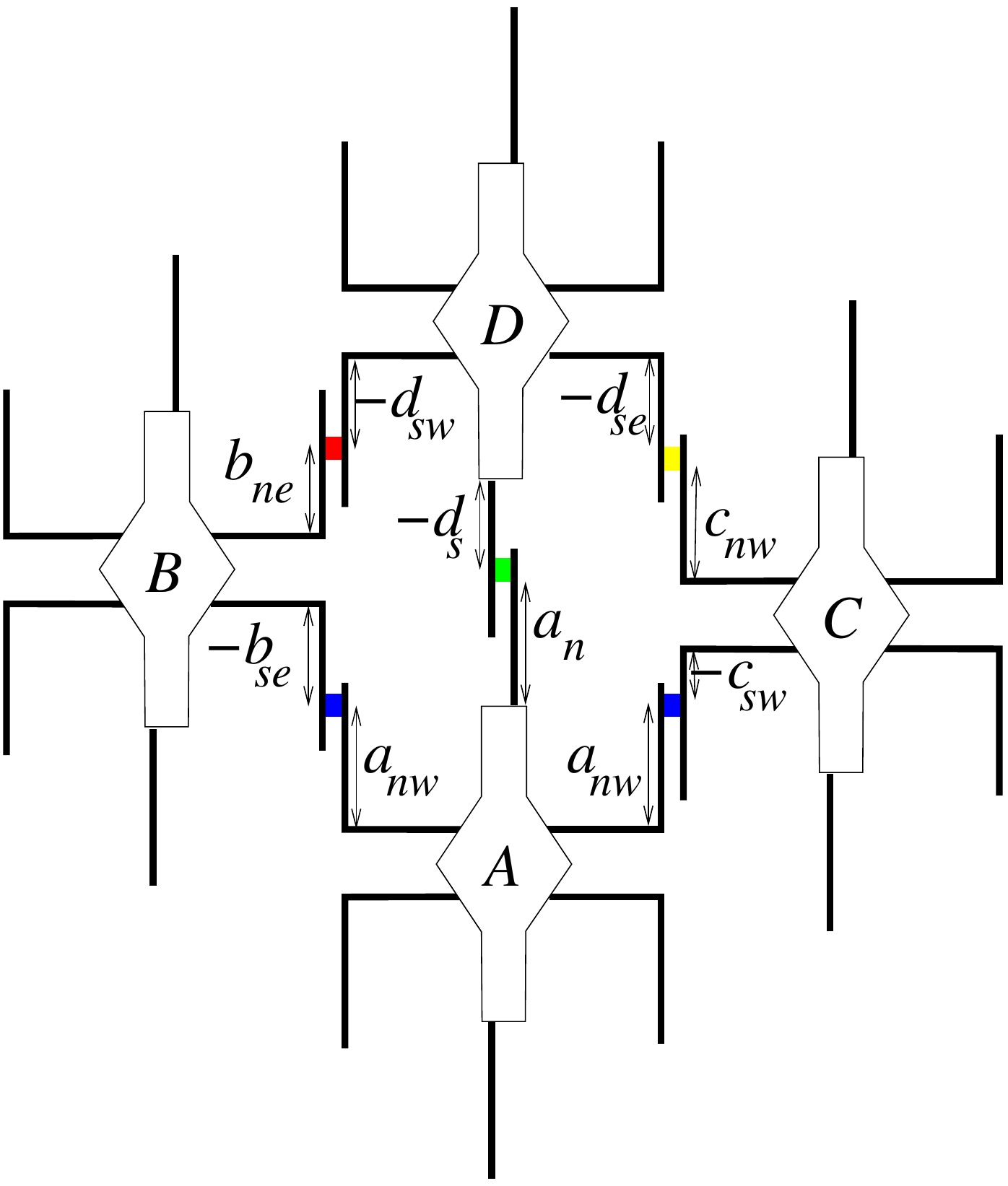}
    \else
           \includegraphics[width=0.6\columnwidth]{images/13}
    \fi
    \caption{Correctness of the slider simulation: A new slider $D$ is attached to existing sliders, $A$, $B$, $C$, making use
of three different
\ifabstract
bonds.
\else
bonds, shown in red, green, yellow.
\fi
Arguing the relationship between the involved $y$-distances
\ifabstract
\else
(which are measured from the baseline of each pad, and considered modulo $m$)
\fi
\ifabstract
shows that this is possible iff a position for $D$ encodes a tile in the simulated tile system.
\else
shows that this is possible iff a position for $D$ is used that corresponds to an encoding of the corresponding tile in the simulated tile system.
\fi
    \label{fig:tripleThreat} }
    \end{center}
\ifabstract
\vspace*{-6mm}
\fi
\end{figure}

More precisely, consider a tile $D$ that is bonded with the existing tiles $A$, $B$, $C$, as indicated in
Figure~\ref{fig:tripleThreat}. In the following, we discuss the $y$-distance of the involved slider
pegs from the baseline of their respective pads, which causes a vertical deviation from the respective
grid positions. In the following, we consider these deviations modulo $m$, and simply refer to these as ``relative positions''.
Reference to slider name is indicated by letters $a$, $b$, $c$, $d$,
while indices $sw$, $s$, $se$, $ne$, $n$, and $nw$ encode the pegs in directions southwest, south, southeast, northeast, north, northwest; for example, the
bond between $A$ and $D$ (shown in green in the figure)
is shifted by a (vertical) $y$-distance of $d_{s}$ from the baseline of $D$'s northern bonding pad,
and $a_n$ from the baseline of $A$'s southern bonding pad.
By assumption, $A$, $B$, $C$ are correctly placed, so that
$A$ bonds with $B$ and $C$ (indicated by blue color in the figure)
at the peg distance that encodes the tile type of $A$, i.e., $a_{nw}=a_{ne}$.
We denote by $\Delta_y(X,Z)$ the difference of relative positions between two
tiles $X$ and $Z$.
Then we have $\Delta_y(B,A)=a_{nw}-b_{se}$ and $\Delta_y(C,A)=a_{ne}-c_{sw}$.

Now assume that there is a bond between $D$ and $B$ (shown in red) that uses pegs at relative positions $d_{sw}$ and $a_{ne}$,
a bond between $D$ and $A$ (shown in green) that uses pegs at relative positions $d_{s}$ and $a_{n}$, and
a bond between $D$ and $C$ (shown in yellow) that uses pegs at relative positions $d_{se}$ and $c_{nw}$.
We will argue that this implies that $D$ is correctly placed, with all bonds of any tile using a proper peg position,
thus $D$ and its bonds encode a tile of appropriate tile type in the original tile system.

First of all, we observe that $\Delta(D,B)=b_{ne}-d_{sw}$, while $\Delta(D,C)=c_{nw}-d_{se}$, and $\Delta_x(D,A)=a_{n}-d_{s}$.
Comparing the total shift between the three paths $(A,B,D)$, $(A,C,D)$, $(A,D)$, as shown in the figure, we conclude
$\Delta_x(D,B)+\Delta_x(B,A)=\Delta_x(D,C)+\Delta_x(C,A)$, as well as
$\Delta_x(D,C)+\Delta_x(C,A)=\Delta_x(D,A)$. From the first equation, we conclude
$b_{ne}-d_{sw}+a_{nw}-b_{se}=c_{nw}-d_{se}+a_{ne}-c_{sw}$, so $a_{nw}=a_{ne}$ implies
\begin{equation}
b_{ne}+c_{sw}+d_{se}=b_{se}+c_{nw}+d_{sw}.
\label{eq:abd=acd}
\end{equation}
From the second equation, we conclude
\begin{equation}
a_{nw}+b_{ne}+d_{s}=a_{n}+b_{se}+d_{sw}.
\label{eq:abd=ad}
\end{equation}

Now both $b_{ne}=d_{sw}$ and $d_{se}=c_{nw}$ can be excluded, as they would imply a perfect alignment between
$B$ and $D$, or $D$ and $C$, respectively.

Next, consider $b_{ne}\neq b_{se}$. Then Lemma~\ref{lemma:polynomialxi} applied to \eqref{eq:abd=acd}
and \eqref{eq:abd=ad} implies that $d_{se}=b_{se}$, $c_{sw}=d_{sw}$,
$b_{ne}=c_{nw}$, $a_{nw}=d_{sw}$, $b_{ne}=a_n$, $d_s=b_{se}$. This implies $c_{sw}=d_{sw}=a_{nw}$; because of
$a_{nw}=a{_ne}$, it follows that $a_{ne}=c_{sw}$, i.e, perfect alignment between $A$ and $C$, which is impossible.

Therefore, we conclude that $b_{ne}=b_{se}$; then Lemma~\ref{lemma:polynomialxi}
applied to \eqref{eq:abd=acd}
implies that $c_{sw}=c_{nw}$, $d_{se}=d_{sw}$, and applied to \eqref{eq:abd=ad}
implies that $a_{sw}=a_{n}$, $d_{s}=d_{sw}$, meaning that all bonds of $D$ must use proper pegs for
encoding the simulated tiles, as claimed.
\end{proof}
\fi

\subsection{Limitations of Translation-Only Systems}
\label{sec:inmemoryofporky}
The many-gon single-tile self-assembly systems derive their power from being able
to rotate.
\ifabstract
Are such rotations necessary for single-tile simulations, or can
different tile types be simulated with a single complex tile at different relative translations?
\else
It is natural to ask whether such rotations are
necessary for general single-tile simulations, or if it is possible to
simulate different tile types by the attachment of a single geometrically
complex tile at different relative translations, in an even more general fashion than Subsection~\ref{sec:translation-positive}.
\fi
\ifabstract
We show that the latter is true, even for very general tile shapes, and start with the following lemma, which may have been discovered before. See the full paper for a proof. The implications for the following Theorem and Corollary are relatively straightforward.
\else
We show that rotations are
necessary in the single-tile model, by proving that translation-only systems with a single tile
have very limited power, regardless of the tile's geometric complexity.
This holds under rather general assumptions: for the purposes of this section,
a tile may be an arbitrary two-dimensional, bounded, connected, regular closed set $S$,
i.e., $S$ is equal to the topological closure of its interior points.
In the following, we say that two tiles {\em overlap}, iff they have non-disjoint
interiors; they {\em touch}, iff they intersect without overlapping.
A potential bond between two tiles requires that they touch in more than
one point, which must be equipped with a matching glue.
\fi

\ifabstract

\else
To obtain the impossibility results, we
start with a lemma about the
translation of connected shapes in 2D.  We assume that this lemma
has been previously discovered, but we have been unable to find it in
the literature.
\fi

\begin{lemma}
\label{le:chain}
Consider a two-dimensional, bounded, connected, regular closed set $S$,
i.e., $S$ is equal to the topological closure of its interior points.
Suppose $S$ is translated by a
vector $v$ to obtain shape $S_{v}$, such that $S$ and $S_{v}$ do not overlap.
Then the shape $S_{c*v}$ obtained by translating $S$ by $c*v$ for any
integer $c \neq 0$ also does not overlap $S$.  \end{lemma}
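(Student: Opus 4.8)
The plan is to reduce the statement to a one-dimensional claim about the projection of $S$ onto the line spanned by $v$. First I would set up coordinates so that $v$ points in the positive $x$-direction, and consider the projection $\pi(S) \subseteq \mathbb{R}$ of $S$ onto the $x$-axis. Since $S$ is bounded and connected, $\pi(S)$ is a bounded interval, say $[m, M]$, of some length $L = M - m$. The hypothesis that $S$ and $S_v$ have disjoint interiors (do not overlap) constrains how large $\|v\|$ must be relative to the ``width'' of $S$ in the $v$-direction, but crucially this width is not simply $L$: the set can be thin in places. So the real content is a pigeonhole/measure argument showing that if $S$ and $S_v$ overlap for $v = c_0 \cdot v$ with $|c_0| \geq 2$, then they already overlap for some smaller multiple, ultimately for $v$ itself, contradicting the hypothesis.

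The key step I would carry out is the following. Suppose for contradiction that $S$ and $S_{cv}$ overlap for some integer $c$ with $|c| \geq 2$; without loss of generality take $c \geq 2$ (the case $c \le -2$ is symmetric, and $c = \pm 1$ is the hypothesis). Pick a point $p$ in the interior of $S \cap S_{cv}$. Then $p \in \mathrm{int}(S)$ and $p - cv \in \mathrm{int}(S)$. Consider the $c+1$ points $p, p - v, p - 2v, \ldots, p - cv$; the first and last lie in $\mathrm{int}(S)$. Now I would use connectedness of $S$: the horizontal line through $p$ need not stay inside $S$, so instead I would argue via the interval $\pi(S) = [m,M]$. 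Since $p \in \mathrm{int}(S)$ and $p - cv \in \mathrm{int}(S)$, we have that the $x$-coordinates of $p$ and $p - cv$ both lie in $[m,M]$, so $c\|v\| \le L$. The main obstacle is then to upgrade this to: for each $i \in \{0, 1, \dots, c\}$, the translate $S_{iv}$ meets $S$ (in interior points, or at least touches), and in particular $S$ and $S_v$ overlap — this is where one must be careful, because overlap of $S$ and $S_{cv}$ does not by itself force overlap of $S$ and $S_v$ without using more than the projection.

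To handle this properly I would argue as follows. Let $p \in \mathrm{int}(S) \cap \mathrm{int}(S_{cv})$, so both $p$ and $q := p - cv$ lie in $\mathrm{int}(S)$. Since $\mathrm{int}(S)$ is open and $S$ is regular closed and connected, $\mathrm{int}(S)$ is connected (this uses that $S$ is the closure of its interior), so there is a path $\gamma$ in $\mathrm{int}(S)$ from $q$ to $p$. The function $t \mapsto (\text{$x$-coordinate of }\gamma(t)) - (\text{$x$-coordinate of }q)$ is continuous, starts at $0$ and ends at $c\|v\| \geq 2\|v\| > \|v\|$, so by the intermediate value theorem there is a point $r$ on $\gamma$ whose $x$-coordinate exceeds that of $q$ by exactly $\|v\|$; but $r$ and $r - v$ need not both be in $S$. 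This still does not immediately close the argument, so the cleaner route is: the point $r := p - (c-1)v$ satisfies $r = q + v$; I want $r \in S$. Use instead a translation-of-the-path argument: the path $\gamma$ from $q$ to $p$ lies in $\mathrm{int}(S)$, and the translated path $\gamma + v$ runs from $q + v = p - (c-1)v$ to $p + v$, lying in $\mathrm{int}(S_v)$. Since $\gamma$ ends at $p \in \mathrm{int}(S)$ and $\gamma + v$ starts at $p - (c-1)v$ with $c - 1 \geq 1$, and both paths have $x$-coordinate ranges that are intervals of length $c\|v\|$, these $x$-ranges overlap in an interval of length $\geq (c - (c-1))\|v\|$...

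\medskip

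\noindent I realize the above is getting tangled, so let me state the plan more cleanly. The clean approach: assume $\mathrm{int}(S) \cap \mathrm{int}(S_v) = \emptyset$. Then I claim $\mathrm{int}(S) \cap \mathrm{int}(S_{iv}) = \emptyset$ for all $i \geq 1$, by strong induction on $i$. The base case $i = 1$ is the hypothesis. For the inductive step, suppose $\mathrm{int}(S)$ meets $\mathrm{int}(S_{iv})$ for some minimal $i \geq 2$; pick $p$ in the intersection. Then $p \in \mathrm{int}(S_{iv}) = \mathrm{int}(S) + iv$, so $p - iv \in \mathrm{int}(S)$, and $p - iv + v = p - (i-1)v$. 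Since $\mathrm{int}(S)$ is path-connected (as $S$ is bounded, connected, regular closed — I would prove $\mathrm{int}(S)$ is connected, hence path-connected being open in $\mathbb{R}^2$), take a path $\gamma \subseteq \mathrm{int}(S)$ from $p - iv$ to $p$. The translate $\gamma + v \subseteq \mathrm{int}(S) + v = \mathrm{int}(S_v)$ goes from $p - (i-1)v$ to $p + v$. By minimality of $i$, $\mathrm{int}(S)$ is disjoint from $\mathrm{int}(S_v), \dots, \mathrm{int}(S_{(i-1)v})$. Now consider the set $A = \gamma \cap (\gamma + v)$: I would show $\gamma$ and $\gamma + v$ cannot cross, forcing a contradiction with $\gamma$ connecting a point of $\mathrm{int}(S_{-v})$-side region to a point of the $\mathrm{int}(S_v)$-side region, using that the strip between consecutive non-overlapping translates acts as a separator. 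The main obstacle — and where I expect to spend the most care — is precisely this separation argument: showing rigorously that the interiors of the translates $S, S_v, S_{2v}, \dots$ being pairwise-disjoint-so-far forces them to be "linearly ordered" along the $v$-direction in a way that a path inside $\mathrm{int}(S)$ cannot reach $\mathrm{int}(S_{iv})$. I would formalize this via the projection $\pi$ onto the $v$-direction together with a connectedness/IVT argument, and if needed invoke that for a connected open set the preimages $\pi^{-1}(\text{point}) \cap \mathrm{int}(S)$ behave well. The induction then yields $\mathrm{int}(S) \cap \mathrm{int}(S_{cv}) = \emptyset$ for all $c \geq 1$, and the case $c \leq -1$ follows by replacing $v$ with $-v$ (noting $S$ and $S_{-v} = S_v - v$ overlap iff $S$ and $S_v$ overlap, after translation), completing the proof.
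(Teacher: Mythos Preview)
Your proposal has two genuine gaps, one technical and one structural.

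\textbf{Technical gap.} The claim that ``$S$ connected and regular closed $\Rightarrow$ $\mathrm{int}(S)$ connected'' is false: take $S$ to be the union of two closed disks tangent at a single point. This $S$ is connected and regular closed, but $\mathrm{int}(S)$ is the disjoint union of two open disks. So you cannot in general produce a path $\gamma \subseteq \mathrm{int}(S)$ from $p - iv$ to $p$. (To be fair, the paper's own proof glosses over a related issue; in the intended application the shapes are simple polygons, where the interior is connected.)

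\textbf{Structural gap.} Even granting a path $\gamma \subseteq \mathrm{int}(S)$ from $p - iv$ to $p$, your mechanism for producing a contradiction is never pinned down. You observe that $\gamma$ and $\gamma + v$ are disjoint (since $\mathrm{int}(S) \cap \mathrm{int}(S_v) = \emptyset$), but disjointness of two arcs with overlapping $x$-ranges is entirely possible---they can simply run at different heights---so nothing forces a crossing. Your fallback to ``separation via projection and IVT'' cannot work: projection onto the $v$-axis discards exactly the 2D information you need, and the IVT only tells you some point on $\gamma$ has a given $x$-coordinate, not that this point lies in $\mathrm{int}(S_v)$.

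What the paper does, and what you are missing, is a Jordan-curve step. With $v$ horizontal and $c$ minimal, take a path $P$ in $\mathrm{int}(S \cup S_{cv})$ from a point near the leftmost of $S$ to a point near the rightmost of $S_{cv}$ (the overlap is what lets this path exist). Close $P$ into a simple closed curve $C$ by three line segments that go far above everything and back down. Now $S_v$ has the same $y$-extent as $S$, so its topmost point lies on the top horizontal segment of $C$ (so nearby interior points of $S_v$ are inside $C$), while its bottommost point lies strictly below $P$ (so outside $C$). A path in $\mathrm{int}(S_v)$ between these must cross $C$; since the added segments are outside $S_v$, it crosses $P$, giving a point in $\mathrm{int}(S_v) \cap \mathrm{int}(S \cup S_{cv})$---contradicting either the hypothesis ($S_v$ vs.\ $S$) or minimality of $c$ ($S_v$ vs.\ $S_{cv}$, i.e., $S$ vs.\ $S_{(c-1)v}$). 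That closing-off-to-a-Jordan-curve idea is the missing ingredient in your plan.
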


\ifabstract
\else
\begin{proof}
Assume that there is a smallest integer $c>1$ for which $S$ and $S_{c*v}$ overlap; we will show
that $S_v$ must overlap one of them, implying the claim.
Without loss of generality, let $v=(1,0)$; see Figure~\ref{fig:overlap}.

\begin{figure}[htp]
\centering
        \includegraphics[width=.6\columnwidth]{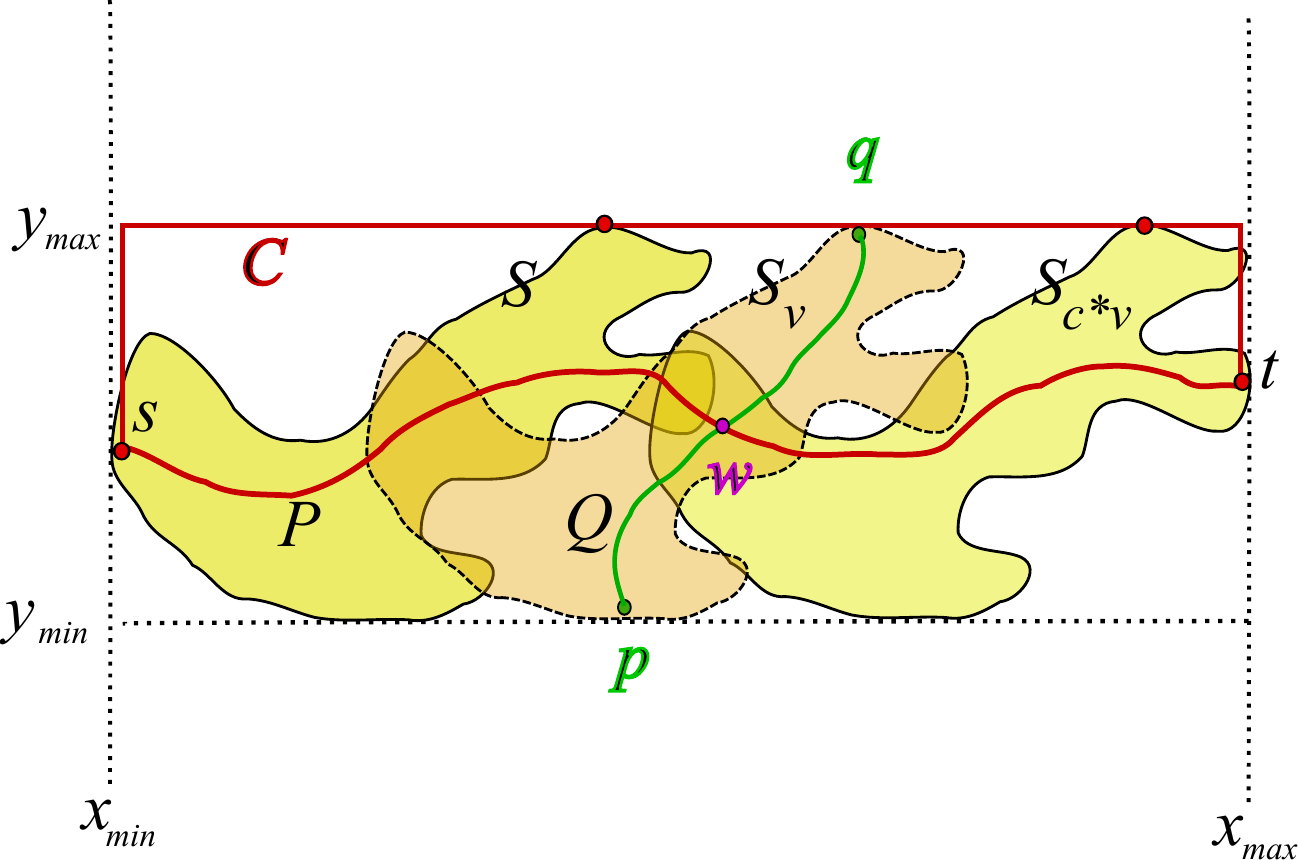}
  \caption{Proof of Lemma~\ref{le:chain}: overlaps between multiple copies of a shape. }
  \label{fig:overlap}
  \centering
\ifabstract
\vspace*{-3mm}
\fi
\end{figure}

Let $x_{\min}$ be the smallest $x$-coordinate of $S$, and let $x_{\max}$ be the largest
$x$-coordinate of $S_{c*v}$. For a small $\varepsilon>0$,
let $s=(s_x,s_y)$ be an interior point of $S$ with $s_x-x_{\min}<\varepsilon$, and let
let $t=(t_x,t_y)$ be an interior point of $S_{c*v}$ with $x_{\max}-t_x<\varepsilon$.
Because $S$ and $S_{c*v}$ are regular closed, and $S\cup S_{c*v}$ overlap,
there is a path $P$ between $s$ and $t$ that stays in the interior of
$S\cup S_{c*v}$. Let $y_{\max}$ and $y_{\min}$ be the largest and smallest
$y$-coordinate of points in $S$. Then we can connect $t$ with $s$ by a vertical
line segment $\ell_1$ up to $y$-coordinate $y_S$, a horizontal line segment $\ell_2$ to
the $x$-coordinate of $s$, and a vertical line segment $\ell_3$ to $s$; this yields
a simple closed curve $C$.

Because any point of $S_{v}$ with maximal $y$-coordinate
lies on $\ell_2$, and $P$ lies strictly below $\ell_2$,
$C$ must contain an interior point $q$ of $S_{v}$ in its interior.
On the other hand, a point of $S_{v}$ with $y$-coordinate $y_{\min}$
must lie below $P$, and therefore outside of $C$, so there must be an interior point $p$ of $S_{v}$
outside of $C$. Because $S_{v}$ is connected, there must be a path $Q$ in the interior of
$S_{v}$ that connects $p$ and $q$. Thus, $Q$ must cross $C$, hence $P$. Therefore,
there is a point $w$ in which $S_{v}$ and
$S\cup S_{c*v}$ overlap, and the claim follows.
\end{proof}
\fi

\begin{theorem}
\label{th:onechain}
For any self-seeding, single-tile, transla\-tion-only (non-rotatable, non-flippable) self-assembly system $\Gamma
= (T,\tau)$, the set of producible assemblies of $\Gamma$ is either just the
single seed copy of $T$, or contains assemblies of unbounded size.
\end{theorem}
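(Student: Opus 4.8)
The plan is to establish a sharp dichotomy: either the single seed tile can never acquire a neighbor, or a single successful attachment forces an infinite \emph{chain} of translated copies to be producible, so that assemblies of every size occur. So suppose some producible assembly of $\Gamma$ has at least two tiles; I must produce, for every $n$, a producible assembly with at least $n$ tiles.

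First I would look at the first nontrivial growth step in a producing sequence. Since $\Gamma$ is self-seeding, single-tile, and translation-only, the second tile to appear is a translate $S_v := S + v$ of the seed tile $S$, for some vector $v$; we must have $v \neq 0$ (otherwise the two tiles would share interior points, which is forbidden), and the bond it forms with $S$ has total strength at least $\tau$. In particular $S$ and $S_v$ touch without overlapping. Now I would invoke Lemma~\ref{le:chain}: because $S$ and $S_v$ do not overlap, $S$ and $S_{cv}$ do not overlap for every integer $c \neq 0$; translating this statement by $iv$ shows that $S_{iv}$ and $S_{jv}$ do not overlap whenever $i \neq j$. This is the geometric heart of the argument.

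Next I would build the chain by induction. Starting from the seed $S = S_{0v}$, attach $S_{v}$, then $S_{2v}$, and in general $S_{(k+1)v}$ to the partial chain $\{S_{0v}, S_{v}, \dots, S_{kv}\}$. The copy $S_{(k+1)v}$ is the translate by $v$ of $S_{kv}$, so it touches $S_{kv}$ exactly as $S_v$ touches $S$ (same contact set and same glues, up to translation, since the tile type is invariant under translation), hence forms a bond of strength at least $\tau$; and by the non-overlap claim above it overlaps none of the already-placed tiles $S_{0v},\dots,S_{kv}$. Incidental touchings with other chain tiles are harmless, as touching incurs no penalty and can only raise the total bond strength. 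Thus each step is a legal $\tau$-stable attachment, so for every $n$ the chain $\{S_{0v},\dots,S_{nv}\}$ of $n+1$ distinct tiles is producible, giving assemblies of unbounded size. If, on the other hand, no second tile ever attaches, the only producible assembly is the single seed copy of $T$.

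The main obstacle is the non-overlap bookkeeping in the inductive step: the $(k+1)$st copy must be verified to avoid \emph{every} earlier copy in the chain, not merely its immediate predecessor, and this is precisely what Lemma~\ref{le:chain} together with translation invariance is engineered to supply; without it one could a priori imagine a ``spiral'' of translated copies that eventually self-intersects. A minor point worth noting is that the model of this section places no collision-free-path (planarity) requirement on an attaching tile, so the bare $\tau$-stability of each attachment is all that needs to be checked.
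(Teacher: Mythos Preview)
Your proof is correct and follows essentially the same approach as the paper's: once a single translate $S_v$ attaches to the seed, Lemma~\ref{le:chain} guarantees that all copies $S_{iv}$ are pairwise non-overlapping, so the chain $S_{0v}, S_v, S_{2v}, \dots$ can be grown one tile at a time, each new copy bonding to its predecessor exactly as $S_v$ bonded to $S$. Your write-up is in fact more careful than the paper's terse argument, explicitly noting the translation step that upgrades ``$S$ and $S_{cv}$ do not overlap'' to ``$S_{iv}$ and $S_{jv}$ do not overlap for all $i\neq j$'' and flagging that no planarity constraint needs to be checked.
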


\ifabstract
\else
\begin{proof}
Consider a seed tile $T_0$ and suppose that we can attach a
translated copy $T_1$ to $T_0$ without causing any overlap. Then Lemma~\ref{le:chain}
shows that we can proceed to assemble an unbounded sequence of tile copies $T_i$,
by attaching each $T_i$ to $T_{i-1}$.
\end{proof}
\fi

\begin{corollary}\label{cor:trans-cant-sim-all-atam}
There are aTAM systems that cannot be simulated by a self-seeding, translation-only, 1-tile self-assembly system.
\end{corollary}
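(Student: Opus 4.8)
The plan is to deduce the corollary almost immediately from Theorem~\ref{th:onechain}, once we exhibit one aTAM system whose producible assemblies are all bounded in size but include an assembly with more than one tile. A minimal witness suffices: let $\Gamma_a=(\{A,B\},\tau,\{A\})$ with glues chosen so that the only possible attachment in the whole system is tile $B$ joining the east side of tile $A$ (via a single strength-$\tau$ glue, all other glues being null). Then $PROD_{\Gamma_a}=\{\{A\},\{A,B\}\}$: every producible assembly has at most two tiles, but one of them, namely $\{A,B\}$, has exactly two.

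Next I would argue by contradiction. Suppose some self-seeding, translation-only, single-tile system $\Gamma_f=(\{p\},\tau',\{p\})$ simulates $\Gamma_a$ at some scale $c\in\mathbb{N}$ under a representation function $\phi$. By Theorem~\ref{th:onechain}, $PROD_{\Gamma_f}$ is either exactly the singleton consisting of one copy of $p$, or it contains assemblies of unbounded size; the two remaining paragraphs rule out each possibility.

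If $PROD_{\Gamma_f}$ is the single seed copy, then the unique producible assembly of $\Gamma_f$ occupies a single tile, hence at most one non-empty $c\times c$ block, so by the cleanliness clause of the simulation definition $\phi^{\ast}$ sends it to the empty assembly or to a single-tile assembly of $\Gamma_a$. But the production requirement demands a producible $\pi$ of $\Gamma_f$ with $\phi^{\ast}(\pi)=\{A,B\}\in PROD_{\Gamma_a}$, and no empty-or-single-tile $\pi$ can have a two-tile image; contradiction. If instead $PROD_{\Gamma_f}$ contains assemblies of unbounded size, pick a producible $\pi$ with $N$ non-empty $c\times c$ blocks, $N$ as large as we like. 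By cleanliness, every non-empty block of $\pi$ corresponds to a lattice position of $\phi^{\ast}(\pi)$ that is itself non-empty or adjacent to a non-empty position, so $\phi^{\ast}(\pi)$ has at least $N/5$ tiles; since $c$ is a fixed constant, $\phi^{\ast}(\pi)\in PROD_{\Gamma_a}$ would then have unbounded size, contradicting that every assembly in $PROD_{\Gamma_a}$ has at most two tiles. Either way the simulation is impossible, which proves the corollary.

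The only step needing any care is the second case: one must argue from the definition of a $c$-scaled simulation that a geometrically large producible assembly of $\Gamma_f$ cannot collapse under $\phi^{\ast}$ to a small assembly of $\Gamma_a$. The cleanliness condition---each non-empty $c\times c$ block is adjacent to a non-empty tile of the image, unless the entire assembly is a single block---is exactly what forbids such degeneracy, and spelling this out is the crux of the argument; everything else follows directly from Theorem~\ref{th:onechain} and the construction of $\Gamma_a$.
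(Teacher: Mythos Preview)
Your proposal is correct and takes essentially the same approach as the paper: the paper's proof is a single sentence observing that the corollary follows from Theorem~\ref{th:onechain} because there exist singly-seeded aTAM systems whose terminal assemblies have more than one but finitely many tiles. You have supplied a concrete witness $\Gamma_a$ and spelled out, via the cleanliness clause of the simulation definition, why neither branch of the dichotomy in Theorem~\ref{th:onechain} is compatible with simulating such a system---detail the paper leaves implicit.
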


\ifabstract
\else
Corollary~\ref{cor:trans-cant-sim-all-atam} clearly follows from
Theorem~\ref{th:onechain}, because there exist singly-seeded aTAM systems
(those with seeds consisting of a single tile) with terminal assemblies that
have more than one, but a finite number of tiles.
\fi

\ifabstract
\begin{wrapfigure}{r}{1.0in}
\vspace{-26pt}
\begin{center}
    \includegraphics[width=0.85in]{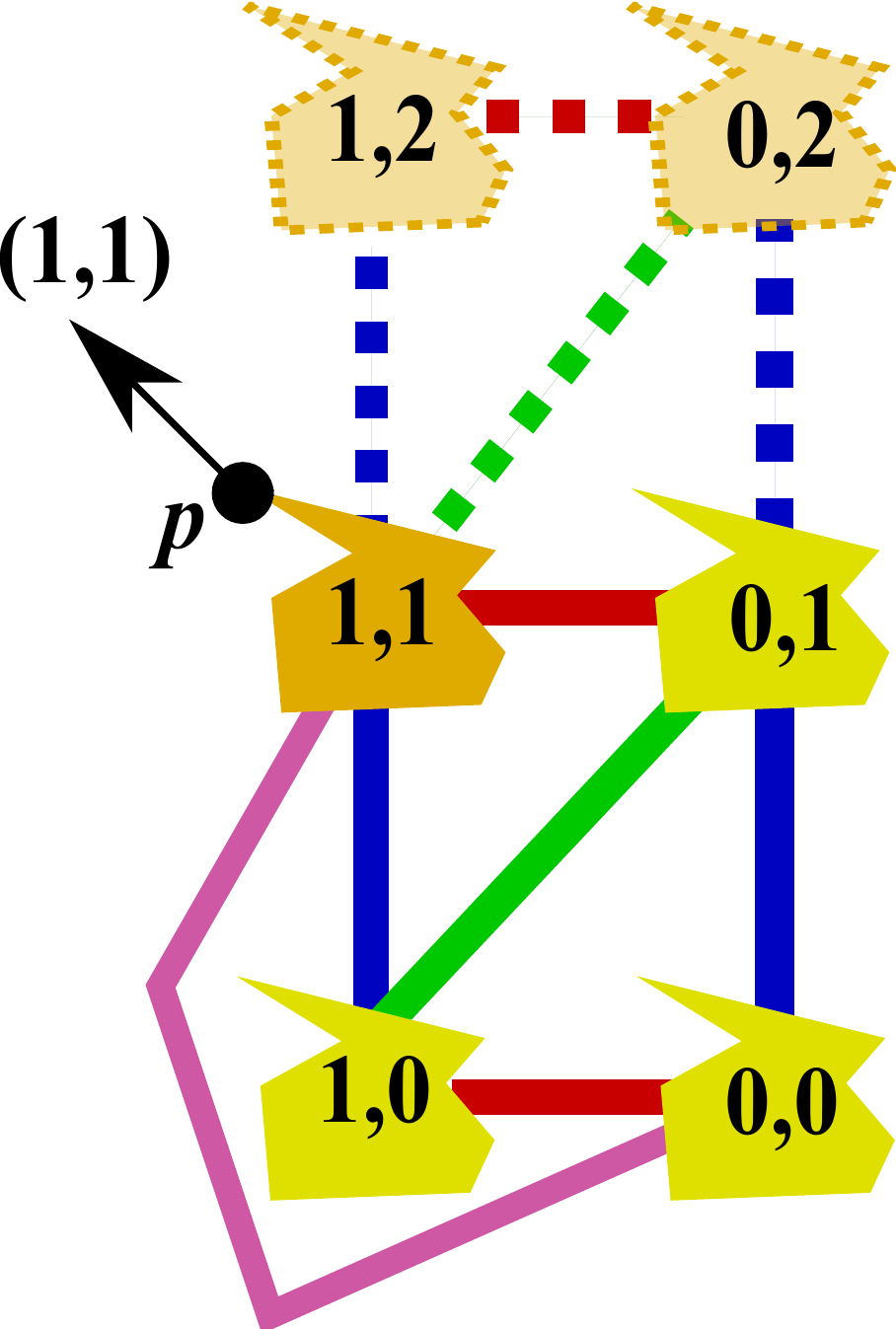} \caption{\label{fig:porkysdeath} \footnotesize Establishing an unbounded assembly.}
\end{center}
\vspace{-15pt}
\end{wrapfigure}
\else
\begin{figure}[htp]
\centering
        \includegraphics[width=.3\columnwidth]{images/planar4}
  \caption{Proof of Theorem~\ref{th:porkysdeath}: establishing an unbounded assembly. }
  \label{fig:porkysdeath}
  \centering
\end{figure}
\fi

Theorem~\ref{th:onechain} shows an inherent problem with self-seeding
trans\-lation-only systems:  to get started requires a strength-$\tau$
attachment between two individual tiles, which leads to infinite growth.
This remains true in the more general case of weaker individual
bonds, where a
tile requires a cooperative bonding between two or more copies copies of itself
to attach in a $\tau$-stable fashion.  In particular, for any translation-only
single-tile system whose seed is a 3-tile assembly, any additional attachment
of even a single tile implies the assembly can grow forever, showing that
translation-only single-tile systems are much weaker than rotational systems.

\begin{theorem}
\label{th:porkysdeath}
Let $\Gamma = (T,\tau,S)$ be a self-assembly system with $|T|=1$, consisting of
a single non-rotatable, non-flippable tile $T_0$ that is closed regular.
If $\Gamma$'s seed $S$ consists of three copies of $T_0$ that are bonded
in a $\tau$-stable manner, the set of producible assemblies of $\Gamma$ is
either just this seed, or contains assemblies of unbounded size.
\end{theorem}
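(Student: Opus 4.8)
The plan is to prove the dichotomy by establishing the single claim: \emph{if the set of producible assemblies of $\Gamma$ is not just the seed $S$, then it contains assemblies of arbitrarily large size.} Since every producible assembly other than $S$ is built from $S$ by attaching copies of $T_0$ one at a time, it suffices to show that once some fourth copy $T_4$ of $T_0$ can $\tau$-stably attach to $S$, the assembly can keep growing without bound. So I would fix such a $T_4$, let $B$ be the (nonempty) set of seed copies of $T_0$ to which $T_4$ bonds — the corresponding bond strengths summing to at least $\tau$ — and, for each $T_j \in B$, let $v_j$ be the translation vector taking $T_j$ onto $T_4$ (so $T_j$ and $T_j + v_j = T_4$ touch with matching glue).

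The easy case is $|B| = 1$, where the unique bond already has strength at least $\tau$; write $v$ for that translation. Here I would mimic the proof of Theorem~\ref{th:onechain}: the copies $\{\,T_4 + i v : i \in \mathbb{Z}\,\}$ are pairwise non-overlapping by Lemma~\ref{le:chain} applied to $T_0$ and $v$ (a legal input, since $T_j$ and $T_4 = T_j + v$ already do not overlap), and each consecutive pair touches with the same matching glue as $T_j$ and $T_4$ — the touch-and-glue relation is invariant under translating and swapping a pair of copies, hence also valid for the vector $-v$ — so each consecutive bond has strength at least $\tau$. The two seed copies not in $B$ are bounded sets, so each meets only finitely many copies of this bi-infinite chain; hence for a suitable direction the tail of the chain is disjoint from all of $S$, and I would attach that tail copy by copy to obtain producible assemblies of unbounded size. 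The point requiring care is choosing the direction (or starting the clean tail far enough out while keeping it connected to the existing assembly) so that no newly placed copy ever collides with a seed copy.

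The case $|B| \ge 2$ is the crux and is where the real argument must do work: now no individual bond need have strength $\ge \tau$, so a bare chain cannot propagate — every newly placed copy must re-create a \emph{cooperative} attachment, simultaneously touching copies that play the roles of (at least) two bonded neighbours of $T_4$. The plan here is to exploit that all four copies are translates of the single connected, regular-closed set $T_0$ and that the four-tile assembly $A_4 = S \cup \{T_4\}$ is itself $\tau$-stable: from the geometry of how $T_4$ nestles into the pocket formed by its bonded neighbours I would extract a translation (or a small set of translations) under which a bonded pair re-creates the same pocket one step outward, and then run a planarity/connectedness argument in the spirit of the proof of Lemma~\ref{le:chain} (and of Figure~\ref{fig:porkysdeath}) to show that this march can be iterated forever without a placed copy ever meeting an earlier one or the bounded seed. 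I expect this combined bookkeeping — guaranteeing simultaneously that every new copy bonds with total strength at least $\tau$ and that no overlaps are ever created with previously placed copies or with $S$ — to be the principal obstacle; in the special sub-case where two copies in $A_4$ are joined by a single strength-$\ge\tau$ bond, the argument collapses to the previous paragraph (or directly to Theorem~\ref{th:onechain}).
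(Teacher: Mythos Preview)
Your reduction to ``assume some fourth copy $T_4$ attaches and then propagate'' is logically sound, and your $|B|=1$ case is essentially the paper's easy case. But the $|B|\ge 2$ case is the entire theorem, and what you have there is a statement of intent rather than an argument: you say you would ``extract a translation under which a bonded pair re-creates the same pocket one step outward'' and ``run a planarity/connectedness argument,'' but you never say which translation, why the new copy bonds with total strength $\ge\tau$, or why it avoids overlap with everything already placed. Those are exactly the ideas that need to be supplied.

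The paper's decomposition is different and more direct. It cases on the \emph{seed's} bond structure, not on $T_4$'s. If any bond between two seed copies already has strength $\ge\tau$ (or the three centers are collinear), Theorem~\ref{th:onechain}/Lemma~\ref{le:chain} finish immediately. Otherwise all three seed copies are pairwise touching with bond strengths $\sigma_b,\sigma_r,\sigma_g<\tau$ but all pairwise sums $\ge\tau$, and their centers span a genuine basis $(1,0),(0,1)$; write the seed as $T(0,0),T(1,0),T(0,1)$. The paper does not wait for an arbitrary $T_4$: it \emph{constructs} one, namely $T(1,1)$, the fourth vertex of the parallelogram. By translation invariance $T(1,1)$ meets $T(0,1)$ and $T(1,0)$ with exactly the same two bonds that $T(1,0)$ and $T(0,1)$ have with $T(0,0)$, so the attachment strength is $\sigma_b+\sigma_r\ge\tau$. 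The only thing to check is that $T(1,1)$ does not overlap $T(0,0)$, and this is where the real geometry lives: if it did, the intersection graph of the four copies would be a planar $K_4$, forcing one copy to lie inside a Jordan curve contained in the union of the other three---impossible because each copy has an extreme point (in direction $(\pm 1,\pm 1)$) outside the convex hull of the other three. Once $T(1,1)$ is placed, the same $K_4$ argument together with Lemma~\ref{le:chain} lets you add $T(0,2)$, then $T(1,2)$, and alternate forever up the strip $\{0,1\}\times\mathbb{Z}_{\ge 0}$.

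So the missing pieces in your plan are precisely (i) the parallelogram choice of the next tile, which guarantees the cooperative bond for free, and (ii) the $K_4$/extreme-point argument, which replaces the vague ``planarity/connectedness'' step and is what actually rules out overlap. Starting from an arbitrary attaching $T_4$ makes both of these harder to see, because nothing ties $T_4$'s offset vectors to the seed's own lattice.
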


\ifabstract

A detailed proof is in the full paper. See Figure~\ref{fig:porkysdeath}
for the basic idea:
after a small number of tiles have been placed in a feasible way,
this can be continued infinitely.
\else
\begin{proof}
Refer to Figure~\ref{fig:porkysdeath}, in which $T_0$ is shown
symbolically by a polygon; note that $T_0$'s shape may be
much more complicated, and that there may be various intersection
points between different copies. A tile is in a geometrically feasible
position if it does not overlap any existing tile; it can attach
to an existing assembly, iff in addition, there are possible bonds of sufficient
strength.

We start by considering a seed consisting of
three mutually touching copies $T_1$, $T_2$, $T_3$ of $T_0$,
denoted by the labels $(0,0)$, $(1,0)$, and $(0,1)$ in the figure.
Let the respective bonds be denoted by ``blue'', ``red'',
and ``green'', with strengths $\sigma_b$, $\sigma_r$, and
$\sigma_g$.   If
the centers of the three tile copies form a collinear arrangement,
or if one of the individual bonds has strength at least $\tau$, the claim
follows immediately from Lemma~\ref{le:chain}, along
the lines of Theorem~\ref{th:onechain}.
Therefore, we assume that
$\sigma_b$, $\sigma_r$, $\sigma_g<\tau$, but
$\sigma_b+\sigma_r\geq\tau$,
$\sigma_b+\sigma_g\geq\tau$,
$\sigma_r+\sigma_g\geq\tau$; furthermore, we may assume
that the tile centers form a non-degenerate triangle,
which spans a two-dimensional vector space
described by the basis vectors $(1,0)$ (between $T_1$ and
$T_2$) and $(0,1)$ (between $T_1$ and $T_3$.)
In the following, we will denote tile copies by their
respective coordinates, i.e.,
$T_1=T(0,0)$, $T_2=T(1,0)$, $T_3=T(0,1)$.

Now consider a potential tile $T(1,1)$.
Because its position relative to $T(0,1)$ is the same
as that of $T(1,0)$ relative to $T(0,0)$, $T(1,1)$ and $T(0,1)$
touch, but their interiors do not overlap, and
they can form a red bond. Similarly,
$T(1,1)$'s position relative to $T(1,0)$ is the same as that
of $T(0,1)$ relative to $T(0,0)$, so $T(1,1)$ and $T(1,0)$ may form
a blue blond. Because of $\sigma_b+\sigma_r\geq\tau$,
$T(1,1)$ can attach to the seed, provided it is
in a geometrically feasible position; because $T(1,1)$
does not overlap $T(0,1)$ or $T(1,0)$, this is the case,
iff its interior does not intersect the interior of $T(0,0)$.

So assume that $T(1,1)$ and $T(0,0)$ overlap.
(In Figure~\ref{fig:porkysdeath}, this is shown symbolically by a purple connection.)
Then the intersection
graph of $T(0,0)$, $T(1,0)$, $T(0,1)$, $T(1,1)$ is a planar geometric
embedding of the complete graph $K_4$; as bonds require more than a single
intersection point, we can represent the edge $e_{i,j}$
between any pair of
vertices $v_i$ and $v_j$ by a path strictly within the union of the respective
tiles, without intersecting any of the two other tiles.
Because we have a planar embedding of $K_4$,
this means that one of the tiles must be strictly
inside of a simple closed curve that
lies completely within the union of the other three.
This leads to a contradiction:
an extreme point $p$ of $T(1,1)$ in direction $(1,1)$ cannot be
in the convex hull of $T(0,0)\cup T(1,0)\cup T(0,1)$. Similarly,
each other tile has an extreme point outside of the convex hull
of the three others. Therefore, the intersection graph
of the tiles cannot be a $K_4$, showing that $T(1,1)$ and $T(0,0)$
cannot overlap.

We conclude that $T(1,1)$ is a feasible extension of the
seed. By Lemma~\ref{le:chain},
$T(0,0)$ and $T(0,2)$ cannot overlap; furthermore, we can apply
the above reasoning for the non-overlapping position of $T(1,1)$
and $T(0,0)$ to show that $T(1,0)$, $T(0,1)$, $T(1,1)$, $T(0,2)$
cannot form a $K_4$, so $T(0,2)$ and $T(1,0)$ do
not overlap.  Hence, we conclude that $T(0,2)$ is a further
feasible extension, using green and blue bonds of combined strength
$\sigma_r+\sigma_g\geq\tau$.

This can be iterated: by alternating extensions at
$T(1,i)$ and $T(0,i)$, we get an unbounded assembly.
\end{proof}
\fi

Matters get more
involved with arbitrary seeds. The following conjecture
implies that
translation-only, single-tile systems have significantly
reduced computing power.

\begin{conjecture}
\label{con:generalseed}
Let $\Gamma = (T,\tau)$ be a self-assembly system with $|T|=1$,
consisting of a single non-rotatable, non-flippable
tile $P$ that is closed regular, i.e.,
a closed two-dimensional set that is equal to the closed hull
of its interior points. If $\Gamma$ is self-seeded with
a finite number $n$ of copies of $P$ that are bonded in a $\tau$-stable manner,
then (regardless of the geometric complexity of $P$),
any producible assembly of $\Gamma$ consists either of $O(n^2)$ tiles,
or is unbounded. Furthermore, the diameter of any finite assembly will
be linear in the diameter of the seed.
\end{conjecture}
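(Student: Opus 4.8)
\textbf{Proof strategy (for Conjecture~\ref{con:generalseed}).}
The plan is to extend the mechanism of Theorem~\ref{th:porkysdeath} from three‑tile seeds to arbitrary $n$‑tile seeds. Since $\Gamma$ uses a single non‑rotatable, non‑flippable tile, every tile of a producible assembly is a translate $T(p)$ of the base tile $T_0$, every bond is realized by a relative displacement from a finite set $V=\{\pm v_1,\dots,\pm v_k\}$ of \emph{bond vectors} (the displacements at which two copies of $T_0$ can touch along matching‑glue boundary portions; if a glued side is a segment, one works up to the combinatorial type of contact so that $V$ is effectively finite), and each non‑seed attachment uses bonds to previously present tiles whose strengths sum to at least $\tau$. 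Lemma~\ref{le:chain} is the workhorse: whenever $T(p)$ and $T(p+v)$ do not overlap, neither do $T(p)$ and $T(p+cv)$ for any integer $c$, so admissible displacements are closed under integer scaling — this is exactly what let a single legal attachment bootstrap into an infinite staircase in Theorems~\ref{th:onechain} and~\ref{th:porkysdeath}.

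The core would be a \emph{periodicity‑forces‑infinity} lemma. Fix a constant $C$ depending only on $T_0$ (and on $|V|$), and suppose toward a contradiction that $\Gamma$ produces a finite assembly $A$ with $\mathrm{diam}(A)>C\cdot\mathrm{diam}(S)$. Since $A$ is connected and all non‑seed tiles were added by $\tau$‑stable attachments using vectors from $V$, a longest chain in the bond graph from the seed out to an extreme tile of $A$ has length $\Omega(\mathrm{diam}(A)/\mathrm{diam}(T_0))$, which by the diameter hypothesis far exceeds the number of distinct ``local bonding patterns'' (choices of which vectors in $V$ a tile bonds along, together with the attachment‑order constraints). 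A pigeonhole along this chain then produces two congruent sub‑configurations related by a nonzero translation $w$. One iterates the translation $w$: Lemma~\ref{le:chain} guarantees the shifted copies do not overlap the already‑placed tiles they should clear, and a $K_4$‑style planarity / extreme‑point argument (exactly as in the overlap analysis of Theorem~\ref{th:porkysdeath}, where a tile sticking out in a given direction cannot lie in the convex hull of the others) guarantees that consecutive shifted copies do not trap one another and can still bond with total strength $\ge\tau$. This exhibits an infinite legal sequence $T(q),T(q+w),T(q+2w),\dots$, contradicting finiteness of $A$; hence $\mathrm{diam}(A)\le C\cdot\mathrm{diam}(S)$, which is the second assertion of the conjecture. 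The first assertion then follows by routine packing: a $\tau$‑stable connected assembly of $n$ copies of the bounded tile $T_0$ has diameter $O(n)$, so $\mathrm{diam}(S)=O(n)$ and $\mathrm{diam}(A)=O(n)$; and any family of pairwise non‑overlapping translates of the fixed‑area set $T_0$ lying in a region of diameter $d$ has size $O(d^2)$, giving $O(n^2)$ tiles.

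I expect the main obstacle to be precisely the geometric case analysis needed to make the $K_4$/extreme‑point step go through against an \emph{arbitrarily complicated seed of unbounded size}. In Theorem~\ref{th:porkysdeath} there were only four tiles to reason about, so ``one tile cannot be surrounded by the union of the other three'' was immediate; here a candidate extension $T(q+cw)$ must be certified feasible against the \emph{entire} current assembly, and the seed (or earlier‑placed tiles) might in principle block or deflect the intended periodic ``arm.'' Controlling this seems to require choosing the repeating translation $w$ so that it points ``outward,'' away from the bulk of $A$ — which is exactly why the diameter threshold $C\cdot\mathrm{diam}(S)$ is the right hypothesis (a long enough protruding chain must eventually escape every bounded obstruction). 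Making this outward‑direction argument rigorous, while simultaneously handling purely cooperative attachments in which no single bond has strength $\tau$ (so that the set of directions one is ``allowed'' to grow in is itself pattern‑dependent), is the delicate point, and is the reason the statement appears here as a conjecture rather than a theorem.
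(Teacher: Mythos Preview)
The statement you are addressing is labeled in the paper as a \emph{conjecture}; the paper does not prove it. After stating it, the authors offer only the one-sentence hint that ``A proof should be based on Lemma~\ref{le:chain}, Theorem~\ref{th:porkysdeath}, and exploit the planarity of the tile-touching graph during the assembly, in particular the length of its exterior face,'' and they remark that the slider construction shows the $O(n^2)$ bound would be tight. So there is no proof in the paper for your attempt to be compared against.

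Your strategy agrees with the paper's hint on the foundational tools (Lemma~\ref{le:chain} and the $K_4$/extreme-point mechanism from Theorem~\ref{th:porkysdeath}), and you are appropriately candid that the argument is incomplete, correctly isolating the hard step: certifying that a periodic arm can be extended without being blocked by an arbitrarily complex seed or by earlier-placed tiles. Where your sketch diverges from the paper's hint is in the global combinatorial mechanism. You propose a pigeonhole on ``local bonding patterns'' along a longest chain from the seed to an extreme tile, whereas the paper points specifically to the planarity of the tile-touching graph and the length of its \emph{exterior face}. These are not quite the same idea: your pigeonhole counts attachment types along a path that may run through the interior of the assembly, while the paper's suggestion is that the outer boundary of a planar assembly has limited combinatorial room, so repetition must occur \emph{on the boundary}. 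The exterior-face viewpoint may address more directly the ``outward-pointing'' obstruction you flag, since a repeated configuration on the outer face already has a geometrically free side; your chain-pigeonhole has to argue separately that the repetition lands somewhere unblocked. Neither outline constitutes a proof, and you are right to present yours as a strategy rather than a theorem.
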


A proof should be based on Lemma~\ref{le:chain}, Theorem~\ref{th:porkysdeath},
and exploit the planarity of the tile-touching graph during the assembly,
in particular the length of its exterior face.

Our slider construction yields an assembly with $\Omega(n^2)$ tiles
for a seed with $n$ tiles, so the bound would be tight.

\section{Plane Tilings}
\label{Plane Tilings}

Here we apply the idea of bump-and-dent geometry in the $n$-gon polygon tile assembly construction to another, substantially older type of ``tile system''.
Plane tiling has been studied for many decades, and the standard problem in this area is the following:  given a set of polygon-shaped tiles with patterns on them, find an infinite arrangement of these tiles that covers the plane, subject to satisfying constraints on the patterns.
Wang tiles are one such family of plane tiling systems: a Wang tile consists of a square tile with each edge colored, with the constraint that each edge must be matched to an adjacent edge of another tile with the same color.
However, there are a large number of other plane tiling systems such as Penrose tilings, Robinson tilings, and the interlocking tessellations of M.C. Escher.

The complexity of such systems comes from extending both the set of transformations the tiles may undergo and the rules enforced on adjacent tiles.
With the notation of Section~\ref{def:planetiling}, classical square Wang tiles belong to the plane tiling family $(\{\}, c_m)$, and so may not be rotated or flipped; adjacent tiles must have identical colors on coincident tile sides.
Square Robinson tiles~\cite{Robinson-1971}, on the other hand (belonging to $(\{t_r, t_f\}, c_c)$), may be rotated or flipped, but  have complementary color patterns on coincident tile sides\ifabstract. \else  ~(see Figure~\ref{fig:aperiodic_ten_tiles}). \fi

Clearly, Wang tile systems have a strong resemblance to the square-tile aTAM systems discussed in this paper. However, the models are qualitatively different: the behavior of a set of Wang tiles, and a similar aTAM system with edge colors exchanged for glues, are unrelated.
For instance, coincident edges of adjacent tiles in an {\em assembly} may have different glues while coincident edges in a {\em plane tiling} are required to have matching colors.
Also, partially completed tile assemblies are guaranteed to be contained in some terminal assembly (and are computable in polynomial time on a Turing machine), while partial tilings are {\em not} guaranteed to be contained in any plane tiling. 

Nevertheless, the modest similarities are sufficient to adapt the bump-and-dent approach from Section~\ref{sec:manygons} to create a canonical set of nearly-plane tiling systems that simulate any plane tiling system of colored squares or hexagons. 

\paragraph{Our results}   
\ifabstract 
We consider the set of plane tiling systems covered under the definition found in Section~\ref{def:planetiling}, restricted to squares and hexagons with colored patterns and/or small surface geometries along their edges.  
Essentially, we provide a straightforward method to convert each member of this wide class of plane tiling systems into a single-tile system on the same lattice. 
Specifically, we show how any such tile system in the families $(\{\}, c_m)$, $(\{t_r, t_f\}, c_m)$, $(\{\}, c_c)$, and $(\{t_r, t_f\}, c_c)$, can be simulated by a single-tile nearly-plane tiling system in the family $(\{t_r, t_f\}, c_c)$. 
\begin{theorem}
Each colored square and hexagon plane tiling system in the families 
$(\{\}, c)$, $(\{t_r, t_f\}$, $c)$, where  $c\in \{ c_c, c_m\}$, 
is simulated by an $n$-gon nearly-plane tiling system.
\end{theorem}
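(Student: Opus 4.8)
The plan is to run the $n$-gon argument of Section~\ref{sec:manygons} in the tiling world, reading ``is part of a complete nearly-plane tiling'' wherever the self-assembly construction reads ``can attach $\tau$-stably''. First I would push the adjacency constraint into bump-and-dent geometry exactly as in Section~\ref{sec:glue-geometry}: that modification turns a finite color alphabet into a single glue plus small geometry and, read without reflection, realizes ``the two original colors are equal'' as geometric complementarity --- which is precisely what the output family $(\{t_r, t_f\}, c_c)$ provides; for the $c_c$ input families one instead encodes each color together with its complement as a single matching complementary-geometry pair (and a reflection-symmetric variant when the input also allows flips). Either way, all four input families reduce to the same situation: each edge carries a small geometric pattern that is complementary-compatible with the edge it is supposed to abut precisely when the original adjacency constraint holds there.

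Next I would build the single tile. For an input system with tile set $S$, $|S| = k$, on the square (resp.\ hexagonal) lattice, take a nearly-regular $n$-gon with $n = 4k$ (resp.\ $n = 6k$; pad $S$ with unused dummy types if needed so that $n \ge 6$, keeping the side geometry non-degenerate as in Section~\ref{sec:n-gons}), sized to sit in one lattice cell and touch its neighbours, with sides $s_0, \dots, s_{n-1}$ in counterclockwise order split into $4$ (resp.\ $6$) consecutive groups of $k$ sides, one group per lattice direction. On side $s_{dk + j}$ place the (complementary-compatible, bump-and-dent) pattern of the direction-$d$ edge of the $j$th tile of $S$, just as glues were placed in the proof of Lemma~\ref{lemma:poly-sim}, together with a small amount of \emph{coarse} geometry recording which of the $4$ (resp.\ $6$) direction-groups the side lies in. The geometric fact that drives everything is that a convex nearly-regular $4k$-gon (resp.\ $6k$-gon) inside a cell presents a flat facet toward all $4$ (resp.\ all $6$) lattice neighbours exactly when it is rotated by an integer multiple of $90^\circ / k$ (resp.\ $60^\circ / k$); there are $k$ such rotations, so, composed with the lattice rotations and with or without a flip, they give the orientation-to-tile map $\phi$: the $n$-gon rotated into the $j$th phase (and possibly flipped) represents the $j$th tile of $S$ (possibly flipped), in the lattice orientation determined by which direction-group faces north.

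Correctness then splits in two. Forward: from any plane tiling of the input, place the $n$-gon in the $\phi$-preimage orientation at each cell; two edges that were color-matched (equivalently, complementary after the re-encoding) now abut complementarily, the coarse geometries on facing sides agree, and each tile touches exactly its lattice neighbours while leaving only small gaps --- a legal nearly-plane tiling. Backward: in any nearly-plane tiling of the output system, the requirement that every tile touch all of its lattice neighbours forces each $n$-gon into one of the $k$ phases (a generic rotation tucks the polygon into the interior of its cell, so it cannot reach some neighbour), and the bump-and-dent color geometry on the facing sides then pins the relative placement down to a complementary-matching pair exactly as argued in Section~\ref{sec:glue-geometry}; applying $\phi$ to each tile's orientation recovers a plane tiling of the input. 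When the input family is $(\{\}, c)$ I make the coarse geometry both direction-distinguishing and chiral, so a rotated or flipped $n$-gon has a coarse mismatch with some neighbour and hence appears in no nearly-plane tiling; when it is $(\{t_r, t_f\}, c)$ I make the coarse geometry rotation- and reflection-symmetric so that all those orientations occur, and $\phi$ becomes a bijection onto the set of (tile of $S$, allowed orientation) pairs.

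The main obstacle is the geometric core of the backward direction: showing rigorously that in a \emph{complete} nearly-plane tiling no $n$-gon can sit at an off-phase rotation. In the self-assembly setting a bad placement simply fails to $\tau$-attach, but here a bad placement must be shown to be globally unextendable; the leverage is precisely the nearly-plane requirement that every tile touch all $4$ (resp.\ $6$) of its lattice neighbours --- a convex nearly-regular polygon inside a cell can manage this only near the $k$ phases, after which the small matching geometry forces exactness. Making this airtight needs the same care about bump and dent sizes as Property~1 and Property~2 of Section~\ref{sec:n-gons} (interior dents do not self-collide, stray bumps do not block legal abutments), re-verified for $4k$- and $6k$-gons and for the slightly larger coarse direction/chirality markers, together with a check that the remaining gaps are as small as claimed. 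The rest --- that $4k$ or $6k$ sides suffice, that the construction lands in the family $(\{t_r, t_f\}, c_c)$, and that the advertised consequences (one tile simulating any Turing machine, one tile forcing aperiodicity with a local matching rule, and intrinsic universality over countably many tiling systems) follow --- is then routine.
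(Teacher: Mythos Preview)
Your proposal is correct and follows essentially the same route as the paper: build an $n$-gon with $4k$ or $6k$ sides carrying the edge patterns of the $k$ input tiles, use the Section~\ref{sec:manygons} bump-and-dent mechanism to encode colors as complementary geometry, and use direction-marking geometry (your ``coarse geometry'', the paper's direction-specific colors plus $CW$/$CCW$ bump placement) to kill unwanted rotations and flips when simulating the $(\{\},c)$ families while dropping it for the $(\{t_r,t_f\},c)$ families.

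The one noteworthy difference is in the backward direction. You flag as the ``main obstacle'' showing that an off-phase $n$-gon cannot touch all of its lattice neighbours, and you propose to argue this geometrically from the touching requirement alone. The paper sidesteps this entirely: its $n$-gon nearly-plane tiling system is \emph{defined} so that lattice-adjacent tiles must meet at a pair of sides with matching geometry (full coincidence) and matching colors --- this is just what membership in the output family $(\{t_r,t_f\},c_c)$ means once colors have been pushed into geometry. Under that definition the phase is forced immediately, and the backward direction becomes a one-line check rather than a geometric argument. Your more ambitious route should also work, but it buys you nothing here and costs you the extra care about polygon sizing you correctly anticipate; the paper's definitional shortcut is the cleaner move.
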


As a corollary we show that there exist single-tile plane tiling systems that are aperiodic. We also get that there are single-tile plane tiling systems that are intrinsically universal: they simulate {\em all systems} with permitted sets of transformations or tile-adjacency constraints. Proofs of these statements are left to the full version of the paper, as well as an example transformation of Robinson's aperiodic tile set.

\else
We consider the set of plane tiling systems covered under the definition found in Section~\ref{def:planetiling}, restricted to squares and hexagons with colored patterns and/or small surface geometries along their edges.  
Essentially, we provide a straightforward method to convert each member of this wide class of plane tiling systems into a single-tile system on the same lattice. 
Specifically, we show how any such tile system in the families $(\{\}, c_m)$, $(\{t_r, t_f\}, c_m)$, $(\{\}, c_c)$, and $(\{t_r, t_f\}, c_c)$, can be simulated by a single-tile nearly-plane tiling system in the family $(\{t_r, t_f\}, c_c)$.
\begin{theorem}
Each colored square and hexagon plane tiling system in the families 
$(\{\}, c_m)$, $(\{t_r, t_f\}$, $c_m)$,   $(\{\}, c_c)$ and $(\{t_r, t_f\}$, $c_c)$ 
is simulated by an $n$-gon nearly-plane tiling system.
\end{theorem}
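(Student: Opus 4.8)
The plan is to transport the rotatable $n$-gon construction of Section~\ref{sec:manygons} to the plane-tiling setting by a short chain of reductions, and then to observe that the resulting correspondence is a bijection on complete (nearly-)plane tilings. First, I would reduce the matching constraint $c_m$ to the complementary constraint $c_c$: on each edge carrying color $a$, split the edge into a half carrying a bump sequence encoding $a$ and a half carrying the complementary dent sequence encoding $a$, so that two edges can be made coincident exactly when they carried the same color originally. This is the classical jigsaw encoding of \cite{Demaine-Demaine-2007-jigsaw}, and it lets me assume henceforth that adjacency is enforced purely by small surface geometry, as in Section~\ref{sec:glue-geometry}.

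Next, given $k$ square (resp.\ hexagon) tiles carrying only bump-and-dent geometry, I would build a single nearly-regular $4k$-gon (resp.\ $6k$-gon) $p$ inscribed in the lattice cell so that, in each seating, four (resp.\ six) of its sides reach the lattice-adjacent cells while the other sides form short chamfers at the corners; these chamfers are exactly the small gaps that make the output a \emph{nearly}-plane tiling. As in Lemma~\ref{lemma:poly-sim}, the sides of $p$ are grouped by lattice direction, each group holds one side per original tile type, and each side is relabeled with a direction-specific copy of its glue, so the orientation at which $p$ is seated names both an original tile type (via which group is active) and, through the lattice symmetry, a transformation applied to it. Then, using the forbidding bump-and-dent geometry of Section~\ref{sec:n-gons} together with the non-repeating-subside-length trick of Section~\ref{sec:glue-geometry} that rules out mirrored matchings, I would arrange that among the finitely many ways of seating $p$ in a cell with its active sides along lattice edges, the only geometrically feasible ones are those encoding a pair (original tile type, transformation permitted by $T$). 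When $T=\{t_r,t_f\}$ this keeps all of them; when $T=\{\}$ it must actively suppress the rotated and flipped copies, even though the ambient output family $(\{t_r,t_f\},c_c)$ nominally permits them --- this is precisely the case analysis (bad $180^\circ$ rotation, bad flip, bad flip-and-rotate) of Figures~\ref{fig:polygonal-tile-bad-rotation}--\ref{fig:polygonal-tile-bad-flip2}, replayed here for the square and hexagonal lattices.

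Finally I would check the simulation. Unlike the self-assembly case there is no seed, no assembly order, and no guarantee that partial tilings extend; but a (nearly-)plane tiling is just a placement of tiles at all lattice sites whose coincident edges satisfy the adjacency constraint, so the correspondence to verify is entirely local. Replacing each tile of any plane tiling of the original system by $p$ at the orientation encoding (that tile, its transformation) yields a nearly-plane tiling over $p$, because coincident edges now carry complementary geometry exactly when the originals were adjacency-compatible; conversely, any nearly-plane tiling over $p$ reads back site by site to a plane tiling of the original, since by the previous step every site sits in a legal (tile, transformation) orientation and every coincident-edge pair is geometrically feasible only when the matching original edges were compatible. Carrying this out for each of $(\{\},c_m)$, $(\{t_r,t_f\},c_m)$, $(\{\},c_c)$, $(\{t_r,t_f\},c_c)$ gives the theorem.

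The main obstacle is the seating/forbidding step: the geometry must be simultaneously permissive enough never to block a seating or adjacency that should occur, and restrictive enough that no parasitic seating survives --- in particular suppressing reflections when the source family disallows them, and ensuring the chamfer gaps introduced for the ``nearly''-plane relaxation cannot be exploited to seat $p$ in an illegal way. This is a finite but delicate case check, and it is where essentially all the work beyond quoting Section~\ref{sec:manygons} resides.
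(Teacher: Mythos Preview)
Your proposal is correct and follows essentially the same route as the paper: transport the rotatable $n$-gon construction of Section~\ref{sec:manygons} to the plane-tiling setting, use direction-specific labels together with the $CW$/$CCW$ bump-and-dent placement to kill unwanted orientations when $T=\{\}$, drop that machinery when $T=\{t_r,t_f\}$, and observe that the correspondence is purely local so it gives a bijection on full (nearly-)plane tilings.

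The one organizational difference is how the $c_m$/$c_c$ split is handled. The paper proves two parallel lemmas: for the $c_m$ families it keeps colors on the $n$-gon and reuses Lemma~\ref{lemma:poly-sim} almost verbatim, while for the $c_c$ families it first converts the complementary color pattern into height-encoded surface geometry and then runs the same $n$-gon construction with a blank color (for $T=\{t_r,t_f\}$) or direction-specific colors plus geometry (for $T=\{\}$). You instead reduce $c_m$ to $c_c$ up front via the half-bump/half-dent jigsaw encoding and then treat everything uniformly in the geometric $c_c$ world. Both orderings work; yours is a bit cleaner conceptually (one reduction, one construction), while the paper's avoids introducing extra geometry in the $c_m$ case and so stays closer to the self-assembly construction already verified in Section~\ref{sec:manygons}. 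The ``delicate case check'' you flag as the main obstacle is exactly what the paper also leans on --- the same three forbidden-orientation cases of Figure~\ref{fig:polygonal-bad-bindings} --- and, as you note, it is in fact easier here than in the self-assembly setting because a tile in a plane tiling must be compatible on \emph{all} lattice sides rather than merely two.
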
 

We show that as a corollary there exist single-tile plane tiling systems that are aperiodic. We also get that there are single-tile plane tiling systems that are intrinsically universal: they simulate {\em all systems} with permitted sets of transformations or tile-adjacency constraints.

\paragraph{An $n$-gon nearly-plane tiling system}
Our simulator single-tile plane tiling systems borrow the idea of bump-and-dent geometry found in the self-assembly construction in Section~\ref{sec:manygons}, but modify its usage to fit the setting of plane tilings (as opposed to a  self-assembly-based pfbTAM model).
This resulting tiling system consists of a single polygon that is a convex regular polygon with small geometry added to each side.
We add the special constraint that any valid tiling using this system must consist of tiles placed on a square or hexagonal lattice, with tiles adjacent on the lattice meeting at a pair of sides that have matching geometry (see Section~\ref{sec:models} for definitions) and matching colors.
Note that this does not completely cover the plane (hence the qualifier ``nearly''), but does produce a dense infinite pattern.

\subsection{Simulating plane tiling systems}

Recall that in Section~\ref{sec:manygons}, unwanted rotations of tiles are eliminated by replacing glues on each pair of opposing sides with colors (glues) unique to that direction pair (via a numbering scheme for creating glue subsets).
The remaining unwanted rotations are eliminated via geometry, using a bump-and-dent pair located on the clockwise or counterclockwise end of each edge.
The result was that a hTAM system (similar to a hexagonal plane tiling system of the family $(\{\}, c_m)$) was converted to a single-tile pfbTAM system (similar to a single-tile plan tiling system of the family $(\{t_r, t_f\}, c_m)$).
Simulating systems in the family $(\{t_r, t_f\}, c_m)$ is possible by eliminating all direction-specific glues and surface geometry.

\begin{lemma}
All colored square and hexagonal plane tiling systems in the families $(\{\}, c_m)$ and $(\{t_r, t_f\}$, $c_m)$ can be simulated by an $n$-gon nearly-plane tiling system.
\end{lemma}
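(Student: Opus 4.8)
The plan is to transport the $n$-gon construction of Section~\ref{sec:manygons} from the self-assembly setting to the plane-tiling setting; the absence of a seed and of any notion of growth will actually make the correctness argument cleaner. First I would reduce the family $(\{t_r,t_f\},c_m)$ to $(\{\},c_m)$: given an input system whose tiles may be rotated and reflected, replace its tile set $S$ by the (still finite) set $S'$ of all distinct rotations and reflections of tiles of $S$; a plane tiling of the original system is, after forgetting this relabelling, exactly a plane tiling of the system $(S',L,\{\},c_m)$, so it suffices to treat the fixed-orientation, matching-color family.

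Next, starting from a $(\{\},c_m)$ system with $c=|S|$ square (resp.\ hexagonal) tiles, I would build a single $n$-gon with $n=4c$ (resp.\ $n=6c$), inscribed in a unit square (resp.\ hexagon), exactly as in the proof of Lemma~\ref{lemma:poly-sim}: partition the $n$ sides into $4$ (resp.\ $6$) consecutive groups of $c$, let group $k$ represent the $k$-th lattice direction, and give side $j$ of group $k$ the color pattern on the $k$-th side of the $j$-th tile of $S$, tagged with a direction-pair label (Step~\ref{item-direction-pairs}) so that colors in one group can only interact with colors in the opposite group, together with a bump-and-dent pair placed at the $CCW$ end for the first half of the groups and at the $CW$ end for the second half. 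Two bookkeeping steps remain: (i) since the target family uses complementary pairs ($c_c$) rather than matching colors, replace each color $a$ by a palindromic pattern $P_a$ declared to be its own complement, so that ``$a$ meets $a$'' becomes ``$P_a$ meets its complement''; and (ii) impose, as in the definition of a nearly-plane tiling system in Section~\ref{def:planetiling}, that a valid tiling places the $n$-gon at every lattice site, each tile meeting its lattice neighbours along a pair of sides with matching geometry and complementary colors. The representation map then sends a placed $n$-gon to the original tile whose sides coincide with the sides of its circumscribing square/hexagon in the given rotation and reflection.

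For correctness, the forward direction is immediate: given a valid plane tiling of the input, place each $n$-gon in the orientation that encodes the tile appearing there; adjacent tiles had matching colors, hence complementary $P_a$-patterns, and both neighbours sit in ``good'' orientations, so their $CW/CCW$ geometries also match. For the reverse direction I would argue, exactly as in the ``incorrect bindings cannot occur'' paragraph of Lemma~\ref{lemma:poly-sim} together with the reflection analysis of Section~\ref{sec:glue-geometry}, that in any nearly-plane tiling of the $n$-gon system each tile, being forced to meet all of its lattice neighbours with matching geometry and a nonzero color interaction, must lie in one of the $c$ good orientations (up to the permitted reflection): every other orientation fails either a direction-pair color tag or the $CW/CCW$ bump-and-dent test against at least one neighbour. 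Once all tiles lie in good orientations, the color constraint on the $n$-gon tiling is precisely the matching constraint of the input system, so the two systems have the same tilings under the representation map.

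The main obstacle will be this reverse direction: with no seed to anchor orientations and with the output model freely permitting both rotation and reflection, I must check that the purely local geometry-plus-tag data simultaneously and consistently rules out every bad orientation at every site — in particular that a \emph{reflected} $n$-gon in a would-be bad orientation cannot slip through with a valid match, which is exactly the concern handled by the large-bump argument of Section~\ref{sec:glue-geometry} and by the chiral placement of the $CW/CCW$ geometry.
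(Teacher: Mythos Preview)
Your proposal is correct but takes a different route from the paper for the $(\{t_r,t_f\},c_m)$ family. You reduce that family to $(\{\},c_m)$ by replacing the tile set $S$ with the set $S'$ of all rotations and reflections of tiles in $S$, and then apply the full machinery (direction-pair tags plus $CW$/$CCW$ bump-and-dent geometry) to the enlarged set. The paper instead handles the two families in opposite ways: for $(\{\},c_m)$ it keeps the direction tags and the $CW$/$CCW$ geometry exactly as in Lemma~\ref{lemma:poly-sim}, while for $(\{t_r,t_f\},c_m)$ it \emph{removes} both the direction-specific glue tags and the bump-and-dent geometry, so that the single $n$-gon is free to rotate and reflect and every orientation of the $n$-gon already corresponds to some rotation/reflection of some input tile. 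Your reduction is clean and uniform (one construction, one correctness argument), but it blows up the number of $n$-gon sides by a factor of up to $8$ (squares) or $12$ (hexagons); the paper's approach keeps $n=4|S|$ or $6|S|$ and sidesteps the reverse-direction orientation analysis entirely in the rotatable case, since with no tags and no geometry there is nothing to rule out. Your extra step of encoding $c_m$ as $c_c$ via self-complementary patterns is not something the paper spells out in this lemma's proof; the paper's description of the output $n$-gon system simply imposes ``matching geometry and matching colors'' directly on adjacent lattice neighbours.
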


\begin{proof}
Restricting the $n$-gons to live on lattice of $S$ and adjoin only other $n$-gons with matching geometry and colors (glues in the pfbTAM construction) yields a construction simulating systems with $T = \{\}$.
Eliminating direction-specific glues and surface geometry leaves the $n$-gon tile free to rotate and reflect, thus simulating systems with $T = \{t_r, t_m\}$ directly.
\end{proof}

Next we consider simulating systems with complementary tile adjacency constraints (i.e. $C = c_c$), which we first show is reducible to a complementary geometry constraint.
Given some surface geometry, convert the geometry into a pattern by mapping the height of the geometry relative to the flat geometry to a color value (dents have negative values).
The opposing mapping is also possible: for a given pattern consisting of a function from distance along the tile's side to a color, map each color to a height of bump.
Division of the colors into primal-dual sets of complementarity yields a partition of the colors into bump colors and dent colors.

Given a translation-only geometry-complementation system, create a distinct color for each opposite-direction pair, use the clockwise-counterclockwise placement of bumps-and-dents as in the construction in Section~\ref{sec:manygons}, and place a small copy of the geometry from the side of each tile in a small interval along the corresponding side of the $n$-gon tile.
Given a system in the family $(\{t_r, t_f\}, c_c)$, create a similar construction but with a single `blank' color on all sides of the $n$-gon.

\begin{lemma}
All colored square and hexagon plane tiling systems in the families $(\{\}, c_c)$ and $(\{t_r, t_f\}$, $c_c)$ can be simulated by an $n$-gon nearly-plane tiling system.
\end{lemma}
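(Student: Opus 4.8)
The plan is to mirror the $n$-gon construction of Section~\ref{sec:manygons}, replacing the role played by glue strengths with the role of \emph{complementary} edge patterns. First I would carry out the reduction, already sketched just before the lemma, of a complementary-color constraint ($C = c_c$) to a complementary-\emph{geometry} constraint: fix any total order on the finite color set, map the $j$th color to a bump of a distinct height $h_j$ and its designated complement to a matching dent of depth $-h_j$ (dents counted as negative heights). Under this dictionary, two coincident sides have complementary colors exactly when the associated small-scale surface geometries fit together --- every bump of one meeting a congruent dent of the other --- i.e.\ exactly when the two sides have \emph{matching geometry} in the sense of Section~\ref{sec:models}. So it suffices to simulate a plane tiling system whose adjacency rule reads ``coincident sides must carry complementary surface geometry.''

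Next I would handle the translation-only family $(\{\}, c_c)$. The $n$-gon is built as in Lemma~\ref{lemma:poly-sim}: each opposite-direction pair of lattice sides receives its own block of colors via the numbering scheme, so an $n$-gon side can coincide only with a side representing the opposite lattice direction; and near one end of every side I place the clockwise/counterclockwise bump-and-dent pair that forbids the remaining spurious rotations and flips. Into a small reserved subinterval in the \emph{middle} of each side I then insert a scaled-down copy of the complementary geometry of the corresponding original tile side produced in the reduction step. One then checks that two $n$-gon sides become coincident iff they represent opposite lattice directions, their orientation bump/dent pairs match, and their middle geometries are complementary --- which is precisely the original adjacency rule --- and that restricting placements to the fixed square or hexagonal lattice yields a nearly-plane tiling system (the only departures from a true plane tiling being the tiny cavities vacated by the dents) whose tilings correspond bijectively to those of the input system.

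For the family $(\{t_r, t_f\}, c_c)$ there are no unwanted orientations to rule out, since rotations and reflections of the input tile are already permitted. So I would drop both the direction-specific colors and the orientation bump/dent pair, give every side of the $n$-gon a single ``blank'' color, and keep only the middle subinterval carrying the scaled complementary geometry; an $n$-gon side may then coincide with any side whose middle geometry is complementary to its own, in any orientation, matching the input family's behavior. Combining this with the preceding lemma (which covers the $c_m$ cases) then gives the theorem for all four families $(\{\}, c_m)$, $(\{t_r,t_f\}, c_m)$, $(\{\}, c_c)$, $(\{t_r,t_f\}, c_c)$.

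The step I expect to be the main obstacle is verifying, in the translation-only case, that stacking several geometric features on one $n$-gon side --- the direction-specific colors, the orientation bump/dent near the end, and the pattern-complementarity geometry in the middle --- produces \emph{no} unintended coincident pairs and no internal collisions. This should reduce to the two properties already established for $n$-gon tiles in Section~\ref{sec:n-gons}: the chosen feature sizes guarantee that a bump on a non-coincident side does not intrude into a neighboring tile and that interior dents do not collide, so the middle geometry can be taken small enough to inherit both properties; and the direction/orientation features behave exactly as in Lemma~\ref{lemma:poly-sim}, which already excludes the rotated and flipped copies. The residual work is bookkeeping: making the reserved middle subinterval, the orientation pair, and the inscribed-polygon vertices pairwise disjoint along each side, and confirming that ``complementary geometry'' is an involution, so that the induced adjacency relation is symmetric as a plane-tiling rule requires.
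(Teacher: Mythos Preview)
Your proposal is correct and follows essentially the same approach as the paper: reduce the complementary-color constraint to complementary geometry via the height/depth encoding, then for $(\{\}, c_c)$ use the direction-pair colors plus the CW/CCW bump-dent orientation markers of Section~\ref{sec:manygons} together with a small copy of the complementary geometry in the middle of each side, and for $(\{t_r, t_f\}, c_c)$ drop the direction colors and orientation markers in favor of a single blank color. The paper's proof is terser than yours---it simply invokes Section~\ref{sec:manygons} for why unwanted orientations are excluded---but the construction and reasoning are the same; your additional bookkeeping about disjoint feature intervals and non-collision is a welcome elaboration rather than a different route.
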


\begin{proof}
Simulating systems from the family $(\{\}, c_c)$ using the described construction does not allow any rotations except those corresponding to an unrotated and unreflected tile in the original system, as the distinct glue pairs and geometry forbid unwanted orientations for the same reasons as in Section~\ref{sec:manygons}.
Systems from the family $(\{t_r, t_f\}, c_c)$ are also simulated correctly, as any rotation or reflection of the $n$-gon corresponds to an orientation of some tile in the simulated set, and has the same geometry as the simulated tile.
\end{proof}

\subsection{An example simulation}

\paragraph{A small aperiodic set of square tiles}

First we describe the ten-tile square plane tile set from the family $(\{t_r, t_f\}, c_c)$ of Robinson~\cite{Robinson71} that yields only aperiodic tilings of the plane.
This system uses a pattern arrows on each tile, with the constraint that every arrow head must meet an arrow tail on an adjacent tile and vice versa: every arrow tail must meet an arrow head.
Figure~\ref{fig:aperiodic_ten_tiles} shows the tile set itself with the aperiodic pattern (orange and blue) and the parity-enforcing pattern (grey) and Figure~\ref{fig:aperiodic_tiling_example} shows an example tiling using this tile set.

\begin{figure}[htp]
    \begin{center}
    \includegraphics[width=.6\columnwidth]{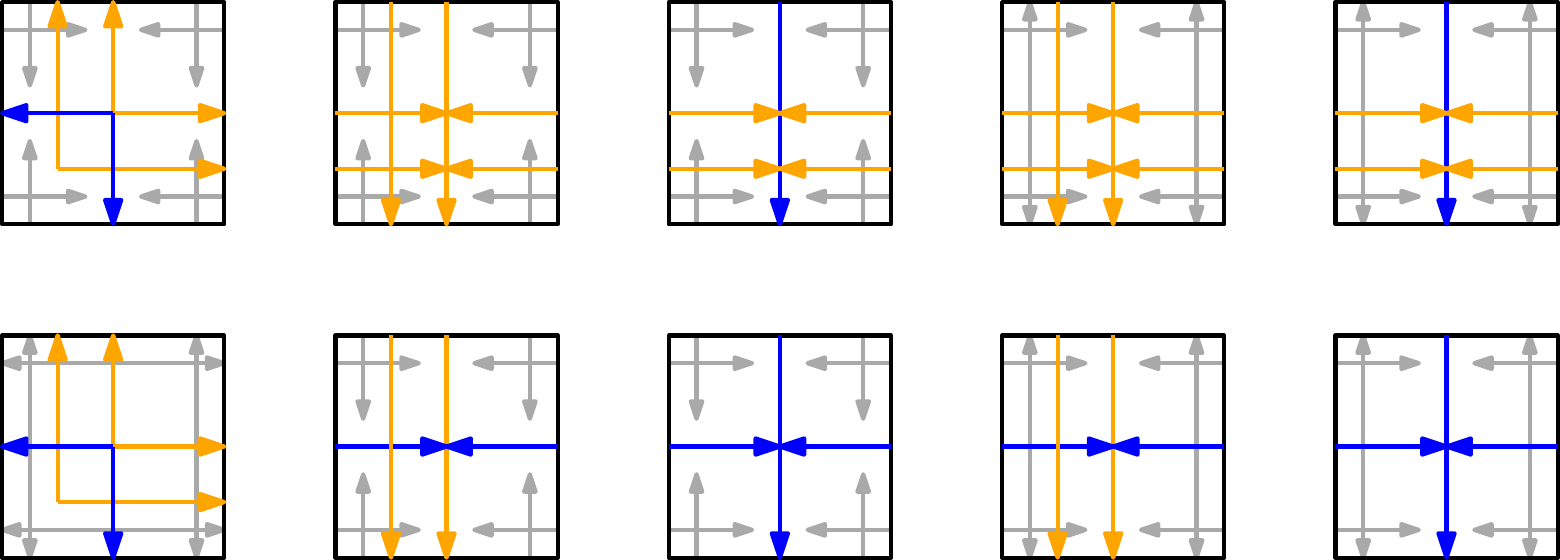}
    \caption	{
    \label{fig:aperiodic_ten_tiles}
    A set of ten aperiodic tiles from Robinson.
    The pattern constraint requires every arrow head on the surface of a tile to be met by a parallel arrow tail on an adjacent tile and vice versa.}
    \end{center}
\end{figure}

\begin{figure}[htp]
    \begin{center}
    \includegraphics[width=.6\columnwidth]{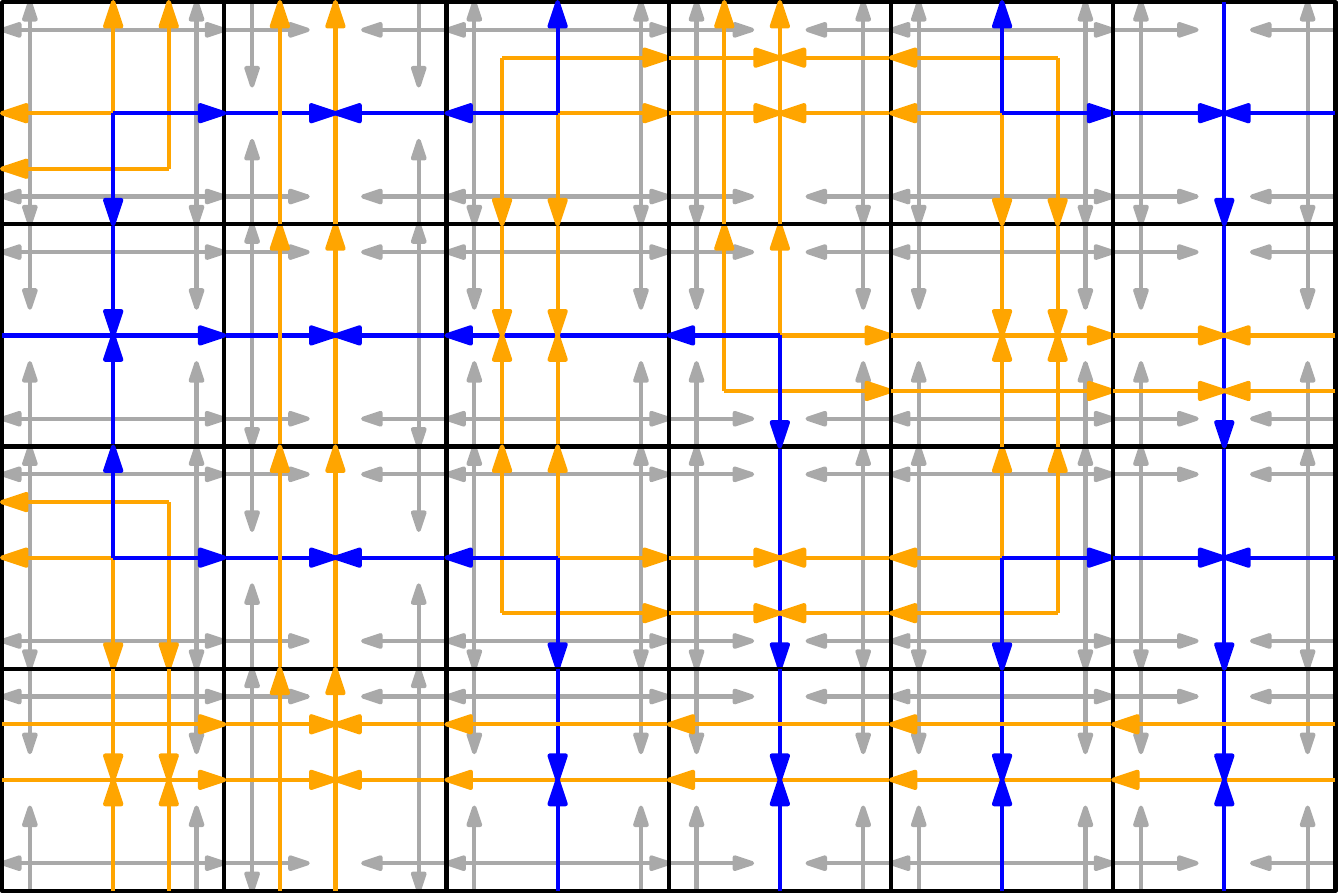}
    \caption{
    \label{fig:aperiodic_tiling_example}
    A (partial) tiling of the plane using the tile set from Robinson.
    }
    \end{center}
\end{figure}

The blue and orange arrows represent constraints in the five `basic tiles' of Robinson Figure 2.
The grey arrows represent parity constraints introduced in Robinson Figure 3 to eliminate the corner geometry of the six (l-most) square tile set found in Robinson Figure 1.

\paragraph{The simulating $n$-gon system}

We use the construction described previously to convert this ten-tile system into an $n$-gon nearly-plane tiling system.
Figure~\ref{fig:aperiodic_ten_tile_manygon} shows the resulting $4c$-gon tile.

\begin{figure}[htp]
    \begin{center}
    \includegraphics[width=.6\columnwidth]{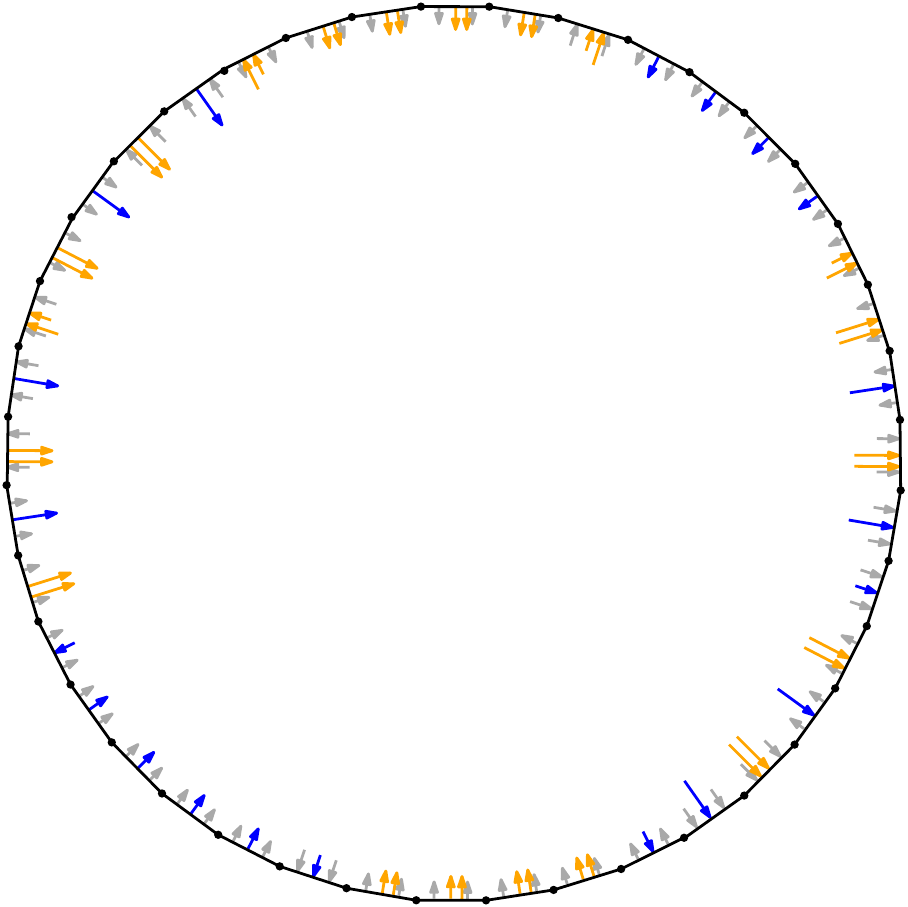}
    \caption{
    \label{fig:aperiodic_ten_tile_manygon}
     The single tile type generated from Robinson's set of ten square tiles.
    The sides of the tile correspond to the sides of every tile in the Robinson tile set such that if the tile's uppermost side is aligned horizontally then the four horizontal and vertical edges correspond to an orientation of a tile in the Robinson set.}
    \end{center}
\end{figure}

\subsection{One tile that simulates Wang tilings}
The previous construction showed that it is possible to take Robinson's plane tiling system, and convert that to a nearly-plane tiling system with a single $n$-gon tile. We can apply this transformation idea to {\em any} plane tiling system on the square or hexagonal grid in order to get a single tile, nearly-plane tiling system. As an immediate corollary, the single tile simulator inherits all (non-trivial) properties of the system being simulated. For example, we know from the previous section that there is a single tile system that tiles the plane aperiodically. 

Lafitte and Weiss introduce simulations between tilings~\cite{lafitte2007universal}. They show the existence of universal tilings: such a tiling is capable of simulating a countably infinite subset of the tilings from {\em each} Wang tile set: changing the scale factor lets us simulate a new tiling, from a new, or the same, tile set.\footnote{As they point out, the {\em countable} restriction is somehow necessary,  because having a single tiling that simulates all tilings, is impossible, because there is a countable set of representation functions but an uncountable set of tilings.} Via our construction, obtain a single tile that is capable of these kinds of simulation.

\fi

\section*{Acknowledgements}
\ifabstract
This work began at the 27th Bellairs Winter Workshop
on Computational Geometry, Barbados.
We thank the  workshop participants,  particularly Brad Ballinger and Anna Lubiw for important discussions on Lemma~\ref{le:chain}. We also thank Jarkko Kari for fruitful discussions on aperiodic plane tilings.
\else
This work was initiated at the 27th Bellairs Winter Workshop
on Computational Geometry held on February 11-17, 2012 in Holetown, Barbados.
We thank the other participants of that workshop for a fruitful and
collaborative environment.  In particular, we thank Brad Ballinger and
Anna Lubiw for important discussions regarding Lemma~\ref{le:chain}. In addition, we thank Jarkko Kari for interesting and fruitful discussions on aperiodic tilings of the plane with few tile types.
\fi

\bibliographystyle{abbrv}
\ifabstract
\bibliography{short}
\else
\bibliography{tam,tiling,jigsaw}
\fi

\end{document}